\documentclass[12pt]{article}
\usepackage{amsmath}
\usepackage{graphicx,psfrag,epsf}
\usepackage{enumerate}
\usepackage{natbib}
\usepackage{url} 

\newcommand{\blind}{0}

\addtolength{\oddsidemargin}{-.5in}%
\addtolength{\evensidemargin}{-.5in}%
\addtolength{\textwidth}{1in}%
\addtolength{\textheight}{1.3in}%
\addtolength{\topmargin}{-.8in}%

\usepackage{multirow, booktabs}
\usepackage{amsthm,amssymb,bm}
\graphicspath{{./figure/}}


\newtheorem{theorem}{Theorem}[section]

\newtheorem{assumption}{Assumption}[section]
\newtheorem{lemma}{Lemma}[section]
\newtheorem{proposition}{Proposition}[section]

\newcommand{\tr}{\text{Tr}}
\newcommand{\cov}{\text{Cov}}
\newcommand{\diag}{\text{Diag}}
\newcommand{\card}{\text{Card}}

\begin{document}

	\def\spacingset#1{\renewcommand{\baselinestretch}%
		{#1}\small\normalsize} \spacingset{1}

	
	\if0\blind
	{
		\title{\bf Adaptive Testing for Alphas in Conditional Factor Models with High Dimensional Assets}
		\author{Huifang Ma\\
			Nankai University
			\\~\\
			Long Feng\hspace{.2cm} \\
			Nankai University
		    \\~\\
	        Zhaojun Wang\hspace{.2cm} \\
	        Nankai University}
		\maketitle
	} \fi
	
	\if1\blind
	{
		\bigskip
		\bigskip
		\bigskip
		\begin{center}
			{\LARGE\bf  Adaptive Testing for Alphas in Conditional Factor Models with High Dimensional Assets}
		\end{center}
		\medskip
	} \fi
	
	\bigskip
	\begin{abstract}
This paper focuses on testing for the presence of alpha in time-varying factor pricing models, specifically when the number of securities $N$ is larger than the time dimension of the return series $T$. We introduce a maximum-type test that performs well in scenarios where the alternative hypothesis is sparse. We establish the limit null distribution of the proposed maximum-type test statistic and demonstrate its asymptotic independence from the sum-type test statistics proposed by \citet{Ma2018TestingAI}. Additionally, we propose an adaptive test by combining the maximum-type test and sum-type test, and we show its advantages under various alternative hypotheses through simulation studies and two real data applications.
\end{abstract}
	
	\noindent%
	{\it Keywords:}  Alpha tests, Conditional factor model, High dimensionality, Sparse alternatives, Spline estimator
	\vfill
	
\newpage
\spacingset{1.8} 
\section{Introduction}
\label{sec:intro}
    Tests of alpha have attracted much attention in the econometrics literature. In the traditional linear pricing factor models, e.g. CAPM \citep{Sharpe1964CAPITALAP,Lintner1965THEVO} and the Fama-French three-factor model (abbreviated as FF model hereafter)  \citep{Fama1993CommonRF,Fama2014AFA}, \citet{Gibbons1989ATO} proposed an exact multivariate F-test under the joint normality assumption. Several extended methods include \citet{Mackinlay1991UsingGM}, \citet{Zhou1993AssetpricingTU} and \citet{Beaulieu2007}, etc. However, their application has been confined to a relatively small number of portfolios. With the development of modern financial markets, thousands of securities are traded. So the assumption of fixed dimension is not appropriate. Consequently, recent efforts have focused on developing tests that allow the number of securities $N$ are larger than the time periods $T$. For instance, \citet{Pesaran2012TestingCW,Pesaran2017TestingFA} propose a sum-type test statistic by replacing the sample covariance matrix with identity matrix in traditional F-test, which performs well under the dense alternatives. For sparse alternatives, \citet{Feng2022AsymptoticIO} proposed a max-type test statistics. \citet{Yu2023PE} employed the thresholding covariance estimator of \citet{Fan2011LargeCE} and proposed a novel Wald-type test. They also proved the asymptotic independence between the new Wald-type test and the maximum-type test and proposed a new Fisher combination test which performs very well under neither dense nor sparse alternatives.
	
    Although these tests have addressed the limitation of $T>N$, they still require the factor loadings be constant over time. This assumption can be quite restrictive in empirical finance. Much empirical evidence indicates that the factor loadings in CAPM and the FF model vary substantially over time even at the portfolio level \citep{Conover2007TheCC,Chen2003CAPMOT}. As a result, the aforementioned tests can lead to inaccurate conclusions when the factor loadings are time-varying. To remove the limiting of time-invariant factor loadings, \citet{Li2011TestingCF} and \citet{ANG2012132} proposed nonparametric Wald-type tests under the case that $N$ is fixed and $T$ goes to infinity, and \citet{Gagliardini2016TimeVaryingRP} developed an econometric methodology under high dimensional case. Furthermore, \citet{Ma2018TestingAI} proposed a sum-type statistic based on the residuals obtained from the null model, which is asymptotically normal-distributed.

    The above methods are all sum-of-squares types, generally having good power performance against dense alternatives. However, inefficient market pricing is more likely to occur in exceptional assets rather than systematic mispricing of the entire market, i.e. $\bm{\alpha}$ has few nonzero elements with a large $||\bm{\alpha}||_\infty$. In this case, it is problematic to use the sum-of-squares type tests, since summarizing information through averaging will weaken the signals carried in the few securities with strong signals. Therefore, it is desirable to develop a test that has good power against sparse alternatives.

    Recently, the maximum-type tests are wildly used in high dimensional testing problems. \citet{TonyCai2014TwosampleTO} proposed three-types maximum-type tests for high dimensional two-sample location problems. \citet{chang2017} consider testing for high-dimensional white noise using maximum cross-correlations.
    In the traditional linear factor pricing models, many studies also show that the maximum-type test statistic performs very well under the sparse alternatives, such as \citet{Gungor2013TestingLF,Feng2022AsymptoticIO,Yu2023PE}. So, we construct an maximum-type test statistic for the time-varying factor pricing models and establish its theoretical properties. Theoretical results and simulation studies show the proposed maximum-type test statistic also has good performance under sparse alternatives.

    It's worth notable that the underlying truth is usually unknown in real applications, as whether it is dense or sparse depends on the properties of the involved securities. Recently, many literatures showed that the maximum-type test statistic is asymptotically independent with the sum-type test statistic and proposed a corresponding combination test in many high dimensional problems, such as \citet{he2021}, \citet{Feng2022AsymptoticIO} for high dimensional mean testing problems, \citet{feng2022a} for cross-sectional independence test in high dimensinal panel data models, \citet{yu2022jasa} for testing of high dimensional covariance matrix, \citet{wang2023} for high dimensional change point inference. In the time-varying factor pricing model, we also demonstrate the asymptotic independence between the maximum-type statistic and the sum-type statistic proposed in \citet{Ma2018TestingAI}. Then, we construct an adaptive testing procedure by combining the information from these two statistics, which would yield high power against various alternatives. We demonstrate the advantages of the proposed tests over existing methods through extensive Monte Carlo experiments and two empirical applications.
	
    The rest of the paper is organized as follows. In Section \ref{sec:max}, we introduce the maximum-type statistic and establish its theoretical properties. In Section \ref{sec:adp}, we integrate the proposed maximum-type test with an existing sum-type test to obtain an adaptive test. Monte Carlo experiment results are presented in Section \ref{sec:sim} to evaluate the finite sample performance of the proposed tests in comparison with the main competitors. Two empirical applications to the security return data from the Chinese and the U.S. financial markets are presented in Section \ref{sec:app}. Finally, we conclude the paper with some discussions in Section \ref{sec:conc} and relegate the technical proofs to the Appendix.
	
    Finally we introduce some notation. For any vector $\bm{z}=(z_1,...,z_m)'\in\mathbb{R}^m$, let $||\bm{z}||=(\sum_{i=1}^m z_i^2)^{1/2}$ and $||\bm{z}||_\infty=\max_{1\le i\le m}|z_i|$. Let $\bm{1}_m$ be the $m\times1$ vector of ones. For any positive numbers $a_n$ and $b_n$, let $a_n\ll b_n$ denote $a_n b_n^{-1}=o(1)$, $a_n\sim b_n$ denote $\lim_{n\to\infty}a_n b_n^{-1}=1$, and $a_n\asymp b_n$ denote $\lim_{n\to\infty}a_n b_n^{-1}=c$ for some finite positive constant $c$. For an $m\times n$ matrix $\bm{A}=(a_{i j})$, let $\tr(\bm{A})$ denote the trace of $\bm{A}$, $P_{\bm{A}}=\bm{A}(\bm{A}'\bm{A})^{-1}\bm{A}'$ and $M_{\bm{A}}=\bm{I}_m-P_{\bm{A}}$, where $\bm{I}_m$ is the $m\times m$ identity matrix. Moreover, denote $||\bm{A}||=\max_{\bm{a}\in\mathbb{R}^n,||\bm{a}||=1}||\bm{A a}||$ and $||\bm{A}||_\infty=\max_{1\le i\le m}\sum_{j=1}^n|a_{i j}|$. For any symmetric matrix $\bm{A}\in\mathbb{R}^{n\times n}$, let $\lambda_{\min}(\bm{A})$ and $\lambda_{\max}(\bm{A})$ denote the smallest and largest eigenvalues of $\bm{A}$, respectively. $(N,T)\to \infty$ denotes that $N$ and $T$ go to infinity jointly. The operators $\stackrel{d}{\to}$ and $\stackrel{p}{\to}$ denote convergence in distribution and in probability, respectively.

\section{The Maximum-type Test}
\label{sec:max}
\subsection{Econometric Framework and B-spline Approximation}
    We consider the following conditional time-varying factor model under high-dimensional case,
    \begin{equation}\label{equ.2}
    R_{it}=\alpha_{it}+\bm{\beta}_{it}'\bm{f}_t+e_{it}=\alpha_{it}+\sum_{j=1}^{d}\beta_{ijt}f_{jt}+e_{it}, \quad i=1,...,N; t=1,...,T,
    \end{equation}
    where $R_{it}$ denotes the excess returns of the $i$-th asset at time $t$, $\alpha_{it}$ is the conditional alpha of asset $i$ at time $t$, $\bm{\beta}_{it}=(\beta_{i1t},...,\beta_{idt})'$ is a $d\times1$ vector of time-variant factor loadings, $\bm{f}_t=(f_{it},...,f_{dt})'$ stands for the risk premium on $d$-dimensional tradable systematic risks at time $t$, and $e_{it}$ is the idiosyncratic error term with $\cov(\bm{e}_t)=\bm{\Sigma}=(\sigma_{ij})_{N\times N}$, where $\bm{e}_t=(e_{1t},...,e_{Nt})'\in \mathbb{R}^{N\times1}$.

    To identify the parameters, we follow \citet{Li2011TestingCF} and others in imposing a smoothness condition. That is we assume that the sequence of alphas and betas are generated from two smoothing functions of time such that $\alpha_{it}=\alpha_i(t/T)$ and $\beta_{ijt}=\beta_{ij}(t/T)$. Accordingly, we can rewrite equation (\ref{equ.2}) as
    \begin{equation}\label{equ.3}
    R_{it}=\delta_i^0+\delta_i(t/T)+\sum_{j=1}^{d}\beta_{ij}(t/T)f_{jt}+e_{it},
    \end{equation}
    where $\delta_i^0=T^{-1}\sum_{t=1}^{T}\alpha_{it}$ and $\delta_i(t/T)=\alpha_i(t/T)-T^{-1}\sum_{i=1}^{T}\alpha_i(t/T)$. To test whether the average pricing error for $N$ assets across all time periods is equal to zero, we then focus on testing
    \begin{equation}\label{equ.4}
    H_0:\delta_i^0=0 \text{ for all }i=1,...,N \quad \text{v.s.}\quad  H_1:\delta_i^0\ne 0 \text{ for some }i=1,...,N.
    \end{equation}

    For testing hypothesis in equation (\ref{equ.4}), we employ the polynomial spline approach to estimating the unknown parameters $\delta_i(t/T)$ and $\beta_{ij}(t/T)$ under $H_0$. Consider $p$ interior knots on $[0,1]$, $0=\ell_0<\ell_1<...<\ell_p<\ell_{p+1}=1$ that satisfy
    \begin{equation*}
    \max_{0\le i\le p}|\ell_{i+1}-\ell_i|/\min_{0\le l\le p}|\ell_{i+1}-\ell_i|\le c
    \end{equation*}
    for some finite positive constant $c$, where $p=p(N,T)\to\infty$ as $(N,T)\to\infty$. For any $t$, define its location as $l(t)$ satisfying $\ell_{l(t)}\le t/T<\ell_{l(t)+1}$. Consider the space of polynomial splines of order $q$ and denote the normalized B spline basis of this space as $\bm{B}(t/T) = \{B_1(t/T),...,B_L(t/T)\}'$,
    where $L = p + q$ \citep{Boor1978APG}. To estimate $\delta_i(\cdot)$, we consider the centered spline basis functions, $\widetilde{B}_l(t/T)=B_l(t/T)-T^{-1}\sum_{t=1}^{T}B_l(t/T)$, and denote $\widetilde{\bm{B}}(t/T)=\{\widetilde{B}_1(t/T),...,\widetilde{B}_L(t/T)\}'$. Then, the unknown functions $\delta_i(\cdot)$ and $\beta_{ij}(\cdot)$ can be well approximated by the B-spline functions \citep{Schumaker1981SplineFB} such that
    \begin{equation*}
    \delta_i(t/T)\approx \bm{\lambda}_{i0}'\widetilde{\bm{B}}(t/T) \quad \text{and}\quad \beta_{ij}(t/T)\approx \bm{\lambda}_{ij}'\bm{B}(t/T),
    \end{equation*}
    where $\bm{\lambda}_{ij}\in \mathbb{R}^{L\times1}$, $j=0,1,...,d$ are the coefficients of the B-spline functions.

    Denote $\bm{R}_{i.}=(R_{i1},...,R_{iT})'\in \mathbb{R}^{T\times1}$, $\bm{e}_{i.}=(e_{i1},...,e_{iT})'\in \mathbb{R}^{T\times1}$, $\bm{\lambda}_i=(\bm{\lambda}_{ij}',0\leq j\leq d)'\in \mathbb{R}^{(1+d)L\times1}$, $\bm{Z}_t=\{\widetilde{\bm{B}}(t/T)',f_{jt}\bm{B}(t/T)',1\leq j\leq d\}'\in \mathbb{R}^{(1+d)L\times1}$ and $\bm{Z}=(\bm{Z}_1,...,\bm{Z}_T)'\in \mathbb{R}^{T\times (1+d)L}$. We then choose estimators $\widehat{\bm{\lambda}}_i=(\widehat{\bm{\lambda}}_{ij}',0\leq j\leq d)'$ to minimize the following sum of squared residuals:
    \begin{equation*}
    \widehat{\bm{\lambda}}_i=\arg\min_{\bm{\lambda}_i}||\bm{R}_{i.}-\bm{Z}\bm{\lambda}_i||^2=(\bm{Z}'\bm{Z})^{-1}\bm{Z}'\bm{R}_{i.}.
    \end{equation*}
    Consequently, the resulting residuals are
    \begin{equation}\label{equ.5}
    \widehat{\bm{e}}_{i.}=\bm{R}_{i.}-\bm{Z}\widehat{\bm{\lambda}}_i=\bm{M_ZR}_{i.}.
    \end{equation}

\subsection{The Proposed Test}
    Under $H_0$, it can be shown that $\delta_i(t/T)-\widehat{\delta}_i(t/T) \stackrel{p}{\to}0$ and $\beta_{ij}(t/T)-\widehat{\beta}_{ij}(t/T) \stackrel{p}{\to}0$. This motivates us to consider the max-type test, with the test statistic constructed as,
    \begin{equation}\label{equ.6}
    M_{NT}=\max_{1\le i\le N}t_i^2,
    \end{equation}
    where
    \begin{equation}\label{equ.7}
    t_i^2=T^{-1}\widehat{\sigma}_{ii}^{-1}(\widehat{\bm{e}}_{i.}'\bm{1}_T)^2,
    \end{equation}
    with $\widehat{\sigma}_{ij}=\widehat{\bm{e}}_{i.}'\widehat{\bm{e}}_{j.}/(T-d-1)$.

    We will establish that when $(N,T)\to\infty$, $M_{NT}-2\log(N) +\log\{\log(N)\}$ has a type \uppercase\expandafter{\romannumeral1} extreme value distribution. To proceed, we first introduce some assumptions.

    Let $\mathcal{H}_r$ denote the collection of all functions on $[0,1]$ such that the $m$-th order derivative satisfies the H\"older condition of order $n$ with $r=m+n$, i.e. there exists a constant $ C\in(0,\infty)$ such that for each $f\in\mathcal{H}_r$,
    \begin{equation*}
    |f^{(m)}(x_1)-f^{(m)}(x_2)|\le C|x_1-x_2|^n
    \end{equation*}
    for any $0\le x_1,x_2\le 1$. Let $\mathcal{F}_{NT,t}$ denote the $\sigma$-algebra generated from $\{\bm{f},\{e_{it},e_{i,t-1},...\}_{i=1}^N\}$, where $\bm{f}=(\bm{f}_1',...,\bm{f}_T')'$. Let $\bm{e}_{-t}=\{e_{i1},...,e_{i,t-1},e_{i,t+1},...,e_{iT}\}_{i=1}^N$. Let $\bm{D}$ denote the diagonal matrix of $\bm{\Sigma}$ and $\bm{R}=(r_{ij})=\bm{D}^{-1/2}\bm{\Sigma}\bm{D}^{-1/2}$ denote the correlation matrix.
    \begin{assumption}
    	\label{A2.1}
    	$\delta_i(\cdot)\in \mathcal{H}_r$ and $\beta_{ij}(\cdot)\in \mathcal{H}_r$ for some $r>3/2$.
    \end{assumption}
    \begin{assumption}
    	\label{A2.2}
    	(\romannumeral1) There exist constants $0<c_0<C_0<\infty$ such that
    	\begin{equation*}
    	c_0\leq \lambda_{\min}(\mathbb{E}\{(1,\bm{f}_t')'(1,\bm{f}_t')\})
    	\leq \lambda_{\max}(\mathbb{E}\{(1,\bm{f}_t')'(1,\bm{f}_t')\})\leq C_0
    	\end{equation*}
    	holds uniformly for $t\in [1,T]$;
    	(\romannumeral2) There exist a constant $0<K<\infty$ such that $\mathbb{E}||\bm{f}_t||^{4(2+\kappa)}\leq K$ for some $\kappa>0$;
    	(\romannumeral3) The process $\{\bm{f}_t\}_{t\ge 1}$ is strong mixing with mixing coefficient $\alpha(\cdot)$ satisfying $\sum_{k=0}^\infty\alpha(k)^{\kappa/(2+\kappa)}<\infty$;
    	(\romannumeral4) $\{\bm{f}_t\}_{t=1}^T$ and $\{\bm{e}_t\}_{t=1}^T$ are independent.
    \end{assumption}
    \begin{assumption}
    	\label{A2.3}
    	(\romannumeral1) $\bm{e}_1,...,\bm{e}_T$ are independently and identically distributed with for each $i$, $\mathbb{E}(e_{it}|\mathcal{F}_{NT,t-1})=0$ and $\mathbb{E}(\bm{e}_t\bm{e}_t'|\bm{e}_{-t})=\bm{\Sigma}$, where $\bm{\Sigma}$ is positive definite and $\sigma_{ii}\in (0,\infty)$ for every $1\leq i\leq N$;
    	(\romannumeral2)
    	$e_{it}$'s have sub-Gaussian-type tails, i.e. there exist $\eta>0$ and $K>0$ such that $\mathbb{E}\{\exp(\eta e_{it}^2/\sigma_{ii})\}\leq K$ for $1\leq i\leq N$.
    \end{assumption}
    \begin{assumption}
    	\label{A2.4}
    	(\romannumeral1) $TL^{-2r}=o(1)$;
    	(\romannumeral2) There exists $c_1>0$ such that $c_1^{-1}\leq \lambda_{\min}(\bm{R})\leq \lambda_{\max}(\bm{R})\leq c_1$;
    	(\romannumeral3) There exists $k>0$ such that $\max_{1\leq i,j\leq N}|r_{ij}|\leq k<1$.
    \end{assumption}
    Assumption \ref{A2.1} is the common smoothness assumption on the unknown functions \citep{He1996BivariateTB}. Assumption \ref{A2.2} is the typical condition for the regression design matrix. Specially, Assumption \ref{A2.2}(\romannumeral1) follows Condition (C2) in \citet{Wang2008VariableSI}, and Assumption \ref{A2.2}(\romannumeral4) follows from Assumption 3.1(\romannumeral2) in \citet{Fan2011HighDC}. Moreover, Assumptions \ref{A2.2}(\romannumeral2)-(\romannumeral3) weaken Assumption 3.2 and 3.3(\romannumeral2) in \citet{Fan2011HighDC}. Assumption \ref{A2.3} is the moment conditions on the distribution of the error terms. Following \citet{Pesaran2012TestingCW} and others, we assume that $\bm{e}_t$'s are independently and identically distributed. Moreover, Assumption \ref{A2.3}(\romannumeral1) contains the assumption for a martingale difference sequence, and the homogeneity assumption on the covariance of the error terms. Assumption \ref{A2.3}(\romannumeral2) is equivalent to Condition (C6) in \citet{TonyCai2014TwosampleTO}, which allows the theoretical results to hold for error distributions more general than the Gaussian type. Assumption \ref{A2.4}(\romannumeral1) is the condition on the number of spline basis functions $L$. Assumption \ref{A2.4}(\romannumeral2) is a common assumption on the eigenvalues in the high dimensional setting, which is equivalent to Assumption 4.1(\romannumeral1) in \citet{Fan2013PowerEI} when $\sigma_{ii}$'s are bounded. Assumption \ref{A2.4}(\romannumeral3) is also mild. For example, if $\max_{1\le i<j\le N}|r_{ij}|=1$, then $\bm{\Sigma}$ is singular.

    We now state our first main result, which is about the asymptotic property of $M_{NT}$.
    \begin{theorem}\label{T2.1}
    	Suppose that Assumptions \ref{A2.1}-\ref{A2.4} hold. Assuming $L^3T^{-1}=o(1)$ and $\log(N)=o(T^{1/4})$, we have as $(N,T)\to \infty$, for any $x\in\mathbb{R}$,
    	\begin{equation*}
    	\mathbb{P}_{H_0}\left(M_{NT}-2\log(N)+\log\{\log (N)\}\leq x \right) \to F(x)\equiv\exp\left\lbrace-\frac{1}{\sqrt{\pi}}\exp\left(-\frac{x}{2} \right)  \right\rbrace.
    	\end{equation*}	
    \end{theorem}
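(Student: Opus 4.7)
The plan is to reduce $M_{NT}$ to the maximum of squared standardized partial sums of the true errors, and then invoke the classical Gumbel limit for such maxima under weak cross-sectional correlation.

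Step 1 (oracle approximation of the numerator). Under $H_0$, the B-spline approximation theorem combined with Assumption \ref{A2.1} gives $\bm{R}_{i.}=\bm{Z}\bm{\lambda}_i+\bm{r}_i+\bm{e}_{i.}$ with a sup-norm remainder of order $L^{-r}$ uniformly in $i$. Consequently
\begin{equation*}
T^{-1/2}\widehat{\bm{e}}_{i.}'\bm{1}_T=T^{-1/2}\bm{1}_T'\bm{e}_{i.}-T^{-1/2}(P_{\bm{Z}}\bm{1}_T)'\bm{e}_{i.}+T^{-1/2}\bm{1}_T'M_{\bm{Z}}\bm{r}_i.
\end{equation*}
The centred B-spline block of $\bm{Z}$ is orthogonal to $\bm{1}_T$ by construction, so only the factor block contributes to $P_{\bm{Z}}\bm{1}_T$; standard B-spline identities together with Assumption \ref{A2.2} yield $\|P_{\bm{Z}}\bm{1}_T\|^2=O_p(L)$. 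The sub-Gaussian maximal inequality afforded by Assumption \ref{A2.3}(ii) then gives $\max_{i}T^{-1/2}|(P_{\bm{Z}}\bm{1}_T)'\bm{e}_{i.}|=O_p(\sqrt{L\log N/T})$, which is $o_p((\log N)^{-1/2})$ under $L^3=o(T)$ and $\log N=o(T^{1/4})$. The spline bias term is handled analogously using $TL^{-2r}=o(1)$ from Assumption \ref{A2.4}(i).

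Step 2 (uniform variance estimation). A Hanson-Wright-type bound applied to $\bm{e}_{i.}'M_{\bm{Z}}\bm{e}_{i.}/(T-d-1)-\sigma_{ii}$, together with Assumption \ref{A2.3}(ii) and a union bound over $i$, yields $\max_{i}|\widehat{\sigma}_{ii}/\sigma_{ii}-1|=o_p((\log N)^{-1})$. Writing $\xi_i:=T^{-1/2}\sigma_{ii}^{-1/2}\bm{1}_T'\bm{e}_{i.}$ and decomposing
\begin{equation*}
t_i^2-\xi_i^2=\xi_i^2\Bigl(\frac{\sigma_{ii}}{\widehat{\sigma}_{ii}}-1\Bigr)+\frac{1}{\widehat{\sigma}_{ii}}\Bigl\{\bigl(T^{-1/2}\widehat{\bm{e}}_{i.}'\bm{1}_T\bigr)^{2}-\xi_i^2\sigma_{ii}\Bigr\},
\end{equation*}
the fact that $\max_{i}\xi_i^2=O_p(\log N)$ together with the rates from Steps 1 and 2 implies $M_{NT}=\max_{1\le i\le N}\xi_i^2+o_p(1)$. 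By Slutsky's lemma it therefore suffices to derive the Gumbel limit for $\max_i\xi_i^2$.

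Step 3 (Gumbel limit for the oracle). The vector $(\xi_1,\ldots,\xi_N)'$ has zero mean, unit variances and correlation matrix $\bm{R}$ satisfying Assumption \ref{A2.4}(ii)-(iii). Each $\xi_i$ is a normalized sum of $T$ i.i.d.\ sub-Gaussian random variables, so a Cramér-type moderate deviation implies that $\mathbb{P}(|\xi_i|\ge x)$ agrees with the Gaussian tail up to a $(1+o(1))$ factor uniformly for $x=o(T^{1/4})$, which covers the relevant window $x\asymp\sqrt{\log N}$ because $\log N=o(T^{1/4})$. A second-order Bonferroni/inclusion-exclusion expansion, with pairwise probabilities controlled via the strict bound $|r_{ij}|\le k<1$ as in the argument of \citet{TonyCai2014TwosampleTO}, then delivers $\mathbb{P}(\max_i\xi_i^2-2\log N+\log\log N\le x)\to F(x)=\exp\{-\pi^{-1/2}\exp(-x/2)\}$.

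The main obstacle is the uniformity in $i$ of Step 1: controlling $\max_{i}T^{-1/2}|(P_{\bm{Z}}\bm{1}_T)'\bm{e}_{i.}|$ at the sharp rate $(\log N)^{-1/2}$ demands a careful pairing of the spline identity $\|P_{\bm{Z}}\bm{1}_T\|^2=O_p(L)$ with a sub-Gaussian maximal inequality, and is precisely what forces the combined bandwidth restrictions $L^3=o(T)$ and $\log N=o(T^{1/4})$. A secondary complication, routinely absorbed in the proof, is that $P_{\bm{Z}}$ depends on the random factors $\bm{f}$, so all uniform bounds must first be established conditionally on $\bm{f}$ and then integrated using the moment and mixing conditions in Assumption \ref{A2.2}.
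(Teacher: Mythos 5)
Your overall architecture (reduce to an ``oracle'' maximum, uniform variance estimation, then a Gumbel limit via inclusion--exclusion in the spirit of \citet{TonyCai2014TwosampleTO}) resembles the paper's, but your Step 1 makes a reduction the paper deliberately avoids, and that step fails. You claim $\|P_{\bm{Z}}\bm{1}_T\|^2=O_p(L)$ because the centred spline block is orthogonal to $\bm{1}_T$ and ``only the factor block contributes.'' However, since $\sum_{l}B_l(u)\equiv 1$, taking spline coefficients constant in $l$ shows that each factor vector $(f_{j1},\ldots,f_{jT})'$ lies in the column space of $\bm{Z}$; nothing in Assumption \ref{A2.2} forces $\mathbb{E}f_{jt}=0$ (traded factors carry risk premia, e.g.\ the market factor in the paper's simulations has mean $0.34$), and when $\mu_j=\mathbb{E}f_{jt}\neq 0$ the projection of $\bm{1}_T$ onto that single direction already has squared length $(\sum_t f_{jt})^2/\sum_t f_{jt}^2\asymp T\mu_j^2/\mathbb{E}(f_{jt}^2)$, i.e.\ of order $T$, not $O_p(L)$. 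Consequently $T^{-1/2}(P_{\bm{Z}}\bm{1}_T)'\bm{e}_{i.}$ has conditional variance bounded away from zero, its maximum over $i$ is of exact order $\sqrt{\log N}$ --- the same order as your leading term --- and the reduction $M_{NT}=\max_i\xi_i^2+o_p(1)$ with $\xi_i=T^{-1/2}\sigma_{ii}^{-1/2}\bm{1}_T'\bm{e}_{i.}$ breaks down except in the special (unstated) case of mean-zero factors. The paper never discards the projection: it keeps the weights $h_t=(M_{\bm{Z}}\bm{1}_T)_t$, works with $W_i=T^{-1/2}\sigma_{ii}^{-1/2}\sum_t e_{it}h_t$ (controlling $h_t$ via $|h_t|\le dc_6(1+\|\bm{f}_t\|)$), and only shows that the spline-bias terms $\zeta_{i1},\zeta_{i2}$ are uniformly negligible; the Gumbel limit is then proved for $\max_i W_i^2$ itself.

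A secondary gap is in your Step 3: a marginal Cram\'er-type moderate deviation for each $\xi_i$ plus a ``second-order Bonferroni'' with the pairwise bound $|r_{ij}|\le k<1$ does not deliver $F(x)$. One needs the joint exceedance probabilities $\mathbb{P}(|\xi_{i_1}|\ge x_N,\ldots,|\xi_{i_t}|\ge x_N)$ for every fixed order $t$, with inclusion--exclusion carried to arbitrary depth (or a Poisson/Chen--Stein argument), and for non-Gaussian errors these joint tails are obtained in the paper by truncating $V_{it}$ and invoking the multivariate Gaussian approximation of \citet{Zaitsev1987OnTG} together with the combinatorial control of the index sets with non-negligible correlations; univariate moderate deviations alone do not suffice. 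If you repair Step 1 by working with the weighted sums $\sum_t e_{it}h_t$ (conditioning on $\bm{f}$ as you suggest at the end), you are essentially forced back onto the paper's route.
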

    Here, $\mathbb{P}_{H_0}$ denotes the probability measure under the null hypothesis $H_0$. According to the limiting null distribution derived in, we can easily obtain the $p$-value associated with $M_{NT}$, namely,
    \begin{equation*}
    p_M=1-F(M_{NT}-2\log(N) +\log\{\log(N)\}).
    \end{equation*}
    If the $p$-value is below
    some pre-specified significant level, say $\gamma\in(0,1)$, then we rejected the null hypothesis that the traded factors are sufficient to price all assets. Next, we turn to analyze the  power of the maximum-type testing procedure.

    \begin{proposition}\label{P2.1}
    	Suppose Assumptions \ref{A2.1}-\ref{A2.4} hold. Assuming  $L^3T^{-1}=o(1)$ and $\log(N)=o(T^{1/4})$, we have as $(N,T)\to\infty$,
    	\begin{equation*}
    	\inf_{\bm{\alpha}\in\mathcal{A}(C)}\mathbb{P}(p_M<\gamma)\to 1,
    	\end{equation*}
    	for some large enough constant $C>0$, where
    	\begin{equation*}
    	\mathcal{A}(C)=\left\lbrace \bm{\alpha}=(\alpha_{it})_{N\times T}:\max_{1\le i\le N}\left|\frac{1}{T}\sum_{t=1}^T\alpha_{it} \right|\ge C\sqrt{\frac{\log(N)}{T}}  \right\rbrace.
    	\end{equation*}
    	
    \end{proposition}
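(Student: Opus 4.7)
The plan is to show that a single coordinate $i_{*}$ with the largest $|\delta_i^0|$ is enough by itself to push $M_{NT}$ past the rejection threshold. Writing $\mathcal{L}_{N,\gamma}:=2\log N-\log\log N+q_\gamma$ with $q_\gamma=-2\log\{-\sqrt{\pi}\log(1-\gamma)\}$ obtained by inverting $F$ from Theorem \ref{T2.1}, the event $\{p_M<\gamma\}$ is exactly $\{M_{NT}>\mathcal{L}_{N,\gamma}\}$. Fix any $i_{*}=i_{*}(\bm{\alpha})\in\arg\max_i|T^{-1}\sum_t\alpha_{it}|$, so that $|\delta_{i_{*}}^0|\ge C\sqrt{\log N/T}$ on $\mathcal{A}(C)$. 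Since $M_{NT}\ge t_{i_{*}}^2$, it suffices to prove that $\inf_{\bm{\alpha}\in\mathcal{A}(C)}\mathbb{P}(t_{i_{*}}^2>\mathcal{L}_{N,\gamma})\to 1$. A useful observation is that only one coordinate matters for this bound, so the restriction $\log N=o(T^{1/4})$ is used only to calibrate the threshold via Theorem \ref{T2.1}, not through a uniform-in-$i$ control.

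Writing the model as $\bm{R}_{i_{*}\cdot}=\delta_{i_{*}}^0\bm{1}_T+\bm{Z}\bm{\lambda}_{i_{*}}+\bm{r}_{i_{*}}+\bm{e}_{i_{*}\cdot}$, where $\bm{r}_{i_{*}}$ is the B-spline approximation residual, uniformly of order $L^{-r}$ under Assumption \ref{A2.1} by \citet{Schumaker1981SplineFB}, and then applying $\bm{M_Z}$ (so $\bm{M_Z}\bm{Z}=\bm{0}$) and taking the inner product with $\bm{1}_T$ gives
\begin{equation*}
\widehat{\bm{e}}_{i_{*}\cdot}'\bm{1}_T=\delta_{i_{*}}^0\,\bm{1}_T'\bm{M_Z}\bm{1}_T+\bm{r}_{i_{*}}'\bm{M_Z}\bm{1}_T+\bm{e}_{i_{*}\cdot}'\bm{M_Z}\bm{1}_T.
\end{equation*}
I would then establish three ingredients: (i) $T^{-1}\bm{1}_T'\bm{M_Z}\bm{1}_T=1+o_p(1)$, exploiting that the centered spline block $\widetilde{\bm{B}}$ is orthogonal to $\bm{1}_T$ by construction so that only the $\{f_{jt}\bm{B}(t/T)\}$ block contributes to $P_{\bm{Z}}\bm{1}_T$, and that its contribution is $o_p(T)$ under Assumption \ref{A2.2}(i)--(iii); (ii) $|\bm{r}_{i_{*}}'\bm{M_Z}\bm{1}_T|\le\|\bm{r}_{i_{*}}\|\|\bm{M_Z}\bm{1}_T\|=O_p(TL^{-r})=o_p(\sqrt{T\log N})$ by Assumption \ref{A2.4}(i); (iii) conditional on $\bm{f}$ and on $\bm{e}_{-t}$, the noise $\bm{e}_{i_{*}\cdot}'\bm{M_Z}\bm{1}_T$ is a mean-zero sub-Gaussian sum with variance $\sigma_{i_{*}i_{*}}\bm{1}_T'\bm{M_Z}\bm{1}_T=O_p(T)$ by Assumptions \ref{A2.2}(iv) and \ref{A2.3}, hence $\bm{e}_{i_{*}\cdot}'\bm{M_Z}\bm{1}_T=O_p(\sqrt{T})=o_p(\sqrt{T\log N})$. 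Paralleling the proof of Theorem \ref{T2.1} also gives $\widehat{\sigma}_{i_{*}i_{*}}=\sigma_{i_{*}i_{*}}(1+o_p(1))$, and $\sigma_{i_{*}i_{*}}$ is uniformly bounded above by a constant $\bar{\sigma}$ under Assumptions \ref{A2.3} and \ref{A2.4}(ii).

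Assembling the three estimates, $\widehat{\bm{e}}_{i_{*}\cdot}'\bm{1}_T=T\delta_{i_{*}}^0(1+o_p(1))+o_p(\sqrt{T\log N})$, and since the leading term is at least $C\sqrt{T\log N}$ on $\mathcal{A}(C)$,
\begin{equation*}
t_{i_{*}}^2\ge \frac{T(\delta_{i_{*}}^0)^2}{\sigma_{i_{*}i_{*}}}(1+o_p(1))\ge \frac{C^2}{\bar{\sigma}}\log N\,(1+o_p(1)).
\end{equation*}
Choosing $C$ so that $C^2/\bar{\sigma}>2$ makes the right-hand side exceed $\mathcal{L}_{N,\gamma}=2\log N\,(1+o(1))$ with probability tending to one, uniformly in $\bm{\alpha}\in\mathcal{A}(C)$. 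The main obstacle is step (i): showing that the random $T\times(1+d)L$ design $\bm{Z}$ leaves essentially all of $\bm{1}_T$ untouched after projection, which requires uniform control of $(\bm{Z}'\bm{Z}/T)^{-1}$ and of $\bm{Z}'\bm{1}_T$ as $L\to\infty$. These bounds are of the same flavor as those used for the consistency of $\widehat{\bm{\lambda}}_{i_{*}}$ and for Theorem \ref{T2.1}, and should be obtained via mixing/ergodic laws of large numbers for the factor process combined with standard properties of B-spline inner-product matrices, so most of the underlying lemmas can be borrowed from the proof of Theorem \ref{T2.1}.
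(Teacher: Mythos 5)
Your overall route is genuinely different from the paper's: you lower-bound $M_{NT}$ by the single coordinate $t_{i_*}^2$ and show that coordinate alone crosses the threshold $2\log N\,(1+o(1))$, whereas the paper reuses its decomposition (A.1), writes $M_{NT}\ge \max_i\widehat{\sigma}_{ii}^{-1}(\zeta_{i3}+\zeta_{i4}+\zeta_{i5})-\max_i\widehat{\sigma}_{ii}^{-1}\varphi_i-\max_i\widehat{\sigma}_{ii}^{-1}\zeta_{i1}-\max_i\widehat{\sigma}_{ii}^{-1}\zeta_{i2}$, and then invokes the bounds already obtained in the proof of Theorem \ref{T2.1} (the noise maximum is at most $2\log N-\tfrac12\log\log N$ w.h.p., $\zeta_{i3}=O_p(T^{1/2}|\delta_i^0|)$, $\zeta_{i4}=O(|\delta_i^0|TL^{-r})$, $\zeta_{i5}=T^{-1}(\delta_i^0)^2(\bm{1}_T'M_{\bm Z}\bm{1}_T)^2$). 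Your version is arguably more elementary, since at a single coordinate the noise is only $O_p(\sqrt T)$ and no uniform-in-$i$ noise control is needed beyond the threshold calibration; both versions need the same bias bound $TL^{-r}=o(\sqrt{T\log N})$ and the variance-estimate consistency (A.7).

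However, your ingredient (i) is wrong as stated: $T^{-1}\bm{1}_T'M_{\bm Z}\bm{1}_T=1+o_p(1)$ does not hold in general. The centered block $\widetilde{\bm B}$ is indeed orthogonal to $\bm{1}_T$, but the blocks $\{f_{jt}B_l(t/T)\}$ are not negligible: by the partition-of-unity property of the B-spline basis, the vector $(f_{jt})_{t\le T}$ itself lies in the column span of $\bm Z$, so $P_{\bm Z}\bm{1}_T$ captures at least the projection of $\bm{1}_T$ onto $(f_{jt})_t$, which is asymptotically a fraction $(\mathbb{E}f_{jt})^2/\mathbb{E}(f_{jt}^2)$ of $\bm 1_T$. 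Since the assumptions do not impose zero-mean factors (and the leading empirical factor, the market excess return, has nonzero mean), $T^{-1}\bm{1}_T'P_{\bm Z}\bm{1}_T$ is generally bounded away from zero, not $o_p(1)$; the crude bounds $\|\bm Z'\bm 1_T/T\|_\infty=O(L^{-1})$ and $\lambda_{\max}\{(\bm Z'\bm Z/T)^{-1}\}=O(L)$ only give $O_p(T)$, consistent with this. What your argument actually needs, and what you must state and prove instead, is the lower bound $T^{-1}\bm{1}_T'M_{\bm Z}\bm{1}_T\ge c>0$ with probability tending to one. This follows from Assumption \ref{A2.2}(i) plus the centering constraint on the $\delta$-block: for any candidate $g_0(u)+\sum_j g_j(u)f_{t}^{(j)}$ with $\int g_0=0$, the pointwise prediction error of the constant $1$ is at least $c_0\{1-g_0(u)\}^2$, whose time average is at least $c_0$ by Jensen, and a uniform law of large numbers (mixing plus the spline lemmas) transfers this to the sample quantity. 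With that substitute, your final display becomes $t_{i_*}^2\ge c^2C^2\bar{\sigma}^{-1}\log N\,(1+o_p(1))$, and the conclusion still follows by taking $C$ large enough (this attenuation constant is also why ``$C^2/\bar\sigma>2$'' alone is not the right condition). Two smaller caveats: uniform boundedness of $\sigma_{ii}$ is not literally among the stated assumptions (the paper shares this gap), and $\widehat{\sigma}_{i_*i_*}=\sigma_{i_*i_*}\{1+o_p(1)\}$ fails when $\delta_{i_*}^0$ is very large, since $\widehat{\sigma}_{i_*i_*}$ then absorbs a term of order $(\delta_{i_*}^0)^2$; you should note that in that regime $t_{i_*}^2$ is of order $T$, so the rejection still occurs, rather than relying on consistency of the variance estimate uniformly over $\mathcal{A}(C)$.
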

    Proposition \ref{P2.1} shows that the proposed max-type test is effective in detecting sparse alternatives.

\section{The Adaptive Test}
\label{sec:adp}
\subsection{The Existing sum-type Test}
    To detect dense alternatives, we consider the sum-type test statistic proposed by \citet{Ma2018TestingAI}. To wit,
    \begin{equation}\label{equ.8}
    S_{NT}=N^{-1}T^{-1}\sum_{i=1}^N(\widehat{\bm{e}}_{i.}'\bm{1}_T)^2,
    \end{equation}
    whose mean and variance under $H_0$ are given below,
    \begin{equation*}
    \mu_{NT}^0=N^{-1}T^{-1}\sum_{i=1}^N\sum_{t=1}^T\mathbb{E}(e_{it}^2)\mathbb{E}(h_t^2) \text{ and } \sigma_{NT}^2=2N^{-2}T^{-2}\tr(\bm{\Sigma}^2)\sum_{1\le t\ne s\le T}\mathbb{E}(h_t^2h_s^2),
    \end{equation*}
    where $\bm{h}=(h_1,...,h_T)'=M_{\bm{Z}}\bm{1}_T$. Let $\mu_{NT}=N^{-1}T^{-1}\sum_{i=1}^N\sum_{t=1}^Te_{it}^2h_t^2$ be an empirical approximation of the mean. For standardization, they considered the consistent estimator of $\mu_{NT}^0$ as
    \begin{equation}\label{equ.9}
    \widehat{\mu}_{NT}=N^{-1}T^{-1}\sum_{i=1}^N\sum_{t=1}^T\widehat{e}_{it}^2h_t^2.
    \end{equation}
    As for $\sigma_{NT}^2$, following \citet{Lan2013TestingCI}, they proposed the following estimator
    \begin{equation}\label{equ.10}
    \widehat{\sigma}_{NT}^2=2N^{-2}T^{-2}\widehat{\tr(\bm{\Sigma}^2)}\sum_{1\le t\ne s\le T}h_t^2h_s^2,
    \end{equation}
    where  $\widehat{\tr(\bm{\Sigma}^2)}=T^2\{T+(1+d)L-1\}^{-1}\{T-(1+d)L\}^{-1}[\tr(\widehat{\bm{\Sigma}}^2)-\tr^2(\widehat{\bm{\Sigma}})/\{T-(1+d)L\}]$, with $\widehat{\bm{\Sigma}}=T^{-1}\sum_{t=1}^T(\widehat{\bm{e}}_t-\overline{\bm{e}})(\widehat{\bm{e}}_t-\overline{\bm{e}})^T$ and $\overline{\bm{e}}=T^{-1}\sum_{t=1}^T\widehat{\bm{e}}_t$.

    The following Lemma restates the asymptotic null distribution of $S_{NT}$ derived in \citet{Ma2018TestingAI}.
    \begin{assumption}
    	\label{A3.1}
    	(\romannumeral1) $TL^{-2r}N\tr^{-1/2}(\bm{\Sigma}^2)=o(1)$;
    	(\romannumeral2) $\tr^{-1/2}(\bm{\Sigma}^2)\max_i\sum_{j=1}^N|\sigma_{ij}|=o(1)$.	
    \end{assumption}
    \begin{assumption}
    	\label{A3.2}
    	(\romannumeral1) $tr(\bm{\Sigma}^4)=o\{\tr^2(\bm{\Sigma}^2)\}$ as $N\to \infty$;
    	(\romannumeral2) $T^{-2}\sum_{t=1}^T\mathbb(\bm{e}_t'\bm{\Sigma} \bm{e}_t)^2=o\{\tr^2(\bm{\Sigma}^2)\}$;
    	(\romannumeral3) $T^{-1+\varrho}N^{1+\varrho}L\tr^{-1/2}(\bm{\Sigma}^2)=O(1)$ for an arbitratily small $\varrho>0$.
    \end{assumption}
    \begin{lemma}\label{T3.1}
    	Suppose Assumptions \ref{A2.1}-\ref{A2.3} and \ref{A3.1}-\ref{A3.2} hold. Assume that  $L^3T^{-1}=o(1)$ and $L^rT^{3/2}=o(1)$. Then, under $H_0$, as $(N,T)\to \infty$,
    	\begin{equation*}
    	(S_{NT}-\widehat{\mu}_{NT} )/\widehat{\sigma}_{NT}\stackrel{d}{\to}\mathcal{N}(0,1).
    	\end{equation*}	
    \end{lemma}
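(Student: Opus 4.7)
The plan is to exploit the identity $\bm{Z}'\bm{h} = \bm{Z}'M_{\bm{Z}}\bm{1}_T = \bm{0}$, which reduces the observed score to $\widehat{\bm{e}}_{i.}'\bm{1}_T = \bm{R}_{i.}'\bm{h}$. Under $H_0$ and Assumption \ref{A2.1}, B-spline approximation theory \citep{Schumaker1981SplineFB} gives $R_{it} = \bm{Z}_t'\bm{\lambda}_i + r_{it} + e_{it}$ with uniform approximation error $\max_{i,t}|r_{it}| = O(L^{-r})$; combined with $\bm{Z}'\bm{h}=\bm{0}$ this yields $\widehat{\bm{e}}_{i.}'\bm{1}_T = \bm{e}_{i.}'\bm{h} + \bm{r}_{i.}'\bm{h}$. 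First I would show that the cross and squared spline-remainder contributions to $S_{NT}$ are $o_p(\widehat{\sigma}_{NT})$ using $\|\bm{h}\|^2 = O_p(T)$, $\max_{i,t} r_{it}^2 = O(L^{-2r})$, and Assumption \ref{A3.1}(i). This isolates the clean quadratic form $(NT)^{-1}\sum_i (\bm{e}_{i.}'\bm{h})^2$ as the leading object.

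The second step is to split this quadratic form into its diagonal $t=s$ part, which is exactly the empirical mean $\mu_{NT}=(NT)^{-1}\sum_{i,t}e_{it}^2 h_t^2$, and an off-diagonal cross term
\[
W_{NT} \;=\; \frac{2}{NT}\sum_{t=2}^T h_t\,\bm{e}_t'\Bigl(\sum_{s<t} h_s \bm{e}_s\Bigr).
\]
Conditional on $\sigma(\bm{f})$ (which determines $\bm{h}$), the summands form a martingale-difference sequence with respect to $\mathcal{F}_{NT,t}$ by Assumptions \ref{A2.2}(iv) and \ref{A2.3}(i). A direct second-moment computation gives conditional variance $2N^{-2}T^{-2}\tr(\bm{\Sigma}^2)\sum_{t\ne s}h_t^2 h_s^2$, which is precisely the target $\sigma_{NT}^2$. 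I would then invoke the Hall--Heyde martingale CLT: the conditional-variance convergence reduces to $\V\{\bm{e}_t'\bm{\Sigma}\bm{e}_t\}=o(\tr^2(\bm{\Sigma}^2))$, supplied by Assumptions \ref{A3.2}(i)--(ii); the conditional Lindeberg condition follows from the sub-Gaussian tails in Assumption \ref{A2.3}(ii) combined with high-probability bounds on $\max_t h_t^2$ and on $\|\sum_{s<t}h_s\bm{e}_s\|$. Passing from conditional to unconditional convergence is routine by dominated convergence.

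The third step is to replace the deterministic centering and scaling by their estimators. Expanding $\widehat{e}_{it} = e_{it} + r_{it} - \bm{Z}_t'(\bm{Z}'\bm{Z})^{-1}\bm{Z}'(\bm{e}_{i.}+\bm{r}_{i.})$ and using Assumption \ref{A2.4}(ii) to control the projection term in probability, the difference $\widehat{\mu}_{NT}-\mu_{NT}$ is again of the spline-remainder type handled in Step 1 and is $o_p(\sigma_{NT})$; the fluctuations of $\mu_{NT}$ about $\mu_{NT}^0$ are controlled at a negligible scale by a Chebyshev bound using Assumption \ref{A3.2}. Consistency of $\widehat{\sigma}_{NT}^2/\sigma_{NT}^2\stackrel{p}{\to}1$ reduces to consistency of the bias-corrected $\widehat{\tr(\bm{\Sigma}^2)}$, which follows by adapting Lemma A.2 of \citet{Lan2013TestingCI} to the spline-based residuals under Assumption \ref{A3.2}. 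The main obstacle is the joint book-keeping for the spline remainders: their contributions to both $S_{NT}$ and $\widehat{\mu}_{NT}$ must be controlled uniformly relative to $\sigma_{NT}\asymp N^{-1}T^{-1}\tr^{1/2}(\bm{\Sigma}^2)$, and it is precisely the rate conditions in Assumptions \ref{A3.1}--\ref{A3.2} together with the growth restrictions on $L$ and $T$ stated in the lemma that are calibrated to make this possible.
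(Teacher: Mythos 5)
Your outline is essentially the standard argument for this result, and it matches the route the paper relies on: note, however, that the paper does not prove Lemma \ref{T3.1} itself --- it is restated from \citet{Ma2018TestingAI}, and the appendix of this paper simply imports the conclusion of that proof in the form $(S_{NT}-\mu_{NT})/\sigma_{NT}=\varphi_{NT}/\sigma_{NT}+o_p(1)$ with $\varphi_{NT}=2N^{-1}T^{-1}\sum_{t=2}^T\sum_{s=1}^{t-1}\bm{e}_t'\bm{e}_s h_t h_s$, which is exactly your diagonal/off-diagonal split (diagonal part $=\mu_{NT}$, off-diagonal martingale term) followed by a martingale CLT under Assumptions \ref{A3.1}--\ref{A3.2} and estimator consistency in the spirit of \citet{Lan2013TestingCI}, with the spline remainders killed by Lemma \ref{lemma3} and Assumption \ref{A3.1}(i). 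One small mismatch to fix: in your Step 3 you invoke Assumption \ref{A2.4}(ii), which is not among the lemma's hypotheses; the control of $\widehat{\mu}_{NT}-\mu_{NT}$ and the consistency of $\widehat{\tr(\bm{\Sigma}^2)}$ should instead be obtained from Assumptions \ref{A3.1}(ii) and \ref{A3.2} together with the spline design bounds (Lemma \ref{lemma4}), as in the cited proof, so that the stated assumption set suffices.
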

    By Lemma \ref{T3.1}, the $p$-value associated with $S_{NT}$ is
    \begin{equation*}
    p_S=1-\Phi\left( (S_{NT}-\widehat{\mu}_{NT} )/\widehat{\sigma}_{NT}\right),
    \end{equation*}
    where $\Phi(\cdot)$ is the cumulative distribution function (CDF) of $\mathcal{N}(0,1)$. Again, small values of $p_S$ indicate rejecting the null hypothesis.

\subsection{Adaptive Strategy}
    In practice, we seldom know whether the vector of intercepts is sparse or dense. In order to adapt to various alternative behaviors, we combine the maximum- and sum-type testing procedures. The key message is that both test statistics are asymptotically independent under some mild conditions if $H_0$ holds.
    \begin{theorem}\label{T3.2}
    	Suppose Assumptions \ref{A2.1}-\ref{A2.4} and \ref{A3.1}-\ref{A3.2} hold. Assuming $L^3T^{-1}=o(1)$, $\log(N)=o(T^{1/4})$ and $L^rT^{3/2}=o(1)$, we have as $(N,T)\to\infty$, under $H_0$,
    	\begin{equation*}
    	\mathbb{P}_{H_0}\left( \frac{S_{NT}-\widehat{\mu}_{NT}}{\widehat{\sigma}_{NT}} \leq x, M_{NT}-2\log(N)+\log\{\log(N)\}\leq y\right) \to \Phi(x)F(y).
    	\end{equation*}
    \end{theorem}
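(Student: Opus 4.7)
\textbf{Proof sketch for Theorem \ref{T3.2}.}

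The plan is to follow the template now standard for combining max- and sum-type tests (e.g.\ \citet{he2021,Feng2022AsymptoticIO,yu2022jasa}): first reduce both statistics to functionals of a common random vector, then decouple the few ``extreme'' indices that drive the maximum from the ``bulk'' that drives the sum via a truncation argument, and finally close the argument via a Gaussian approximation that turns asymptotic decorrelation into asymptotic independence.

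\emph{Step 1 (common linearization).} Under $H_0$ we have $\widehat{\bm e}_{i.}'\bm 1_T = \bm h'\bm e_{i.}$ plus a spline-approximation error, where $\bm h=M_{\bm Z}\bm 1_T$. The residual analysis already developed in the proofs of Theorem \ref{T2.1} and Lemma \ref{T3.1} shows that the spline error is negligible uniformly in $i$ and that $\widehat\sigma_{ii},\widehat\mu_{NT},\widehat\sigma_{NT}$ can be replaced by their population versions without changing the joint limit. Hence it suffices to study the joint law of
\begin{equation*}
\widetilde M_{NT}=\max_{1\le i\le N}T^{-1}\sigma_{ii}^{-1}(\bm h'\bm e_{i.})^2,\qquad \widetilde S_{NT}=\sigma_{NT}^{-1}N^{-1}T^{-1}\sum_{i=1}^N\bigl\{(\bm h'\bm e_{i.})^2-\sigma_{ii}\|\bm h\|^2\bigr\}.
\end{equation*}
Because $\bm e\perp\bm f$ (Assumption \ref{A2.2}(iv)), conditioning on $\sigma(\bm f)$ fixes $\bm h$ and $\|\bm h\|^2/T\stackrel{p}{\to}1$; the normalized variables $W_i=\bm h'\bm e_{i.}/(\sigma_{ii}^{1/2}\|\bm h\|)$ then satisfy $\cov(W_i,W_j\mid\bm f)=r_{ij}$, a quantity that does not depend on $\bm f$.

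\emph{Step 2 (truncate the largest entries).} Fix $d\ge 1$ and let $W_{(1)}^2\ge W_{(2)}^2\ge\cdots$ be the order statistics of $\{W_i^2\}$. Let $\widetilde S_{NT}^{(d)}$ denote $\widetilde S_{NT}$ with the top-$d$ summands removed. Theorem \ref{T2.1} gives $W_{(1)}^2=O_P(\log N)$, and Assumptions \ref{A2.4}(ii) and \ref{A3.1} yield $N\sigma_{NT}\gtrsim\sqrt{\tr(\bm\Sigma^2)}\gg\log N$, so
\begin{equation*}
\widetilde S_{NT}-\widetilde S_{NT}^{(d)}=O_P\!\left(\frac{d\log N}{N\sigma_{NT}}\right)=o_P(1)
\end{equation*}
for each fixed $d$. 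Therefore it suffices to prove, for each fixed $d$, that $(W_{(1)}^2,\dots,W_{(d)}^2)$ is asymptotically independent of $\widetilde S_{NT}^{(d)}$, and then let $d\to\infty$ (which does not affect $\widetilde S_{NT}$ and approximates $\widetilde M_{NT}$ by $W_{(1)}^2$ to arbitrary accuracy).

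\emph{Step 3 (Gaussian approximation and asymptotic independence).} Conditionally on $\bm f$, couple $\bm W$ with a Gaussian vector $\bm W^*\sim\mathcal N(\bm 0,\bm R)$ using a Chernozhukov--Chetverikov--Kato type Gaussian comparison (valid under Assumption \ref{A2.3}(ii) and $\log N=o(T^{1/4})$) for the top-$d$ order statistics, and a Lindeberg-type CLT (under Assumption \ref{A3.2}) for the quadratic form $\widetilde S_{NT}^{(d)}$. For the Gaussian surrogate, Assumption \ref{A2.4}(iii) enables a Chen--Stein/Poisson approximation for the exceedance set $\{i:W_i^{*2}\ge 2\log N+y\}$ driving $\widetilde M_{NT}$, in the spirit of \citet{TonyCai2014TwosampleTO}. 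A direct second-moment computation shows
\begin{equation*}
\bigl|\cov\bigl(\bm 1\{W_i^{*2}\ge 2\log N+y\},\,\widetilde S_{NT}^{(d)}\bigr)\bigr|\lesssim\frac{\max_i\sum_{j=1}^N|r_{ij}|}{\sqrt{\tr(\bm R^2)}}=o(1)
\end{equation*}
by Assumption \ref{A3.1}(ii), so the extremes and the standardized sum decorrelate; joint Gaussianity of $\bm W^*$ upgrades this to asymptotic independence, and pulling back through the coupling yields the factorization $\Phi(x)F(y)$.

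The main obstacle will be the Gaussian approximation in Step 3: $\widetilde M_{NT}$ is determined by the extreme tail of the $W_i$'s up to depth $2\log N$, so the Gaussian coupling has to be accurate in that tail, which is precisely what the assumption $\log N=o(T^{1/4})$ secures. The coupling must simultaneously preserve the $\sqrt N$-scale fluctuation of the quadratic form $\widetilde S_{NT}^{(d)}$, uniformly across the conditioning on $\bm f$; this is where Assumption \ref{A3.2} is needed. Once this uniform coupling is in hand, the Poisson limit for the max, the CLT for the sum, and the covariance bound above combine routinely into the stated factorization.
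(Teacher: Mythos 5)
Your overall architecture (reduce both statistics to functionals of $\bm{h}'\bm{e}_{i.}$, separate the few coordinates driving the maximum from the bulk driving the sum, then transfer from Gaussian to sub-Gaussian errors by a comparison argument) is the same as the paper's, and your Step 1 and the non-Gaussian transfer are in the right spirit (the paper performs a Lindeberg-type swap applied jointly to $\sigma_s^{-1}\varphi_{NT}$ and a smooth-max approximation $F_\beta$ with $\beta=T^{1/8}\log N$, using Assumption \ref{A2.3}(ii) and $\log N=o(T^{1/4})$). The genuine gap is in your Steps 2--3, where the asymptotic independence itself is supposed to be established. First, discarding the top-$d$ \emph{order statistics} from the sum does not reduce the problem: the removed index set is random and determined by all coordinates, so proving that $(W_{(1)}^2,\dots,W_{(d)}^2)$ is asymptotically independent of $\widetilde S_{NT}^{(d)}$ is neither easier than, nor obviously equivalent to, the target statement. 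Second, and more seriously, your closing argument --- a covariance bound of order $\max_i\sum_{j}|r_{ij}|/\sqrt{\tr(\bm{R}^2)}=o(1)$ between exceedance indicators and $\widetilde S_{NT}^{(d)}$, ``upgraded'' to independence by joint Gaussianity --- does not work: the quadratic form $\widetilde S_{NT}^{(d)}$ and the indicators $\bm{1}\{W_i^{*2}\ge 2\log N+y\}$ are nonlinear functions of $\bm{W}^*$ and are not jointly Gaussian, so vanishing covariance does not imply asymptotic independence. Moreover, a per-index $o(1)$ bound cannot survive the combination over the roughly $N^d$ index tuples that any Bonferroni/Poissonization treatment of the maximum requires.

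What the paper does at this point is different and is the step you are missing: for each \emph{fixed} tuple $i_1<\dots<i_d$ it uses the exact Gaussian decomposition $\bm{e}_{(2),t}=\bm{U}_t+\bm{V}_t$ with $\bm{U}_t$ independent of $\bm{e}_{(1),t}$ (Lemma \ref{lemma6}), writes $\varphi_{NT}=\varphi^*_1+\Theta_1+\Theta_2+\Theta_3$ with $\varphi^*_1$ depending only on the $\bm{U}_t$'s, and proves the uniform exponential bound $\mathbb{P}(|\Theta_k|\ge\epsilon\sigma_{NT})\le N^{-c}$ for arbitrarily large $c$, via the law of the iterated logarithm for zero-mean square-integrable martingales together with the trace conditions in Assumptions \ref{A3.1}--\ref{A3.2} and the bound (\ref{C.3}) on $\sigma_{NT}^2$. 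It is precisely this $N^{-c}$ rate --- not an $o(1)$ decorrelation --- that allows the error terms to be summed over all $\binom{N}{d}$ tuples in the inclusion--exclusion inequality, giving $\zeta(N,d)\to0$ and hence the factorization $\Phi(x)F(y)$ after letting $k\to\infty$. To repair your sketch you would need to replace the random-index truncation by this fixed-tuple conditioning (or an equivalent device) and replace the covariance computation by a bound on the conditional deviation of the sum given the selected coordinates that is strong enough to beat the $N^d$ combinatorial factor; once that is done, the Gaussian-to-sub-Gaussian transfer can proceed essentially as you describe.
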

    According to Theorem \ref{T3.2}, we suggest combining the corresponding $p$-values by using Fisher’s method \citep{Littell1971AsymptoticOO}, to wit,
    \begin{equation*}
    p_{adp}=1-G(-2\{\log(p_M)+\log(p_S)\}),
    \end{equation*}
    where $G(\cdot)$ is the CDF of the chi-squared distribution with 4 degrees of freedom. The rationality is that $-2\{\log(p_M)+\log(p_S)\}$ converges in distribution to $G$ under $H_0$ due to Theorem \ref{T3.2}. If the final $p$-value is less than some pre-specified significant level $\gamma\in(0,1)$, then we reject $H_0$.

    Next, we analyze the power of the adaptive testing procedure. We consider the following sequence of alternative hypotheses, to wit,
    \begin{equation}\label{equ.11}
    H_{1,NT}: ||\bm{\delta}^0||_0=o[N/\log^2\{\log(N)\}] \text{ and }
    ||\bm{\delta}^0||=O\{T^{-1/2}\tr^{1/4}(\bm{\Sigma}^2)\},
    \end{equation}
    where $\bm{\delta}^0=(\delta_1^0,...,\delta_N^0)^T$ with $\delta_i^0=T^{-1}\sum_{t=1}^T\alpha_{it}$. In fact, the asymptotic independence between the maximum-type and sum-type statistics still hold under the hypotheses given in equation (\ref{equ.11}).
    \begin{theorem}\label{T3.3}
    	Under the same condition as Theorem \ref{T3.2}, we have as $(N,T)\to\infty$, under $H_{1,NT}$,
    	\begin{align*}
    	&\mathbb{P}\left( \frac{S_{NT}-\widehat{\mu}_{NT}}{\widehat{\sigma}_{NT}} \leq x, M_{NT}-2\log(N)+\log\{\log(N)\}\leq y\right) \\
    	&\qquad \to\mathbb{P}\left( \frac{S_{NT}-\widehat{\mu}_{NT}}{\widehat{\sigma}_{NT}} \leq x\right)\mathbb{P}\left( M_{NT}-2\log(N)+\log\{\log(N)\}\leq y\right).
    	\end{align*}
    \end{theorem}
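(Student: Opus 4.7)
My plan is to leverage Theorem \ref{T3.2} by reducing the joint law under $H_{1,NT}$ to a perturbation of the null one, with the perturbation concentrated on the sparse signal set $\mathcal{S}=\{i:\delta_i^0\neq 0\}$, whose cardinality satisfies $|\mathcal{S}|=o(N/\log^2\{\log(N)\})$. With $\bm{h}=M_{\bm{Z}}\bm{1}_T$ and re-using the spline-approximation analysis underlying the proofs of Theorems \ref{T2.1}--\ref{T3.2} and Lemma \ref{T3.1}, I would first establish that, uniformly in $i$,
\[
\widehat{\bm{e}}_{i\cdot}'\bm{1}_T \;=\; \delta_i^0\,\bm{h}'\bm{1}_T + \bm{h}'\bm{e}_{i\cdot} + r_{iT},
\]
where $\bm{h}'\bm{1}_T=\|\bm{h}\|^2\asymp T$ and the remainder $r_{iT}$ is negligible after $\sqrt{T}$-scaling. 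It is then convenient to introduce the null-oracle statistics $S_{NT}^{(0)}=N^{-1}T^{-1}\sum_{i=1}^N(\bm{h}'\bm{e}_{i\cdot})^2$ and $M_{NT}^{(0)}(\mathcal{I})=\max_{i\in\mathcal{I}}T^{-1}\widehat{\sigma}_{ii}^{-1}(\bm{h}'\bm{e}_{i\cdot})^2$, which are the statistics one would obtain if $\bm{\delta}^0=\bm{0}$.

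For the sum-type statistic, expanding the square and using the signal bound $\|\bm{\delta}^0\|=O\{T^{-1/2}\tr^{1/4}(\bm{\Sigma}^2)\}$ together with $\widehat{\sigma}_{NT}\asymp N^{-1}\tr^{1/2}(\bm{\Sigma}^2)$, I would show
\[
\frac{S_{NT}-\widehat{\mu}_{NT}}{\widehat{\sigma}_{NT}} \;=\; \frac{S_{NT}^{(0)}-\widehat{\mu}_{NT}}{\widehat{\sigma}_{NT}}+c_{NT}+o_p(1),
\]
where $c_{NT}=N^{-1}(\bm{h}'\bm{1}_T)^2 T^{-1}\|\bm{\delta}^0\|^2/\widehat{\sigma}_{NT}=O(1)$ is deterministic; the signal--noise cross-term has variance of order $N^{-2}\tr^{1/2}(\bm{\Sigma}^2)$ and is therefore $o_p(1)$ after standardization because $\tr(\bm{\Sigma}^2)\gtrsim N\to\infty$ under Assumption \ref{A2.4}(ii). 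For the max-type statistic, since $\delta_i^0=0$ on $\mathcal{S}^c$,
\[
M_{NT} \;=\; \max\bigl\{M_{NT}^{(0)}(\mathcal{S}^c),\;\max_{i\in\mathcal{S}}t_i^2\bigr\},
\]
and because $|\mathcal{S}^c|=N\{1-o(1)\}$, the centering $2\log(N)-\log\{\log(N)\}$ remains the correct Gumbel normalization for $M_{NT}^{(0)}(\mathcal{S}^c)$ by the same argument as in Theorem \ref{T2.1}. Moreover, the difference between $S_{NT}^{(0)}$ and its $\mathcal{S}^c$-restricted analogue is $N^{-1}T^{-1}\sum_{i\in\mathcal{S}}(\bm{h}'\bm{e}_{i\cdot})^2=o_p(\widehat{\sigma}_{NT})$ by the sparsity of $\mathcal{S}$.

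With these decompositions, the theorem reduces to showing that (i) the $\mathcal{S}^c$-restricted standardized sum and max are asymptotically independent, and (ii) adjoining $\max_{i\in\mathcal{S}}t_i^2$ as the outer maximum together with the deterministic shift $c_{NT}$ preserves this factorization. Part (i) is Theorem \ref{T3.2} applied with $N$ replaced by $|\mathcal{S}^c|=N\{1-o(1)\}$: all rate conditions transfer and the Gumbel normalization is unchanged. The main obstacle is part (ii), because the rows $\{\bm{e}_{i\cdot}:i\in\mathcal{S}\}$ and $\{\bm{e}_{i\cdot}:i\in\mathcal{S}^c\}$ are correlated through $\bm{\Sigma}$, so $\max_{i\in\mathcal{S}}t_i^2$ is not literally independent of the $\mathcal{S}^c$-restricted pair. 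I would address this by conditioning on $\{\bm{e}_{i\cdot}:i\in\mathcal{S}\}$ and using Assumption \ref{A2.4}(ii)--(iii) together with $|\mathcal{S}|/N=o(1/\log^2\log N)$ to show that, uniformly in $i\in\mathcal{S}^c$, the conditional mean and variance of $\bm{h}'\bm{e}_{i\cdot}$ differ from their unconditional counterparts only by $o(1)$ perturbations, since the regression coefficient vector of $\bm{e}_{i\cdot}$ on $\{\bm{e}_{j\cdot}:j\in\mathcal{S}\}$ has small $\ell_2$-norm by Assumption \ref{A2.4}(iii) and the sparsity of $\mathcal{S}$. Gaussian comparison and truncation arguments of the same kind used in the proof of Theorem \ref{T2.1} then transfer the null joint limit of Theorem \ref{T3.2} to the conditional setting, and integrating out the $\mathcal{S}$-rows yields the factorization asserted in Theorem \ref{T3.3}.
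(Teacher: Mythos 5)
Your overall skeleton does match the paper's: split the indices into the signal set $\mathcal{S}=\{i:\delta_i^0\neq 0\}$ (the paper's $\mathcal{A}$) and its complement, extract a deterministic shift from the standardized sum (the paper gets this from Lemma A.7 of \citet{Ma2018TestingAI}), keep the Gumbel centering on $\mathcal{S}^c$ since $|\mathcal{S}^c|=N\{1-o(1)\}$, and recycle the null-case asymptotic independence of Theorem \ref{T3.2} on the $\mathcal{S}^c$ block. There is also a minor bookkeeping slip worth flagging: the uncentered quantity $N^{-1}T^{-1}\sum_{i\in\mathcal{S}}(\bm{h}'\bm{e}_{i\cdot})^2$ has expectation of order $|\mathcal{S}|/N$, which under the permitted sparsity $|\mathcal{S}|=o(N/\log^2\{\log(N)\})$ can greatly exceed $\widehat{\sigma}_{NT}\asymp N^{-1}\tr^{1/2}(\bm{\Sigma}^2)$; only its centered (martingale) part, the paper's $\varphi_{NT,\mathcal{A}}$, is $o_p(\sigma_{NT})$, with the mean cancelling against the $\mathcal{S}$-portion of $\widehat{\mu}_{NT}$.

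The genuine gap is in your part (ii). You attempt to prove more than is needed --- that conditioning on $\{\bm{e}_{i\cdot}:i\in\mathcal{S}\}$ leaves the law of the $\mathcal{S}^c$-restricted max (and sum) asymptotically unchanged --- and your justification fails: Assumption \ref{A2.4}(\romannumeral3) only bounds correlations away from $1$, not towards $0$, and sparsity of $\mathcal{S}$ does not shrink $\|\bm{\Sigma}_{i\mathcal{S}}\bm{\Sigma}_{\mathcal{S}\mathcal{S}}^{-1}\|$. A single $i\in\mathcal{S}^c$ with $r_{ij}=k$ for some $j\in\mathcal{S}$ has conditional mean roughly $r_{ij}W_j$, which is of order $\sqrt{\log N}$ when $W_j$ is near its block maximum, and conditional variance $1-r_{ij}^2$ bounded away from $1$; these are $O(1)$ (or larger) perturbations, not $o(1)$, so the claimed uniform conditional invariance does not follow and would require a separate counting/Bonferroni argument you have not supplied. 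The paper sidesteps this entirely: after reducing to Gaussian errors, it only needs the \emph{sum's} $\mathcal{A}^c$-part to be asymptotically independent of the $\mathcal{A}$-rows. It writes $\bm{e}_{\mathcal{A}^c,t}=\bm{U}_t+\bm{V}_t$ with $\bm{U}_t$ exactly independent of $\bm{e}_{\mathcal{A},t}$ (Lemma \ref{lemma6}), and shows the $\bm{U}'\bm{V}$ and $\bm{V}'\bm{V}$ contributions to $\varphi_{NT,\mathcal{A}^c}$ are $o_p(\sigma_{NT})$ via exponential bounds as in (\ref{C.2}), with tail exponent of order $N^{1/2}|\mathcal{A}|^{-1/2}$ --- this is exactly where the sparsity condition enters. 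The sum is then asymptotically a functional of the $\bm{U}_t$'s alone, hence exactly independent of everything built from the $\mathcal{A}$-rows, and Theorem \ref{T3.2}'s argument handles its joint behaviour with the $\mathcal{A}^c$-max; no claim about the conditional behaviour of the $\mathcal{S}^c$-max given the $\mathcal{S}$-rows is ever required. To complete your proof you should either adopt this decomposition or supply the missing argument for your conditional-invariance step.
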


    Simulation studies show that the power of Fisher's p-value combination-based test would be comparable to that of the test based on $\min\{p_{M},p_{S}\}$ (referred to as the minimal p-value combination), say $\beta_{M\wedge S, \alpha}=P(\min\{{\rm p}_M,{\rm p}_S\}\leq 1-\sqrt{1-\alpha})$. Obviously,
    \begin{align}\label{power_H1}
    \beta_{M\wedge S, \alpha} &\ge P(\min\{{\rm p}_M,{\rm p}_S\}\leq \alpha/2)\nonumber\\
    &= \beta_{M,\alpha/2}+\beta_{S,\alpha/2}-P({\rm p}_M\leq \alpha/2, {\rm p}_S\leq \alpha/2)\nonumber\\
    &\ge \max\{\beta_{M,\alpha/2},\beta_{S,\alpha/2}\}.
    \end{align}
    On the other hand, under $H_{1}$ in \eqref{equ.11}, we have
    \begin{align}\label{power_H1np}
    \beta_{M\wedge S, \alpha} \ge \beta_{M,\alpha/2}+\beta_{S,\alpha/2}-\beta_{M,\alpha/2}\beta_{S,\alpha/2}+o(1),
    \end{align}
    due to the asymptotic independence entailed by Theorem \ref{T3.3}. For a small $\alpha$, the difference between $\beta_{M,\alpha}$ and $\beta_{M,\alpha/2}$ should be small, and the same fact applies to $\beta_{S,\alpha}$. Consequently, by \eqref{power_H1}--\eqref{power_H1np}, the power of the adaptive test would be no smaller than or even significantly larger than that of either max-type or sum-type test.

\section{Monte Carlo Experiments}
\label{sec:sim}
\subsection{Experiment Settings}
    We conduct Monte Carlo experiments to evaluate the performance of the proposed tests, and investigate the relationship between the power and sparsity levels or signal sizes. For the conditional model discussed in Section \ref{sec:max}, we consider the following two examples.
    \\~\\
    \textbf{Example 1.} Following \citet{Ma2018TestingAI}, we generate data from the well known conditional CAPM:

    \begin{equation}\label{equ.12}
    R_{it}=\alpha_{it}+\beta_{it}f_t+e_{it},\quad i=1,...N; t=1,...,T,
    \end{equation}
    where $f_t$ is the Market factor. Specifically, we generate the factor from the following AR(1)-GARCH(1,1) processes:
    \begin{equation*}
    f_t=0.34+0.05(f_{t-1}-0.34)+h_t^{1/2}\zeta_t,
    \end{equation*}
    where $\zeta_t$ is simulated from a standard normal distribution, the variance terms $h_t$ follows from the process
    \begin{equation*}
    h_t=0.23+0.67h_{t-1}+0.13h_{t-1}\zeta_{t-1}^2,
    \end{equation*}
    and the above coefficients are obtained by fitting the model to the U.S. stock market data.

    As for the error terms, $\bm{e}_t=(e_{1t},...,e_{Nt})^T\in\mathbb{R}^N$ is generated from $\bm{e}_t\sim\bm{\Sigma}^{1/2}\bm{z}_t$, where $\bm{z}_t$ has $N$ i.i.d. entries of $N(0,1)$ and $\exp(1)$, respectively. Following \citet{Fan2011HighDC}, $\bm{\Sigma}=(\sigma_{i_1i_2})\in\mathbb{R}^{N\times N}$ with $\sigma_{i_1i_2}=0.5^{|i_1-i_2|}$, which implies that $e_{i_1t}$ and $e_{i_2t}$ are approximately uncorrelated when the difference $|i_1-i_2|$ is sufficiently large.

    To assess the robustness of the proposed test for the random factor loadings, we set the conditional factor loadings to be $\beta_{it}=1+0.5\xi_t$ for $i=1,...,N$ and $t=1,...,T$, where the unobservable state variable $\xi_t$ follows an AR(1)-ARCH(1) process, $\xi_t=0.8\xi_{t-1}+u_t$ with $u_t=v_t\varepsilon_t$, $\varepsilon_t\sim N(0,1)$ and $v_t^2=0.1+0.6v_{t-1}^2$.

    Finally, we consider the conditional alphas. We set $\bm{\alpha}=\bm{0}$ under the null hypothesis. For the alternative hypothesis, we set $\alpha_{it}=\alpha_it/T$ for $i\in S\subset\{1,...,N\} $ and $t=1,...,T$, where each element in $S$ is uniformly and randomly drawn from $\{1,...,N\}$ with $|S|=s$, and $\alpha_i$'s are independently generated from $U(0,c\sqrt{\log(N)/(Ts)})$. And we keep the remaining $\alpha_i$ with $i\notin S$ zero. To examine how the power changes accordingly, we let the signal strength $c$ and sparsity level $s$ vary.

    The above processes are simulated over the period $t=-24,...,0,1,...,T$ with the initial values $R_{i,-25}=0$, $f_{-25}=0$, $h_{-25}=1$, $z_{-25}=0$ and $v_{-25}^2=1$. To offset the start-up effects, we drop the first $25$ simulated observations and use $t=1,...,T$ for our final experiments.
    \\~\\
    \textbf{Example 2.}
    To mimic the commonly used conditional FF model, where the factors $\bm{f}_t$ have strong serial correlation and heterogeneous variance, we generate $R_{it}$ according to the following model with $d=3$:
    \begin{equation}\label{equ.13}
    R_{it}=\alpha_{it}+\sum_{j=1}^d\beta_{ijt}f_{jt}+e_{it},\quad i=1,...N; t=1,...,T,
    \end{equation}
    where $f_{1t}$, $f_{2t}$ and $f_{3t}$ are the market factor, SMB and HML, respectively. These factors are correspondingly simulated from the following AR(1)-GARCH(1,1) processes,
    \begin{align*}
    &\text{Market factor: }f_{1t}=0.34+0.05(f_{1,t-1}-0.34)+h_{1t}^{1/2}\zeta_{1t},\\
    &\text{SMB factor: }f_{2t}=0.04+0.07(f_{2,t-1}-0.04)+h_{2t}^{1/2}\zeta_{2t},\\
    &\text{HML factor: }f_{3t}=0.06+0.04(f_{3,t-1}-0.06)+h_{3t}^{1/2}\zeta_{3t},
    \end{align*}
    where $\zeta_{jt}$'s are simulated from a standard normal distribution, $h_{jt}$'s are generated through the following processes,
    \begin{align*}
    &\text{Market factor: }h_{1t}=0.32+0.67h_{1,t-1}+0.13h_{1,t-1}\zeta_{1,t-1}^2,\\
    &\text{SMB factor: }h_{2t}=0.33+0.51h_{1,t-1}+0.03h_{2,t-1}\zeta_{2,t-1}^2,\\
    &\text{HML factor: }h_{3t}=0.26+0.72h_{1,t-1}+0.05h_{3,t-1}\zeta_{3,t-1}^2,
    \end{align*}
    and all the coefficients are the same as that in \citet{Ma2018TestingAI}.

    The three groups of conditional factor loadings are $\beta_{ijt}=a_j+b_jz_t$ for $i=1,...,N; j=1,2,3$ and $t=1,...,T$ with $(a_1,b_1)=(1,0.5)$, $(a_2,b_2)=(0.1,0.5)$ and $(a_3,b_3)=(0.2,0.4)$, respectively. Additionally, the conditional alphas, the error terms, initial values, and the simulated observations have the same settings as in Example 1.

\subsection{Experiment Results}
    We now present the Monte Carlo experiment results of the proposed maximum-type test and adaptive test, the sum-type test in \citet{Ma2018TestingAI} and the LY test in \citet{Li2011TestingCF}, which are denoted as Max, Adp, Sum and LY, respectively. All the results are based on 1000 replications at the 5\% nominal significance level. In addition, the number of interior knots $n$ is determined by the BIC criterion and the order of B-splines is set as $3$.

        \begin{table}
    	\caption{Empirical size comparison of various tests from Examples 1-2 for testing conditional alphas with a nominal level 5\% and normal or nonnormal errors. \label{tab:size}}
    	\renewcommand\arraystretch{0.6}
    	\resizebox{1.0\linewidth}{!}{
    		\begin{tabular}{lp{1cm} lp{1cm}lp{1cm}
    				lp{1.5cm}lp{1.5cm}lp{1.5cm}lp{1.5cm} lp{1.5cm}lp{1.5cm}lp{1.5cm}lp{1.5cm}}
    			\hline\hline
    			& & & \multicolumn{4}{l}{Normal errors}& \multicolumn{4}{l}{Nonnormal errors}\\
    			\cmidrule(r){4-7} \cmidrule(r){8-11}
    			Example&T& N& Max& Adp& Sum& LY& Max& Adp& Sum& LY\\\hline
    			\multirow{3}{*}{1}&\multirow{3}{*}{500}& 200& 0.025& 0.053& 0.050& 1& 0.040& 0.060& 0.050& 1\\
    			&&500& 0.031& 0.052& 0.049& 1& 0.059& 0.063& 0.047& 1\\
    			&&1000& 0.029& 0.050& 0.048& 1& 0.057& 0.059& 0.045& 1\\\hline\hline
    			& & & \multicolumn{4}{l}{Normal errors}& \multicolumn{4}{l}{Nonnormal errors}\\
    			\cmidrule(r){4-7} \cmidrule(r){8-11}
    			Example&T& N& Max& Adp& Sum& LY& Max& Adp& Sum& LY\\\hline
    			\multirow{3}{*}{2}&\multirow{3}{*}{500}& 200& 0.018& 0.053& 0.053& 1& 0.037& 0.063& 0.053& 1\\
    			&&500& 0.025& 0.057& 0.050& 1& 0.040& 0.054& 0.050& 1\\
    			&&1000& 0.027& 0.055& 0.060& 1& 0.044& 0.064& 0.057& 1\\\hline\hline
    	\end{tabular}}
    \end{table}

    Table \ref{tab:size} summarizes the empirical sizes of four tests under the settings of Example 1-2 over  $N\in \{200, 500, 1000\} $ and $T=500 $. It indicates that Max and Adp perform very well regardless of $N<T$ or $N>T$ and the error distribution being normal or nonnormal, which demonstrates the validity of Theorems \ref{T2.1} and \ref{T3.2}. Sum can roughly maintain the nominal significance level due to Gaussian approximations. In contrast, LY exhibits serious size distortion, since it is not designed for $N > T$. In conclusion, Max, Adp and Sum have a satisfactory performance under the null hypothesis, and LY will be abandoned in the evaluation of power.

    To compare the power performance under different sparsity levels of alphas, we present the empirical power of each test under different $s$'s. For a better visualization, we set the signal strength $c=4$ if the sparsity level $s\in\{4,8,12,16\}$, $c=7$ if $s\in\{18,21,24,27\} $ and $c=10$ if $s\in\{30,60,90,120\}$, which can be roughly regarded as a sparse, a moderately sparse and a dense regime, respectively. To illustrate, Figure \ref{fig:power_sparsity_1} summarizes the results for Example 1 when $(N,T)=(500,500)$. The results of Example 2 are similar. It suggests that under each setting, Adp has the best power performance from the global point of view, since its power performance is always in the first camp in the whole range of $s$. Max outperforms Sum in very sparse case, i.e. $s<12$, while it falls behind Sum in moderately sparse or dense case, i.e. $s>18$.

    Furthermore, we demonstrate the relationship between the signal sizes and the power of the tests. For the alternative hypothesis, we also consider three cases, i.e. the sparsity level $s\in\{2,16,100\} $. In each case, the signal strength $c$ ranges from 0 to 10 with an increment of 0.5. Figure \ref{fig:power_signal_2} presents the results for Example 2 with $(N,T)=(500,500)$. The results of Example 1 are similar. It shows that the type \uppercase\expandafter{\romannumeral1} error rates of Max and Adp are well controlled under $H_0$, and the empirical power of each test increases with the signal strength. Specifically, Max is not powerful when the alternative is dense but becomes more powerful when the alternative gets sparser, while Sum performs much better than Max for dense alternative. Importantly, Adp maintains high power across different signal strengths under all settings of sparsity level.

    In summary, the above experiment results confirm the theoretical conclusions in Section \ref{sec:max} and \ref{sec:adp}. Notably, the proposed adaptive testing procedure is powerful against a wide range of alternatives, and thus advantageous in practice when the true alternative is unknown.

     \begin{figure}
    	\includegraphics[width=1\textwidth]{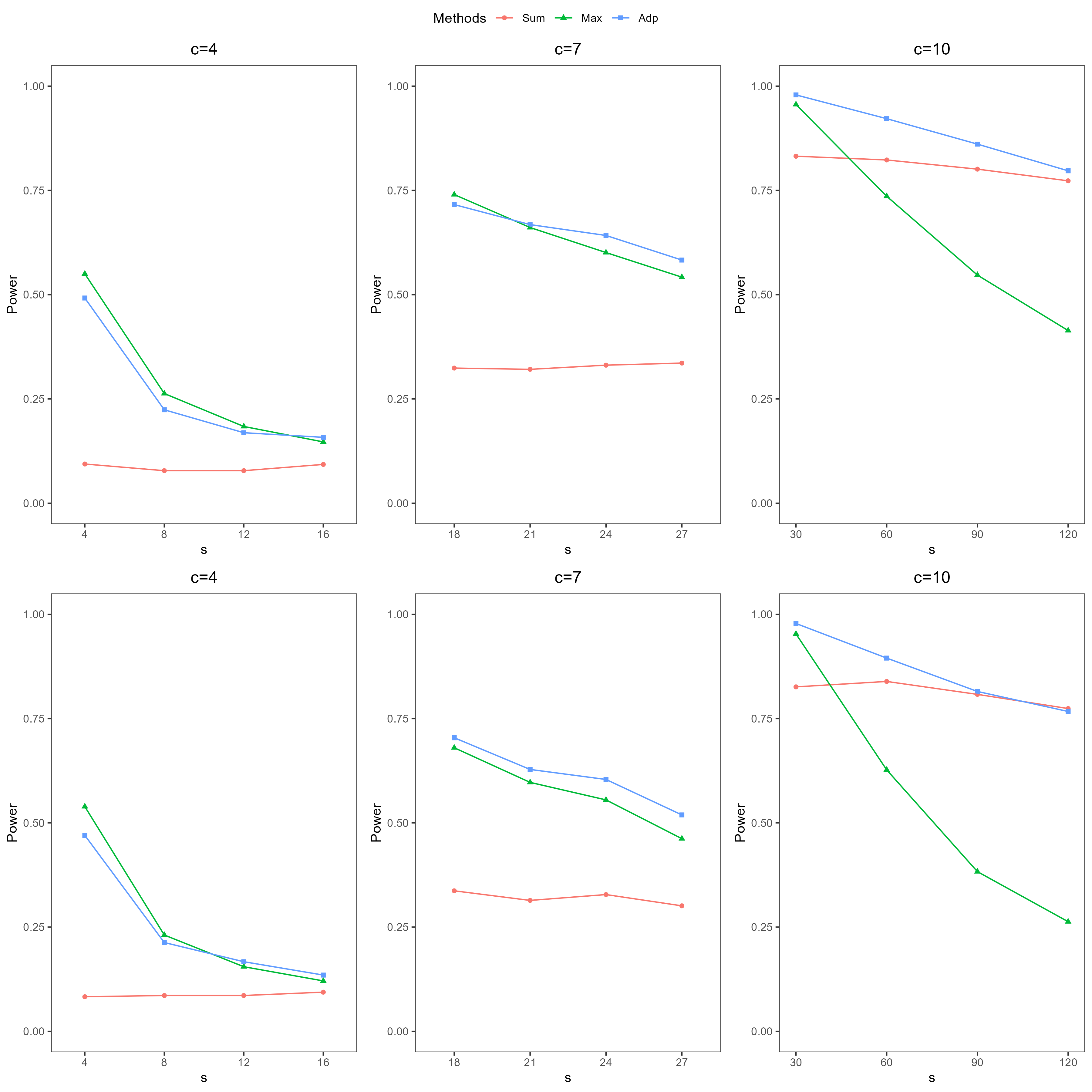}
    	\caption{Power of alpha tests with different sparsity levels for Example 1 over $(N,T)=(500,500)$, where the panels of the first row depict the powers for normal errors, while the panels of the second row depict the powers for nonnormal errors.  \label{fig:power_sparsity_1}}
    \end{figure}

    \begin{figure}
    	\includegraphics[width=1\textwidth]{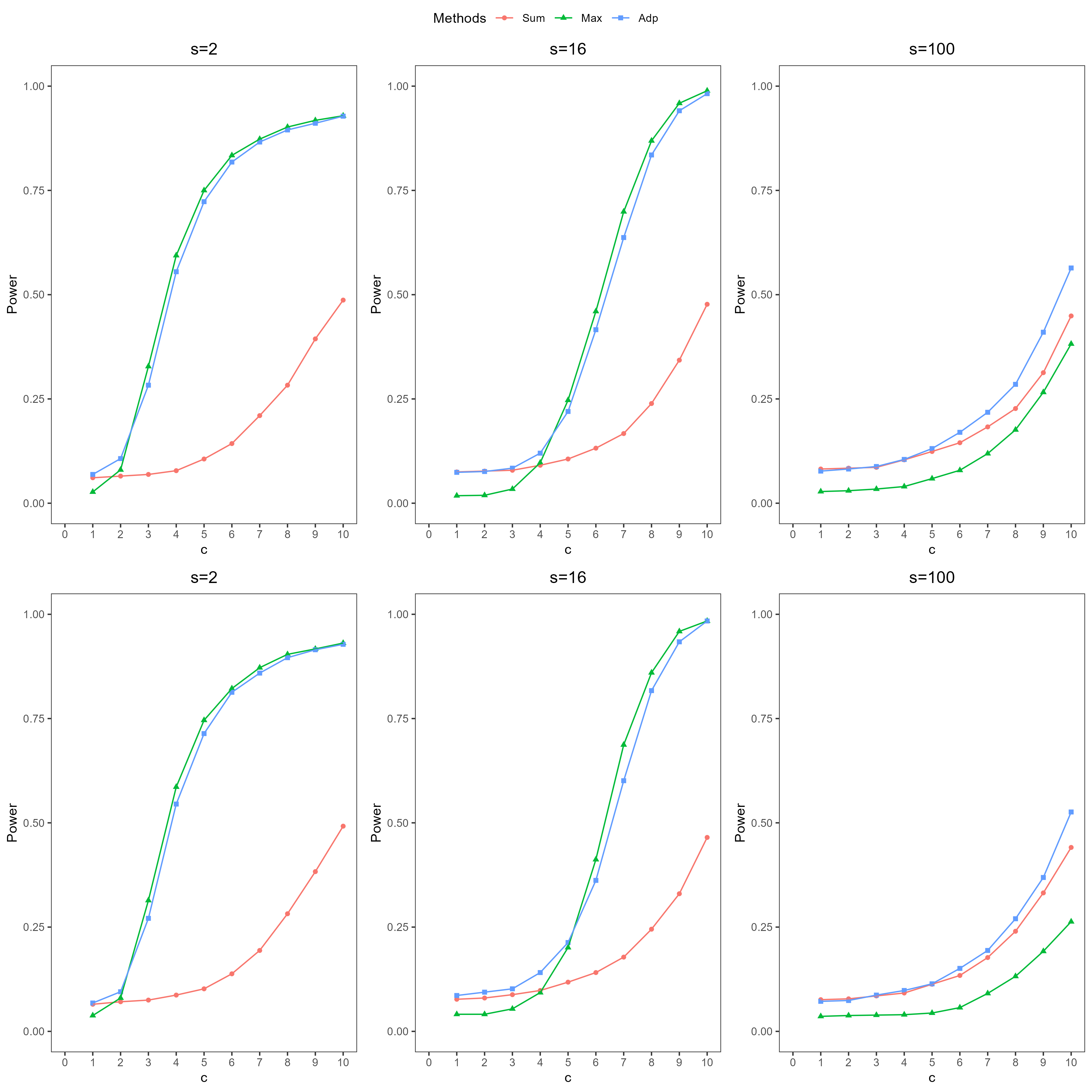}
    	\caption{Power of alpha tests with different signal strengths for Example 2 over $(N,T)=(500,500)$, where the panels of the first row depict the powers for normal errors, while the panels of the second row depict the powers for nonnormal errors. \label{fig:power_signal_2}}
    \end{figure}

\section{Empirical Applications}
\label{sec:app}
    In this section, we employ the proposed tests to analyze the stocks in Chinese and U.S. stock markets. To demonstrate
    the performance of the Max and Adp tests, we compare with the results from their main competitors Sum.

\subsection{Data Description}
    First, we consider the stocks in Chinese stock market. We collected the daily returns of all the stocks in A-shares from 05/12/2021 to 03/09/2022, i.e. $T=200$. After eliminating the stocks with missing observations to avoid analyzing an unbalanced panel, we randomly selected $N=600$ stocks in our final experiment. The time series data on the safe rate of return, and the market factors are calculated according to the corresponding formulas. The risk-free rate $r_{ft}$ is $0.000041$ (from China Asset Management Center). The value-weighted return on all the stocks of Shanghai Stock Exchange and Shenzhen Stock Exchange is used as a proxy for the market return $r_{mt}$. The average return on the three small portfolios minus the average return on the three big portfolios $\text{SMB}_t$, and the average return on two value portfolios minus the average return on two growth portfolios $\text{HML}_t$ are calculated based on the stocks listed on Shanghai Stock Exchange and Shenzhen Stock Exchange. We use $r_{it}$ to denote the return rate of security $i$ on time $t$.

    Next, we consider the stocks in the S\&P 500 index, which is internationally accepted as a leading indicator of the U.S. equities. Similarly, we compiled returns on all the securities that constitute the S\&P 500 index each week over the period from 01/08/2010 to 10/25/2013, i.e. $T=200$. Because the securities that make up the index change over time, we only consider $N=400$ securities that were included in the S\&P 500 index during the entire period. The time series data on the safe rate of return, and the market factors are obtained from Ken French’s data library web page. The one-month U.S. treasury bill rate is chosen as the risk-free rate $r_{ft}$. The value-weighted return on all NYSE, AMEX, and NASDAQ stocks from CRSP is used as a proxy for the market return $r_{mt}$. The average return on the three small portfolios minus the average return on the three big portfolios $\text{SMB}_t$, and the average return on two value portfolios minus the average return on two growth portfolios $\text{HML}_t$ are calculated based on the stocks listed on the NYSE, AMEX and NASDAQ.

\subsection{Conditional Alpha Test}
    We consider the following rolling window procedure with window length $h = 100$. For each $\tau\in\{1,...,T-h\} $, we separately estimate the conditional CAPM and FF model using the data from period $\tau$ to $\tau+h-1$. As a result,
    \begin{align*}
    &\text{CAPM: } R_{it}=r_{it}-r_{ft}=\widehat{\alpha}_{it}+\widehat{\beta}_{it}(r_{mt}-r_{ft})+\widehat{e}_{it},\\
    &\text{FF: } R_{it}=r_{it}-r_{ft}=\widehat{\alpha}_{it}+\widehat{\beta}_{i1t}(r_{mt}-r_{ft})+\widehat{\beta}_{i2t}\text{SMB}_t+\widehat{\beta}_{i3t}\text{HML}_t + \widehat{e}_{it},
    \end{align*}
    for $1\le t\le \tau+h-1$. Based on the estimated residuals $\widehat{e}_{it}$ obtained by separately fitting CAPM and the FF model to the data in each window, we calculate the Max, Sum and Adp test statistics and their corresponding $p$-values. Here, the number of interior knots $n$ is determined via BIC and the order of B-splines is set at 3 for all estimation windows.

    Before applying the involved tests, it is necessary to examine whether alphas and betas are time-varying, given the conclusion of \citet{Li2011TestingCF} that the conditional CAPM and FF are not always superior to their unconditional counterparts. We apply the constant coefficient test (henceforth the CC test) proposed by \citet{Ma2018TestingAI}. The resulting $p$-values are presented in Figure \ref{fig:hist-cc}, which shows that most of the $p$-values of the CC test for 600 (400) stocks in the Chinese (U.S.) dataset is close to $0$, regardless of the model. This provides strong evidence that alphas and betas are indeed time-varying in both the Chinese and U.S. stock markets.

    Now, we apply Max, Adp and Sum tests to the panel data of the securities in the Chinese and U.S. stock markets under the conditional CAPM and FF. Figure \ref{fig:p-values} depicts the $p$-values across the 100 windows. The Box-plot of these $p$-values are presented in Figures \ref{fig:box-p}. In Chinese stock market, we note from Figure \ref{fig:p-values} and Figures \ref{fig:box-p} that the $p$-values of Max and Adp are less than the 5\% significance level for the conditional CAPM, which indicates that the markets are inefficient over these window periods. In contrast, the $p$-values obtained from Sum in the corresponding window periods are greater than 5\%. For the conditional FF, the averaged $p$-values obtained from Max and Adp are smaller than those from Sum. In addition, most of the $p$-values from FF are larger than 5\%, and they are also higher than those from CAPM. Accordingly, FF is better than CAPM in explaining the Chinese stock market. In the U.S. stock market, it is more prominent that the $p$-values are larger than 5\% for both CAPM and FF. This suggests that the U.S. stock market is more efficient than the Chinese stock market.

        \begin{figure*}
    	\centering
    	\begin{tabular}{cc}
    		\includegraphics[width=0.5\textwidth]{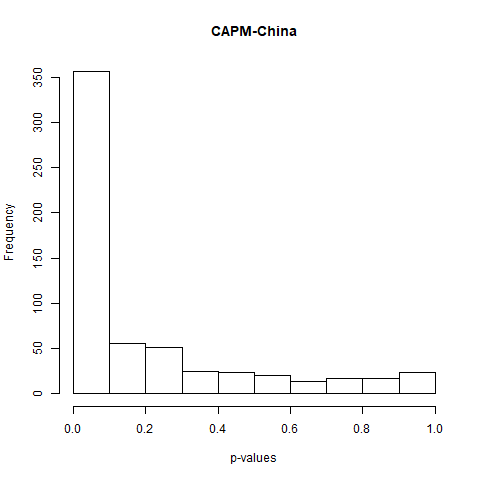}&
    		\includegraphics[width=0.5\textwidth]{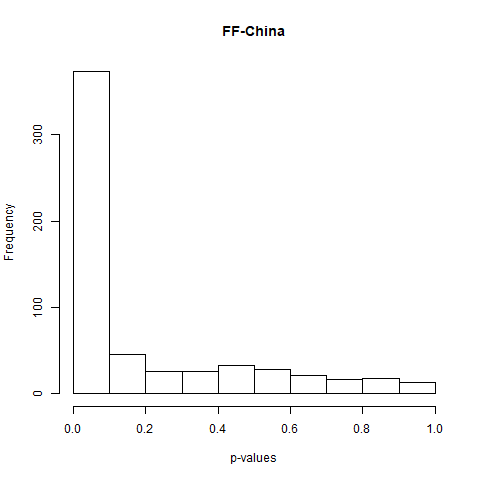}\\
    		\includegraphics[width=0.5\textwidth]{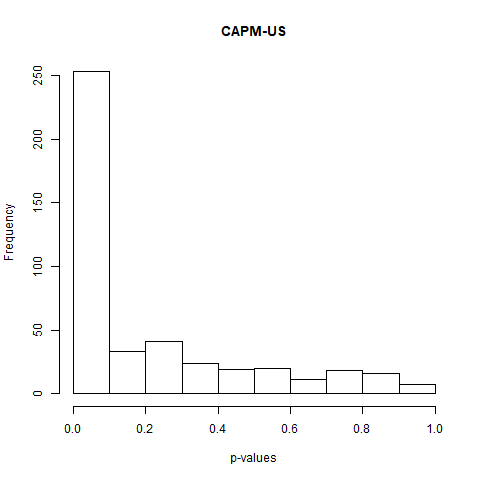}&
    		\includegraphics[width=0.5\textwidth]{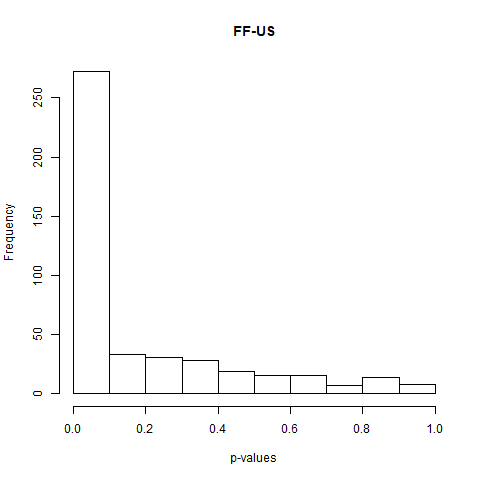} 		
    	\end{tabular}
    	\caption{Histogram of p-values of CC test for Chinese and U.S.'s datasets, respectively.}
    	\label{fig:hist-cc}
    \end{figure*}

    \begin{figure}
    	\includegraphics[width=1\textwidth]{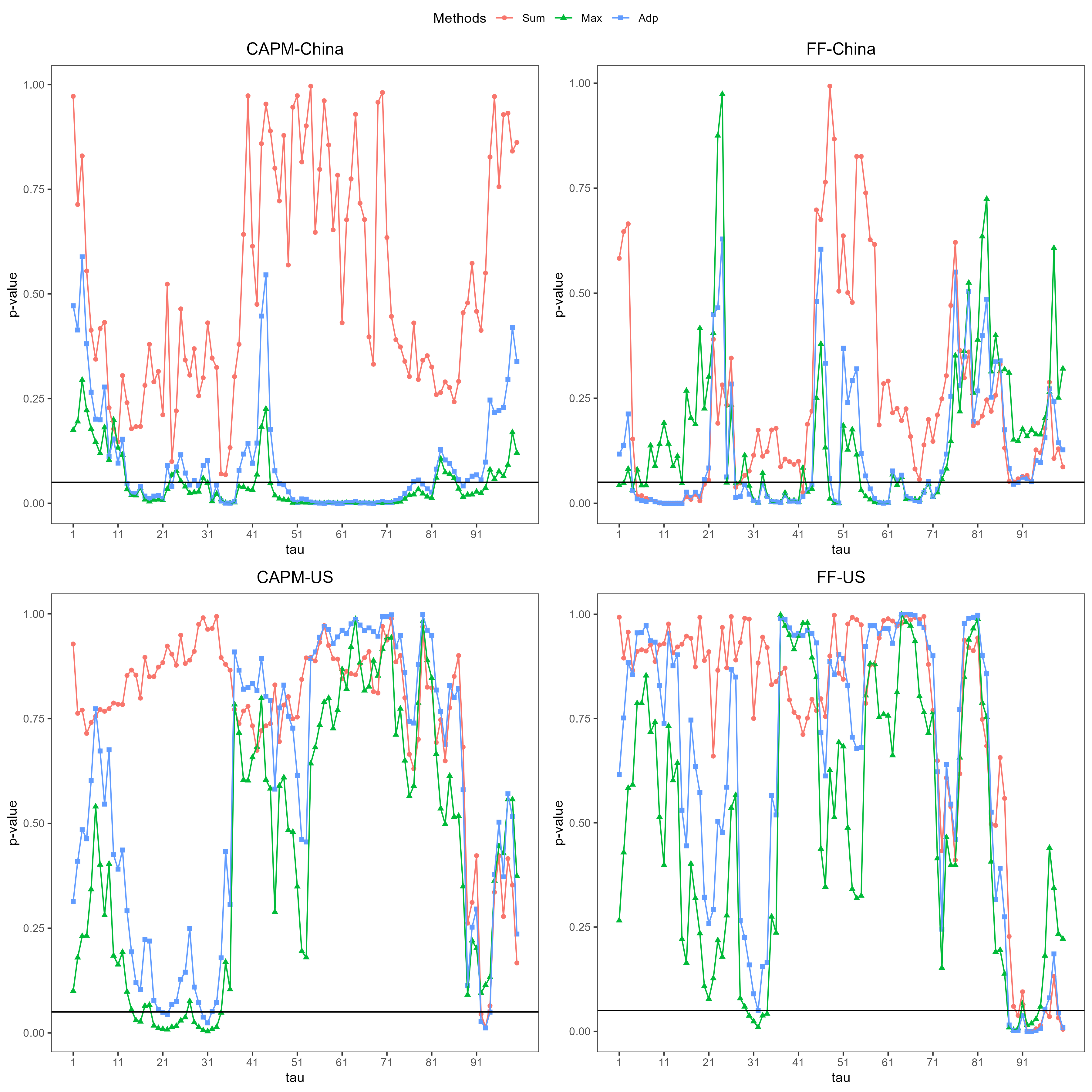}
    	\caption{P-values of alpha tests for Chinese and U.S.'s datasets, respectively.\label{fig:p-values}}
    \end{figure}

    \begin{figure}
    	\includegraphics[width=1\textwidth]{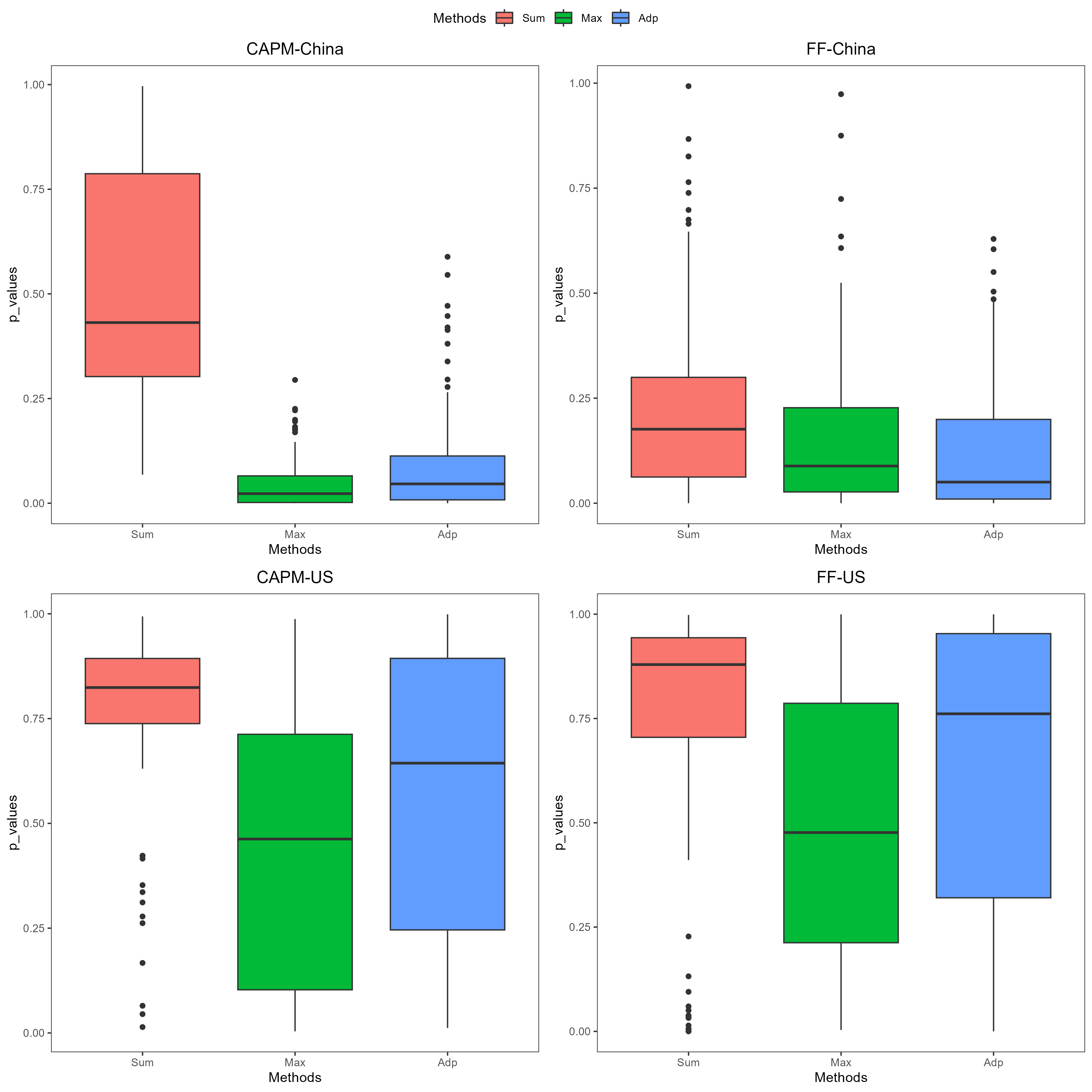}
    	\caption{Box-plot of p-values of alpha tests for Chinese and U.S.'s datasets, respectively. \label{fig:box-p}}
    \end{figure}

\section{Conclusion}
\label{sec:conc}
    In this paper, we propose the Max test for alpha testing under high dimensional case, which aims to simultaneously alleviate the difficulties of the time-variation in the risk-exposure coefficients and the sparse alternatives. Considering the good performance of Sum test under dense alternative, we suggest an adaptive test, which integrates the power advantages of both Max and Sum and is applicable for various alternatives. Moreover, the usefulness of the proposed Max and Adp tests is illustrated by two empirical examples.

    To further broaden the usefulness of our proposed tests, we conclude this article by identifying the following possible research avenues. First, if the goal is to identify the significance of alphas for all possible assets, then one can apply our Adp test in a multiple testing procedure to control the false discovery rate \citep{Giglio2020ThousandsOA}. Second, the assumption of sub-Gaussian-type tails may be too restrictive for stock data. \citet{zhao2023} proposed a robust high-dimensional alpha test based on spatial-sign for conditional time-varying factor models. How to construct an adaptive test for time-varying factor pricing models with heavy-tailed errors deserves some further studies. Finally, for sparse alternative, one may consider an L-statistic which combines the first several largest signals together, see more information in Remark 1 in \citet{chang2022}.

    \section{Appendix}
    \subsection{Proof of Theorem \ref{T2.1}}
    \begin{proof}
    	Define $\rho_{i0t}=\delta_i(t/T)-\bm{\lambda}_{0,i0}'\widetilde{\bm{B}}(t/T)$ and $\rho_{ijt}=\beta_{ij}(t/T)-\lambda_{0,ij}'B(t/T)$ for $1\leq j\leq d$ and $1\leq i\leq N$. Denote $\bm{\lambda}_i^0=(\bm{\lambda}_{0,ij}',0\le j\le d)'$, $\rho_{it}=\rho_{i0t}+\sum_{j=1}^d\rho_{ijt}f_{jt}$ and $\bm{\rho}_{i.}=(\rho_{i1},...,\rho_{iT})'$. Then
    	\begin{equation*}
    	\bm{R}_{i.}=\delta_i^0\bm{1}_T+\bm{Z}\bm{\lambda}_i^0+\bm{e}_{i.}+\bm{\rho}_{i.}, \text{ and } \widehat{\bm{e}}_{i.}=M_{\bm{Z}}\bm{e}_{i.}+M_{\bm{Z}}\bm{\rho}_{i.}+\delta_i^0M_{\bm{Z}}\bm{1}_T.
    	\end{equation*}
    	Accordingly, $t_i^2$ given in (\ref{equ.7}) can be written as
    	\begin{align*}\label{A.1}
    	t_i^2
    	&=T^{-1}\widehat{\sigma}_{ii}^{-1}(M_{\bm{Z}}\bm{e}_{i.}+M_{\bm{Z}}\bm{\rho}_{i.}+\delta_i^0M_{\bm{Z}}\bm{1}_T)' \bm{1}_T\bm{1}_T' (M_{\bm{Z}}\bm{e}_{i.}+M_{\bm{Z}}\bm{\rho}_{i.}+\delta_i^0M_{\bm{Z}}\bm{1}_T)\\
    	&=T^{-1}\widehat{\sigma}_{ii}^{-1}\{
    	\bm{e}_{i.}'M_{\bm{Z}}\bm{1}_T\bm{1}_T'M_{\bm{Z}}\bm{e}_{i.}
    	+\bm{\rho}_{i.}'M_{\bm{Z}}\bm{1}_T\bm{1}_T'M_{\bm{Z}}\bm{\rho}_{i.}
    	+2\bm{\rho}_{i.}'M_{\bm{Z}}\bm{1}_T\bm{1}_T'M_{\bm{Z}}\bm{e}_{i.}\\
    	&\qquad+2\delta_i^0\bm{1}_T'M_{\bm{Z}}\bm{1}_T\bm{1}_T'M_{\bm{Z}}\bm{e}_{i.}
    	+2\delta_i^0\bm{1}_T'M_{\bm{Z}}\bm{1}_T\bm{1}_T'M_{\bm{Z}}\bm{\rho}_{i.}+(\delta_i^0)^2\bm{1}_T'M_{\bm{Z}}\bm{1}_T\bm{1}_T'M_{\bm{Z}}\bm{1}_T\}\\
    	&=: \widehat{\sigma}_{ii}^{-1}(\varphi_i+\zeta_{i1}+\zeta_{i2}+\zeta_{i3}+\zeta_{i4}+\zeta_{i5})\tag{A.1}.
    	\end{align*}
    	That is, under $H_0$, $M_{NT}=\max_{1\le i\le N}\widehat{\sigma}_{ii}^{-1}(\varphi_i+\zeta_{i1}+\zeta_{i2})$. Thus, it suffices to show that
    	\begin{equation*}
    	\mathbb{P}\left( \max_{1\leq i\leq N}\varphi_i/\widehat{\sigma}_{ii}-2\log(N)+\log\{\log(N)\}\leq x\right) \to F(x), \text{ and }\max_{1\leq i\leq N}\widehat{\sigma}_{ii}^{-1}\zeta_{ik}=o_p(1),
    	\end{equation*}
    	for $k=1,2$.
    	\\~\\\textbf{Step 1.}
    	Show that
    	\begin{equation*}\label{A.2}
    	\mathbb{P}\left( \max_{1\leq i\leq N}\varphi_i/\widehat{\sigma}_{ii}-2\log(N)+\log\{\log(N)\}\leq x\right) \to F(x).\tag{A.2}
    	\end{equation*}
    	
    	Define $\bm{h}=(h_1,...,h_T)'=M_{\bm{Z}}\bm{1}_T$, $V_{it}=e_{it}h_t/\sigma_{ii}^{1/2}$ and $\widehat{V}_{it}=V_{it}I(V_{it}\leq \tau_{NT})$ for $i=1,...,N; t=1,...,T$, where $\tau_{NT}=2\zeta_{NT}^{-1}\eta^{-1/2}\sqrt{\log(T+N)}$ under Assumption \ref{A2.3}  (\romannumeral3). Here $\zeta_{NT}\to0$ will be specified later. Let $W_i=\sum_{t=1}^TV_{it}/\sqrt{T}$ and $\widehat{W}_i=\sum_{t=1}^T\widehat{V}_{it}/\sqrt{T}$. Notice that $W_i^2=\sigma_{ii}^{-1}\varphi_i$.
    	\\~\\\textbf{Step 1.1.}
    	Show that for any $x\in\mathbb{R}$,
    	\begin{equation*}\label{A.3}
    	\mathbb{P}\left(\max_{1\le i\le N}\widehat{W}_i^2-2\log(N)+\log\{\log(N)\}\le x \right)\to F(x).\tag{A.3}
    	\end{equation*}
    	By the Bonferroni inequality in Lemma \ref{lemma1}, for any fixed integer $k\leq N/2$,
    	\begin{align*}
    	\sum_{t=1}^{2k}(-1)^{t-1}\sum_{1\leq i_1<...<i_t\leq N}\mathbb{P}(|\widehat{W}_{i_1}|\geq x_N&,... |\widehat{W}_{i_t}|\geq x_N)
    	\leq\mathbb{P}(\max_{1\leq i\leq N}|\widehat{W}_i|\geq x_N)\\
    	\leq&\sum_{t=1}^{2k-1}(-1)^{t-1}\sum_{1\leq i_1<...<i_t\leq N}\mathbb{P}(|\widehat{W}_{i_1}|\geq x_N,... |\widehat{W}_{i_t}|\geq x_N),
    	\end{align*}
    	where $x_N=\sqrt{2\log{N}-\log\{\log(N)\}+x}$. Define $|\widehat{\bm{W}}|_{\min}=\min_{1\leq l\leq t}|\widehat{W}_{i_l}|$. Then, by Theorem 1 in \citet{Zaitsev1987OnTG}, we have
    	\begin{align*}
    	\mathbb{P}(|\widehat{\bm{W}}|_{\min}\geq x_N)\leq&
    	\mathbb{P}\left( |\bm{z}|_{\min}\geq x_N-k_n\log^{-1/2}(N)\right)\\
    	&\quad+c_1t^{5/2}\exp[-T^{1/2}k_n/\{c_2t^3\tau_{NT}\log^{1/2}(N)\}],
    	\end{align*}
    	where $c_1,c_2>0$ are two constants, $k_n\to 0$ (specified later), $|\bm{z}|_{\min}=\min_{1\leq l\leq t}|z_l|$, $\bm{z}=(z_1,...,z_t)'$ is a t-dimensional normal vector, which has the same covariance as $(\widehat{W}_{i_1},...,\widehat{W}_{i_t})'$.
    	
    	Because $\log(N)=o(T^{1/4})$, we let $k_n\to 0$ sufficiently slowly so that
    	\begin{equation*}
    	c_1t^{5/2}\exp[-T^{1/2}k_n/\{c_2t^3\tau_{NT}\log^{1/2}(N)\}]=O(N^{-\zeta})
    	\end{equation*}
    	for any large $\zeta>0$. Thus
    	\begin{equation*}
    	\mathbb{P}\left( \max_{1\leq i\leq N}|\widehat{W}_i|\geq x_N\right) \\
    	\leq\sum_{t=1}^{2k-1}(-1)^{t-1}\sum_{1\leq i_1<...<i_t\leq N}\mathbb{P}\left( |\bm{z}|_{\min}\geq x_N-k_n\log^{-1/2}(N)\right) +o(1).
    	\end{equation*}
    	Likewise,
    	\begin{equation*}
    	\mathbb{P}\left( \max_{1\leq i\leq N}|\widehat{W}_i|\geq x_N\right) \\
    	\ge\sum_{t=1}^{2k}(-1)^{t-1}\sum_{1\leq i_1<...<i_t\leq N}\mathbb{P}\left( |\bm{z}|_{\min}\geq x_N-k_n\log^{-1/2}(N)\right) +o(1).
    	\end{equation*}
    	
    	Define $V_t=\sum_{1\leq i_1<...<i_t\leq N}P_{i_1,...,i_t}=\sum_{1\leq i_1<...<i_t\leq N}\mathbb{P}(|z_{i_1}|\geq \widetilde{x}_N,...,|z_{i_t}|\geq \widetilde{x}_N)$, where $\widetilde{x}_N=x_N-k_n\log^{-1/2}(N)$. Write correlation matrix $\bm{R}=\bm{Q}'\bm{\Lambda}\bm{Q}$, where $\bm{Q}=(q_{ij})_{N\times N}$ is an orthogonal matrix and $\bm{\Lambda}=\diag(\lambda_1,...,\lambda_N)$, $\lambda_i$'s are the eigenvalues of $\bm{R}$. Since $\sum_{1\leq j\leq N}r_{ij}^2$ is the $i$th diagonal element of $\bm{R}^2=\bm{Q}^T\bm{\Lambda}^2\bm{Q}$, we have $\sum_{1\leq j\leq N}r_{ij}^2=\sum_{l=1}^Nq_{li}^2\lambda_l^2\leq c_3^2$ for some constant $c_3$ due to Assumption \ref{A2.4} (\romannumeral2). Thus, together with Assumption \ref{A2.4} (\romannumeral3), $\bm{R}$ satisfies the condition in Lemma \ref{lemma2}.
    	
    	Define $\mathcal{I}=\{1\leq i_1<...<i_t\leq N:\max_{1\leq k<l\leq t}|\cov(z_{i_k},z_{i_l})|\geq N^{-\gamma}\}$, where $\gamma>0$ is a sufficiently small number to be specified later. For $2\leq d\leq t-1$, define $\mathcal{I}_d=\{1\leq i_1<...<i_t\leq N:\card(S)=d, \text{ where } S \text{ is the largest subset of } \{i_1,...,i_t\} \text{ such that }\forall i_k,i_t\in S,i_k\ne i_t,|\cov(z_{i_k},z_{i_t})|<N^{-\gamma}\}$. Here, $\card(S)$ is the cardinality of $S$. For $d=1$, define $\mathcal{I}_1=\{1\leq i_1<...<i_t\leq N:|\cov(z_{i_k},z_{i_l})|\geq N^{-\gamma},\forall 1\leq k<l\leq t\}$. Hence, $\mathcal{I}=\cap_{d=1}^{t-1}\mathcal{I}_d$. For a fixed subset $S$ with $\card(S)=d$, the number of $i$ such that $|\cov(z_i,z_j)|\geq N^{-\gamma}$ for some $j\in S$ is no more than $c_4dN^{2\gamma}$, where $c_4$ is a constant. Indeed, if the number of such $i$ had been larger than $c_4dN^{2\gamma}$, then
    	\begin{equation*}
    	\sum\nolimits_{1\leq i\leq N}r_{ij}^2\geq \sum\nolimits_{\{i:|\cov(z_i,z_j)|\geq N^{-\gamma}\}}r_{ij}^2\geq N^{-2\gamma}c_4dN^{2\gamma}=c_4d>c_3^2,
    	\end{equation*}
    	for $c_4>d^{-1}c_3^2$, which is a contradiction. Note that the total number of $S$ with $\card(S)=d$ is $\binom{N}{d}$. This leads to
    	\begin{align*}
    	&\card(\mathcal{I}_d)\leq \binom{N}{d}\binom{c_4dN^{2\gamma}}{t-d}\leq N^d(c_4d)^{t-d}N^{2\gamma(t-d)}\leq (c_4t)^tN^{d+2\gamma t},\text{ and}\\
    	&\card(\mathcal{I})\leq \sum_{d=1}^{t-1}(c_4t)^tN^{d+2\gamma t}\leq (c_4t)^ttN^{2\gamma t+t-1}=O(N^{2\gamma t+t-1}).
    	\end{align*}
    	
    	Define $\mathcal{I}^c=\{1\leq i_1<...<i_t\leq N\}\backslash\mathcal{I}$. Then
    	\begin{equation*}
    	\card(\mathcal{I}^c)=\binom{N}{t}-O(N^{2\gamma t+t-1})=\{1+o(1)\}\binom{N}{t}
    	\end{equation*}
    	as long as $\gamma<(2t)^{-1}$. Thus, by equation (20) in the proof of Lemma 6 in \cite{TonyCai2014TwosampleTO}, we have
    	\begin{equation*}
    	P_{i_1,...,i_t}=\{1+o(1)\}\pi^{-t/2}N^{-t}\exp\left( -\frac{tx}{2}\right)
    	\end{equation*}
    	uniformly in $(i_1,...,i_t)\in\mathcal{I}^c$. Similarly, by equation (21) in the proof of Lemma 6 in \citet{TonyCai2014TwosampleTO}, we have for $1\leq d\leq t-1$
    	\begin{equation*}
    	\sum_{(i_1,...,i_t)\in\mathcal{I}_d}P_{i_1,...,i_t}\to 0.
    	\end{equation*}
    	Thus,
    	\begin{align*}
    	V_t
    	&=\sum_{(i_1,...,i_t)\in\mathcal{I}^c}P_{i_1,...,i_t}+\sum_{(i_1,...,i_t)\in\mathcal{I}}P_{i_1,...,i_t}\\
    	&=\sum_{(i_1,...,i_t)\in\mathcal{I}^c}P_{i_1,...,i_t}+\sum_{d=1}^{t-1}\sum_{(i_1,...,i_t)\in\mathcal{I}_d}P_{i_1,...,i_t}\\
    	&=\{1+o(1)\}\pi^{-t/2}\frac{1}{t!}\exp\left(-\frac{tx}{2} \right) .
    	\end{align*}
    	Combining all facts together, we get
    	\begin{align*}
    	\sum_{t=1}^{2k}(-1)^{t-1}\pi^{-t/2}\frac{1}{t!}\exp\left(-\frac{tx}{2} \right) \{1+o(1)\}&
    	\leq \mathbb{P}\left(\max_{1\leq i\leq N}|\widehat{W}_i|\geq x_N \right)\\
    	&\leq \sum_{t=1}^{2k-1}(-1)^{t-1}\pi^{-t/2}\frac{1}{t!}\exp\left(-\frac{tx}{2} \right) \{1+o(1)\}.
    	\end{align*}
    	Letting $k\to\infty$, we have (\ref{A.3}).
    	\\~\\\textbf{Step 1.2.}
    	Show that
    	\begin{equation*}\label{A.4}
    	\mathbb{P}\left( \max_{1\leq i\leq N}|W_i-\widehat{W}_i|\geq \frac{1}{\log(N)}\right) \to 0.\tag{A.4}
    	\end{equation*}
    	Notice that
    	\begin{equation*}
    	\mathbb{P}\left( \max_{1\leq i\leq N}|W_i-\widehat{W}_i|\geq \frac{1}{\log(N)}\right)
    	\leq \mathbb{P}\left( \max_{1\leq i\leq N}\max_{1\leq t\leq T}|V_{it}|\geq \tau_{NT}\right) \leq \sum_{1\leq i\leq N}\sum_{1\leq t\leq T}\mathbb{P}(|V_{it}|\geq \tau_{NT}).
    	\end{equation*}
    	For any $\zeta_t\to 0$, we have
    	\begin{align*}
    	\mathbb{P}(|V_{it}|\geq \tau_{NT})
    	&=\mathbb{P}(|e_{it}/\sigma_{ii}^{1/2}|\geq |h_t^{-1}|\tau_{NT})\\
    	&\leq \mathbb{P}(|e_{it}/\sigma_{ii}^{1/2}|\geq |h_t^{-1}|\tau_{NT}, |h_t^{-1}|\geq \zeta_{NT})+\mathbb{P}(|e_{it}/\sigma_{ii}^{1/2}|\geq |h_t^{-1}|\tau_{NT}, |h_t^{-1}|<\zeta_{NT})\\
    	&\leq \mathbb{P}(|e_{it}/\sigma_{ii}^{1/2}|\geq \tau_{NT}\zeta_{NT})+\mathbb{P}( |h_t|\geq\zeta_{NT}^{-1}).
    	\end{align*}
    	Under Assumption \ref{A2.3} (\romannumeral3), by the Markov inequality, we have
    	\begin{equation*}
    	\mathbb{P}(|e_{it}/\sigma_{ii}^{1/2}|\geq \tau_{NT}\zeta_{NT})\leq K\exp(-\eta\zeta_{NT}^2\tau_{NT}^2).
    	\end{equation*}
    	Next, we consider $\mathbb{P}( |h_t|\geq\zeta_{NT}^{-1})$. By Bernstein's inequality in \citet{Bosq1996NonparametricSF} and the same proof for Lemma A.8 of \citet{Ma2011SplinebackfittedKS}, under $L^3T^{-1}=o(1)$, we have
    	\begin{equation*}
    	||\bm{Z}'\bm{1}_T/T-\mathbb{E}(\bm{Z}'\bm{1}_T/T)||_\infty=O_{\text{a.s.}}\{\log(T)/\sqrt{TL}\}.
    	\end{equation*}
    	Further, under Assumption \ref{A2.2} (\romannumeral2), we have
    	\begin{align*}
    	||\mathbb{E}(\bm{Z}'\bm{1}_T/T)||_\infty
    	&\le\max_{1\le l\le L}T^{-1}  \sum_{t=1}^T|B_l(t/T) |\left(1+\sum_{j=1}^d\mathbb{E}|f_{jt}| \right)\\
    	&\le MT^{-1} \max_{1\le l\le L}\sum_{t\in\{t:|l(t)-l|\le q-1\}}|B_l(t/T)|\\
    	&\le ML^{-1}
    	\end{align*}
    	for some constant $0<M<\infty$, which leads to $||\bm{Z}'\bm{1}_T/T||_\infty=O_{\text{a.s.}}(L^{-1})$. By Lemma \ref{lemma4} and the result in \citet{Demko1986SpectralBF}, we have with probability one,
    	\begin{equation*}
    	||(\bm{Z}'\bm{Z}/T)^{-1}||_\infty\le c_5L,
    	\end{equation*}
    	for some constant $0<c_5<\infty$, as $T\to\infty$. Then, using the fact that
    	$\sum_{l=1}^LB_l(t/T)$ is bounded, we have
    	\begin{align*}\label{A.5}
    	|h_t|
    	&=|1-\bm{Z}_t'(\bm{Z}'\bm{Z})^{-1}\bm{Z}'\bm{1}_T|\nonumber\\
    	&\le 1+\sum_{k=1}^{(1+d)L}|Z_{tk}|\cdot ||(\bm{Z}'\bm{Z}/T)^{-1}||_\infty\cdot ||\bm{Z}'\bm{1}_T/T||_\infty\nonumber\\
    	&\le dc_6(1+||\bm{f}_t||),\tag{A.5}
    	\end{align*}
    	for some constant $0<c_6<\infty$, which leads to
    	\begin{align*}
    	\mathbb{P}(|h_t|\geq\zeta_{NT}^{-1})
    	\leq \mathbb{P}\left( ||\bm{f}_t||>d^{-1}c_6^{-1}\zeta_{NT}^{-1}-1\right)
    	\leq M(d^{-1}c_6^{-1}\zeta_{NT}^{-1}-1)^{-4(2+\kappa)},
    	\end{align*}
    	where the last inequality comes from Assumption \ref{A2.2} (\romannumeral2) and the Markov inequality. Combining all facts above and setting $\zeta_t=o((NT)^{-1/\{4(2+\kappa)\}})$, we obtain
    	\begin{align*}
    	\mathbb{P}\left( \max_{1\leq i\leq N}|W_i-\widehat{W}_i|\geq 1/\log(N)\right)
    	\leq&NT\left\lbrace M(d^{-1}c_6^{-1}\zeta_{NT}^{-1}-1)^{-4(2+\kappa)}
    	+K\exp(-\eta\zeta_{NT}^2\tau_{NT}^2) \right\rbrace \\
    	\leq&NT\left\lbrace M(d^{-1}c_6^{-1}\zeta_{NT}^{-1}-1)^{-4(2+\kappa)}
    	+K(N+T)^{-4} \right\rbrace\to 0.
    	\end{align*}
    	Hence, the proof of (\ref{A.4}) is complete.
    	\\~\\\textbf{Step 1.3.} Show that
    	\begin{equation*}\label{A.6}
    	\left|\max_{1\le i\le N}\varphi_i\widehat{\sigma}_{ii}^{-1}-\max_{1\le i\le N}W_i^2 \right|=o_p(1).\tag{A.6}
    	\end{equation*}
    	By (\ref{A.3}) and letting $x=1/2\log\{\log(N)\}$, we have
    	\begin{equation*}
    	\mathbb{P}\left(\max_{1\le i\le N}\widehat{W}_i^2\le 2\log(N)-\frac{1}{2}\log\{\log(N)\} \right)\to 1,
    	\end{equation*}
    	which, together with (\ref{A.4}), leads to
    	\begin{align*}
    	\left|\max_{1\leq i\leq N}W_i^2-\max_{1\leq i\leq N}\widehat{W}_i^2 \right|
    	\leq 2\max_{1\leq i\leq N}|W_i|\cdot \max_{1\leq i\leq N}|W_i-\widehat{W}_i|+\max_{1\leq i\leq N}|W_i-\widehat{W}_i|^2=o_p(1),
    	\end{align*}
    	and $\max_{1\leq i\leq N}W_i^2=O_p\{\log(N)\}$.
    	By Lemma E.2 and Proposition 3.1 in \citet{Fan2013PowerEI}, for some $c>0$,
    	\begin{equation*}\label{A.7}
    	\mathbb{P}\left(\max_{1\le i\le N}|\sigma_{ii}-\widehat{\sigma}_{ii}|\ge c\sqrt{\frac{\log(N)}{T}} \right)\to 0 \text{ and } \mathbb{P}\left(\frac{4}{9}\le \frac{\widehat{\sigma}_{ii}}{\sigma_{ii}}\le\frac{9}{4},i=1,...,N \right)\to 1.\tag{A.7}
    	\end{equation*}
    	Combining the results above, with probability tending to one, we have
    	\begin{align*}
    	\left|\max_{1\le i\le N}\varphi_i\widehat{\sigma}_{ii}^{-1}-\max_{1\le i\le N}W_i^2 \right|
    	\leq \max_{1\leq i\leq N}W_i^2\cdot \max_{1\leq i\leq N}\left|\frac{\sigma_{ii}}{\widehat{\sigma}_{ii}}-1 \right|
    	=O_p\{\log^{3/2}(N)T^{-1/2}\}\to 0,
    	\end{align*}
    	due to $\log(N)=o(T^{1/3})$. Then (\ref{A.6}) follows, which together with (\ref{A.3}) implies (\ref{A.2}).
    	\\~\\\textbf{Step 2.}
    	Show that
    	\begin{equation*}\label{A.8}
    	\max_{1\leq i\leq N}\widehat{\sigma}_{ii}^{-1}\zeta_{ik}=o_p(1), k=1,2.\tag{A.8}
    	\end{equation*}
    	
    	By Lemma \ref{lemma3} and Assumption \ref{A2.2}, for each $1\leq i\leq N$, we have
    	\begin{equation*}
    	\sup_{1\leq t\leq T}|\rho_{it}|=\sup_{1\leq t\leq T}\left| \rho_{i0t}+\sum_{j=1}^d\rho_{ijt}f_{jt}\right| =O(L^{-r}).
    	\end{equation*}
    	This, together with the fact that $\lambda_{\max}(M_{\bm{Z}}\bm{1}_T\bm{1}_T'M_{\bm{Z}})\leq T$, leads to
    	\begin{equation*}
    	\zeta_{i1}=T^{-1}(\bm{\rho}_{i.}'M_{\bm{Z}}\bm{1}_T)^2=O(TL^{-2r}),
    	\end{equation*}
    	for each $1\leq i\leq N$. Thus, by Assumption \ref{A2.4} (\romannumeral1) and (\ref{A.7}), we have $\max_{1\leq i\leq N}\widehat{\sigma}_{ii}^{-1}\zeta_{i1}=o_p(1)$.
    	
    	Define $\varpi_i=\sum_{s=1}^Th_s\rho_{is}$, then $\zeta_{i2}=2T^{-1}\sum_{t=1}^Te_{it}h_t\varpi_i$.
    	By (\ref{A.5}) and Assumption \ref{A2.2} (\romannumeral2), we obtain $\mathbb{E}(h_t^2)\leq \mathbb{E}\{dc_6(1+||\bm{f}_t||)\}^2\leq c_7$,
    	for some constant $0<c_7<\infty$. In addition, by Lemma \ref{lemma3}, we obtain $|\varpi_i|=O(TL^{-r})$. Furthermore, we have
    	\begin{equation*}
    	\text{Var}(\zeta_{i2})=4T^{-2}\sum_{t=1}^T\mathbb{E}(e_{it}h_t\varpi_i)^2=4T^{-2}\sum_{t=1}^T\sigma_{ii}\mathbb{E}(h_t\varpi_i)^2=O(TL^{-2r}).
    	\end{equation*}
    	Hence, $|\zeta_{i2}|=O_p(T^{1/2}L^{-r})$, which, together with Assumption \ref{A2.4} (\romannumeral1) and (\ref{A.7}), implies that $\max_{1\leq i\leq N}\widehat{\sigma}_{ii}^{-1}\zeta_{i2}=o_p(1)$.
    	
    	Then the proof of Theorem \ref{T2.1} is complete.
    \end{proof}

    \subsection{Proof of Proposition \ref{P2.1}}
    \begin{proof}
    	By (\ref{A.1}) and the triangle inequality, we have
    	\begin{align*}
    	M_{NT}
    	&=\max_{1\le i\le N} \widehat{\sigma}_{ii}^{-1} (\varphi_i+\zeta_{i1}+\zeta_{i2}+\zeta_{i3}+\zeta_{i4}+\zeta_{i5})\\
    	&\ge\max_{1\le i\le N} \widehat{\sigma}_{ii}^{-1} (\zeta_{i3}+\zeta_{i4}+\zeta_{i5})
    	-\max_{1\le i\le N}\widehat{\sigma}_{ii}^{-1}\varphi_i
    	-\max_{1\le i\le N}\widehat{\sigma}_{ii}^{-1}\zeta_{i1}
    	-\max_{1\le i\le N}\widehat{\sigma}_{ii}^{-1}\zeta_{i2}.
    	\end{align*}
    	According to the proof of Theorem \ref{T2.1}, we have $\max_{1\le i\le N}\widehat{\sigma}_{ii}^{-1}\zeta_{ik}=o_p(1)$ for $k=1,2$, and
    	\begin{equation*}
    	\mathbb{P}\left(\max_{1\le i\le N}\widehat{\sigma}_{ii}^{-1}\varphi_i\le 2\log(N)-\frac{1}{2}\log\{\log(N)\} \right)\to 1.
    	\end{equation*}
    	
    	Define $\varpi_i^*=\delta_i^0\bm{1}_T'M_{\bm{Z}}\bm{1}_T$, then $\zeta_{i3}=2T^{-1}\sum_{t=1}^Te_{it}h_t\varpi_i^*$ and $|\varpi_i^*|\leq T|\delta_i^0|$. Furthermore, due to (\ref{A.5}), we have
    	\begin{equation*}
    	\text{Var}(\zeta_{i3})=4T^{-2}\sum_{t=1}^T\mathbb{E}(e_{it}h_t\varpi_i^*)^2=4T^{-2}\sum_{t=1}^T\sigma_{ii}\mathbb{E}(h_t\varpi_i^*)^2=O( T|\delta_i^0|^2).
    	\end{equation*}
    	That is $\zeta_{i3}=O_p(T^{1/2}|\delta_i^0|)$. In addition, by Lemma \ref{lemma3}, we have
    	\begin{align*}
    	&\zeta_{i4}
    	=2T^{-1}\delta_i^0\bm{1}_T'M_{\bm{Z}}\bm{1}_T\bm{1}_T'M_{\bm{Z}}\bm{\rho}_{i.}\leq2T^{-1}|\delta_i^0|T\sum_{t=1}^T|h_t\rho_{it}|=O(|\delta_i^0|TL^{-r}), \text{ and}\\
    	&\zeta_{i5}
    	=T^{-1}(\delta_i^0)^2\bm{1}_T'M_{\bm{Z}}\bm{1}_T\bm{1}_T'M_{\bm{Z}}\bm{1}_T=T^{-1}(\delta_i^0)^2\left(\sum_{t=1}^Th_t \right)^2=O(T|\delta_i^0|^2).
    	\end{align*}
    	These results above, together with (\ref{A.7}), imply that the proposed $M_{NT}$-based test is consistent provided that $\max_{1\le i\le N}|\delta_i^0|\gtrsim\sqrt{\log(N)/T}$.
    \end{proof}

    \subsection{Proof of Theorem \ref{T3.1}}
    \begin{proof}
    	\textbf{Step 1.}
    	Investigate the asymptotic independence of $M_{NT}$ and $(S_{NT}-\widehat{\mu}_{NT})/\widehat{\sigma}_{NT}$ under Gaussian case, i.e. $\bm{e}_t\sim N(\bm{0},\bm{\Sigma})$.
    	Using the same notations given in (\ref{A.1}), we have
    	\begin{equation*}
    	S_{NT}=N^{-1}\sum_{i=1}^N(\varphi_i+\zeta_{i1}+\zeta_{i2}+\zeta_{i3}+\zeta_{i4}+\zeta_{i5}).
    	\end{equation*}
    	According to the proof of Theorem 1 and 2 in \citet{Ma2018TestingAI}, we have under $H_0$,
    	\begin{equation*}
    	(S_{NT}-\mu_{NT})/\sigma_{NT}=\varphi_{NT}/\sigma_{NT}+o_p(1),
    	\end{equation*}
    	where
    	\begin{equation*}
    	\varphi_{NT}=N^{-1}\sum_{i=1}^N\varphi_i-\mu_{NT}=2N^{-1}T^{-1}\sum_{t=2}^T\sum_{s=1}^{t-1}\bm{e}_t'\bm{e}_sh_th_s.
    	\end{equation*}
    	According to the proof of Theorem \ref{T2.1}, we have under $H_0$,
    	\begin{equation*}
    	M_{NT}=\max_{1\le i\le N}\sigma_{ii}^{-1}\varphi_i+o_p(1),
    	\end{equation*}
    	where $\varphi_i=T^{-1}(\sum_{t=1}^Te_{it}h_t)^2$. Hence, by Lemma \ref{lemma5}, it suffices to show that $\varphi_{NT}/\sigma_{NT}$ and $\max_{1\le i\le N}\varphi_i/\sigma_{ii}$ are asymptotically independent.
    	
    	For any fixed $x,y\in\mathbb{R}$, define $A_N=A_N(x)=\{\varphi_{NT}/\sigma_{NT}\le x\}$ and $B_i=B_i(y)=\{\varphi_i/\sigma_{ii}>2\log(N)-\log\{\log(N)\}+y\}$ for $i=1,...,N$. Then $\mathbb{P}(A_N)\to\Phi(x)$ and $\mathbb{P}(\cup_{i=1}^NB_i)\to1-F(y)$. Our goal is to prove that
    	\begin{equation*}
    	\mathbb{P}\left(\bigcup\limits_{i=1}^NA_NB_i \right)\to \Phi(x)\{1-F(y)\}.
    	\end{equation*}
    	
    	For each $d\ge 1$, define
    	\begin{align*}
    	&\zeta(N,d)=\sum_{1\le i_1<...<i_d\le N}|\mathbb{P}(A_NB_{i_1}...B_{i_d})-\mathbb{P}(A_N)\mathbb{P}(B_{i_1}...B_{i_d}) |, \\
    	&H(N,d)=\sum_{1\le i_1<...<i_d\le N}|\mathbb{P}(B_{i_1}...B_{i_d})|.
    	\end{align*}
    	By the inclusion-exclusion principle, we observe that for any integer $k\ge1$,
    	\begin{align*}
    	&\mathbb{P}\left(\bigcup\limits_{i=1}^NA_NB_i \right)\\
    	\le& \sum_{1\le i_1\le N}\mathbb{P}(A_NB_{i_1})-\sum_{1\le i_1<i_2\le N}\mathbb{P}(A_NB_{i_1}B_{i_2})+...+\sum_{1\le i_1<...<i_{2k}\le N}\mathbb{P}(A_NB_{i_1}...B_{i_{2k}})\\
    	\le& \mathbb{P}(A_N)\left\lbrace \sum_{1\le i_1\le N}\mathbb{P}(B_{i_1})-\sum_{1\le i_1<i_2\le N}\mathbb{P}(B_{i_1}B_{i_2})+...+\sum_{1\le i_1<...<i_{2k}\le N}\mathbb{P}(B_{i_1}...B_{i_{2k}})\right\rbrace  \\
    	&\qquad+\sum_{d=1}^{2k}\zeta(N,d)+H(N,2k+1)\\
    	\le& \mathbb{P}(A_N)\mathbb{P}\left(\bigcup\limits_{i=1}^NB_i \right)+\sum_{d=1}^{2k}\zeta(N,d)+H(N,2k+1).
    	\end{align*}
    	According to the proof of Theorem \ref{T2.1}, we have for each $d$,
    	\begin{equation*}
    	\lim\limits_{N\to\infty}H(N,d)=\pi^{-1/2}\frac{1}{d!}\exp\left(-\frac{dx}{2} \right).
    	\end{equation*}
    	We claim that for each $d$,
    	\begin{equation*}\label{C.1}
    	\lim\limits_{N\to\infty}\zeta(N,d)\to 0.\tag{C.1}
    	\end{equation*}
    	Then, by letting $k\to \infty$, we have
    	\begin{equation*}
    	\limsup\limits_{N\to\infty}\mathbb{P}\left(\bigcup\limits_{i=1}^NA_NB_i \right)\le \Phi(x)\{1-F(y)\}.
    	\end{equation*}
    	Likewise, we have
    	\begin{equation*}
    	\liminf\limits_{N\to\infty}\mathbb{P}\left(\bigcup\limits_{i=1}^NA_NB_i \right)\ge \Phi(x)\{1-F(y)\}.
    	\end{equation*}
    	Hence, the desired result follows.
    	
    	It remains to prove that the claim (\ref{C.1}) indeed holds.
    	
    	For each $t$, let $\bm{e}_{(1),t}=(e_{i_1,t},...,e_{i_d,t})'$, $\bm{e}_{(2),t}=(e_{i_{d+1},t},...,e_{i_N,t})'$, and for $k,l\in\{1,2\}$, let $\bm{\Sigma}_{kl}=\cov(\bm{e}_{(k),t},\bm{e}_{(l),t})$. By Lemma \ref{lemma6}, $\bm{e}_{(2),t}$ can be decomposed as $\bm{e}_{(2),t}=\bm{U}_t+\bm{V}_t$, where $\bm{U}_t=\bm{e}_{(2),t}-\bm{\Sigma}_{21}\bm{\Sigma}_{11}^{-1}e_{(1),t}$ and $\bm{V}_t=\bm{\Sigma}_{21}\bm{\Sigma}_{11}^{-1}\bm{e}_{(1),t}$ satisfying
    	\begin{equation*}
    	\bm{U}_t\sim N(\bm{0},\bm{\Sigma}_{22}-\bm{\Sigma}_{21}\bm{\Sigma}_{11}^{-1}\bm{\Sigma}_{12}),\quad\bm{V}_t\sim N(\bm{0},\bm{\Sigma}_{21}\bm{\Sigma}_{11}^{-1}\bm{\Sigma}_{12})\text{ and }
    	\bm{U}_t \text{ and } \bm{e}_{(1),t} \text{ are independent.}
    	\end{equation*}
    	Thus, we have
    	\begin{align*}
    	\varphi_{NT}
    	&=2N^{-1}T^{-1}\sum_{t=2}^T\sum_{s=1}^{t-1}\left(\bm{U}_t'\bm{U}_s+\bm{e}_{(1),t}'\bm{e}_{(1),s} +2\bm{V}_t'\bm{U}_s+\bm{V}_t'\bm{V}_s\right) h_th_s\\
    	&=:\varphi^*_1+\Theta_1+\Theta_2+\Theta_3,
    	\end{align*}
    	where $\varphi^*_1=2N^{-1}T^{-1}\sum_{t=2}^T\sum_{s=1}^{t-1}\bm{U}_t'\bm{U}_sh_th_s$, $\Theta_1=2N^{-1}T^{-1}\sum_{t=2}^T\sum_{s=1}^{t-1}\bm{e}_{(1),t}'\bm{e}_{(1),s} h_th_s$, $\Theta_2=2N^{-1}T^{-1}\sum_{t=2}^T\sum_{s=1}^{t-1}2\bm{V}_t'\bm{U}_s h_th_s$, $\Theta_2=2N^{-1}T^{-1}\sum_{t=2}^T\sum_{s=1}^{t-1}\bm{V}_t'\bm{V}_s h_th_s$ and $\varphi^*_2=\Theta_1+\Theta_2+\Theta_3$.
    	
    	We claim that for any $\epsilon>0$, $\exists$ a sequence of constants $c:=c_N>0$ with $c_N\to \infty$ s.t.
    	\begin{equation*}\label{C.2}
    	\mathbb{P}(|\Theta_k|\ge\epsilon\sigma_{NT})\le N^{-c}, k=1,2,3,\tag{C.2}
    	\end{equation*}
    	for sufficiently large $N$. Consequently, $\mathbb{P}(|\varphi^*_2/\sigma_{NT}|\ge\epsilon)\le N^{-c}$. Furthermore,
    	\begin{align*}
    	\mathbb{P}\left(A_N(x)B_{i_1}...B_{i_d}\right)
    	&\le\mathbb{P}\left(A_N(x)B_{i_1}...B_{i_d}, |\varphi^*_2/\sigma_{NT}|<\epsilon\right)+N^{-c}\\
    	&\le\mathbb{P}\left(|\varphi^*_1/\sigma_{NT}|<\epsilon+x, B_{i_1}...B_{i_d} \right)+N^{-c}\\
    	&=\mathbb{P}\left(|\varphi^*_1/\sigma_{NT}|<\epsilon+x \right)\mathbb{P}\left( B_{i_1}...B_{i_d} \right)+N^{-c}\\
    	&\le\left\lbrace \mathbb{P}\left(|\varphi^*_1/\sigma_{NT}|<\epsilon+x,|\varphi^*_2/\sigma_{NT}|<\epsilon  \right)+N^{-c}\right\rbrace \mathbb{P}\left( B_{i_1}...B_{i_d} \right)+N^{-c}\\
    	&\le \mathbb{P}\left(A_N(x+2\epsilon)\right)\mathbb{P}\left( B_{i_1}...B_{i_d} \right)+2N^{-c}.
    	\end{align*}
    	Likewise,
    	\begin{equation*}
    	\mathbb{P}\left(A_N(x)B_{i_1}...B_{i_d}\right)
    	\ge \mathbb{P}\left(A_N(x-2\epsilon)\right)\mathbb{P}\left( B_{i_1}...B_{i_d} \right)-2N^{-c}.
    	\end{equation*}
    	Hence,
    	\begin{equation*}
    	\left| \mathbb{P}\left(A_N(x)B_{i_1}...B_{i_d}\right)
    	-\mathbb{P}\left(A_N(x)\right)\mathbb{P}\left( B_{i_1}...B_{i_d} \right)\right|\le \Delta_{N,\epsilon}\cdot \mathbb{P}\left( B_{i_1}...B_{i_d}\right)+2N^{-c},
    	\end{equation*}
    	where
    	\begin{align*}
    	\Delta_{N,\epsilon}
    	&=\left|\mathbb{P}\left(A_N(x)\right)-\mathbb{P}\left(A_N(x+2\epsilon)\right) \right|+\left|\mathbb{P}\left(A_N(x)\right)-\mathbb{P}\left(A_N(x-2\epsilon)\right) \right|\\
    	&=\mathbb{P}\left(A_N(x+2\epsilon)\right)-\mathbb{P}\left(A_N(x-2\epsilon)\right)
    	\end{align*}
    	since $\mathbb{P}(A_N(x))$ is increasing in $x$. By running over all possible combinations of $1\le i_1<...<i_d\le N$, we have
    	\begin{equation*}
    	\zeta(N,d)\le \Delta_{N,\epsilon}\cdot H(N,d)+2\binom{N}{d}\cdot N^{-c}.
    	\end{equation*}
    	Since $\mathbb{P}(A_N(x))\to\Phi(x)$, we have $\lim_{\epsilon\downarrow 0}\limsup_{N\to\infty}\Delta_{N,\epsilon}=\lim_{\epsilon\downarrow 0}\{\Phi(x+2\epsilon)-\Phi(x-2\epsilon)\}=0$.
    	Since for each $d\ge 1$, $H(N,d)\to \pi^{-1/2}\exp(-dx/2)/d!$ as $N\to\infty$, we have $\limsup_{N\to\infty}H(N,d)<\infty$. Due to the fact that $\binom{N}{d}\le N^d$ for fixed $d\ge 1$, first sending $N\to\infty$ and then sending $\epsilon\downarrow0$, we get (\ref{C.1}).
    	
    	It remain to prove that the claim (\ref{C.2}) indeed holds.
    	
    	By (B.7) in \cite{Ma2018TestingAI}, $\exists$ constants $0<c_M<C_M<\infty$ s.t.
    	\begin{equation*}\label{C.3}
    	2c_M^2N^{-2}\tr(\bm{\Sigma}^2)\le\sigma_{NT}^2\le 2C_M^2N^{-2}\tr(\bm{\Sigma}^2)\{1+o(1)\}.\tag{C.3}
    	\end{equation*}
    	Define $\sigma_d^2=2c^2N^{-2}\tr(\bm{\Sigma}_{11}^2)$ with $c_M\le c\le C_M$, then
    	\begin{align*}
    	\mathbb{P}(|\Theta_1|\ge\epsilon\sigma_{NT})
    	&=\mathbb{P}\left( \left| 2N^{-1}T^{-1}\sum_{t=2}^T\sum_{s=1}^{t-1}\bm{e}_{(1),t}'\bm{e}_{(1),s} h_th_s\right| \ge\epsilon\sigma_{NT}\right)\\
    	&=\mathbb{P}\left( \left| \sigma_d^{-1}\cdot 2N^{-1}T^{-1} \sum_{t=2}^T\sum_{s=1}^{t-1}\bm{e}_{(1),t}'\bm{e}_{(1),s} h_th_s\right| \ge\epsilon'\sqrt{\frac{\tr(\bm{\Sigma}^2)}{\tr(\bm{\Sigma}_{11}^2)}}\right)\\
    	&\le\exp\left\lbrace -\epsilon' \sqrt{\frac{\tr(\bm{\Sigma}^2)}{\tr(\bm{\Sigma}_{11}^2)}} \right\rbrace  \cdot \mathbb{E}\left(\exp\left|\sigma_d^{-1}\cdot 2N^{-1}T^{-1} \sum_{t=2}^T\sum_{s=1}^{t-1}\bm{e}_{(1),t}'\bm{e}_{(1),s} h_th_s\right| \right)\\
    	&\le\exp\left\lbrace -\epsilon' \sqrt{\frac{\tr(\bm{\Sigma}^2)}{\tr(\bm{\Sigma}_{11}^2)}} \right\rbrace  \cdot \log(T),
    	\end{align*}
    	where the last inequality follows since
    	\begin{equation*}
    	\sigma_d^{-1}\cdot 2N^{-1}T^{-1} \sum_{t=2}^T\sum_{s=1}^{t-1}\bm{e}_{(1),t}'\bm{e}_{(1),s} h_th_s/\log\{\log(T)\}\to 0, \text{ a.s. }
    	\end{equation*}
    	by the law of the iterated logarithm of zero-mean square integrable martingale (see Theorem 4.8 in \citet{Hall1980MartingaleLT}). Similarly,
    	\begin{equation*}
    	\mathbb{P}(|\Theta_2|\ge\epsilon\sigma_{NT})
    	\le\exp\left\lbrace -\frac{\epsilon'}{2}\sqrt{\frac{\tr(\bm{\Sigma}^2)}{\tr(\bm{\Sigma}_{22\cdot 1}\bm{\Sigma}_{21}\bm{\Sigma}_{11}^{-1}\bm{\Sigma}_{12})}}\right\rbrace \cdot \log(T),
    	\end{equation*}
    	where $\bm{\Sigma}_{22\cdot 1}=\bm{\Sigma}_{22}-\bm{\Sigma}_{21}\bm{\Sigma}_{11}^{-1}\bm{\Sigma}_{12}$, and
    	\begin{equation*}
    	\mathbb{P}(|\Theta_3|\ge\epsilon\sigma_{NT})
    	\le\exp\left\lbrace -\frac{\epsilon'}{2}\sqrt{\frac{\tr(\bm{\Sigma}^2)}{\tr[(\bm{\Sigma}_{21}\bm{\Sigma}_{11}^{-1}\bm{\Sigma}_{12})^2]}}\right\rbrace  \cdot \log(T).
    	\end{equation*}
    	It is then easy to see that (\ref{C.2}) holds.
    	\\~\\\textbf{Step 2.}
    	Investigate the asymptotic independence of $M_{NT}$ and $(S_{NT}-\widehat{\mu}_{NT})/\widehat{\sigma}_{NT}$ when $\bm{e}_t$'s are sub-Gaussian.
    	
    	According to (\ref{C.3}), define
    	\begin{equation*}
    	W(\bm{e}_1,...,\bm{e}_T)=\sigma_s^{-1}\varphi_{NT}=\sigma_s^{-1}\left(2N^{-1}T^{-1}\sum_{t=2}^T\sum_{s=1}^{t-1}\bm{e}_t'\bm{e}_sh_th_s \right),
    	\end{equation*}
    	where $\sigma_s^2=2c^2N^{-2}\tr(\bm{\Sigma}^2)$ with $c_M\le c\le C_M$. For $\bm{X}=(x_1,...,x_q)'\in\mathbb{R}^q$, we consider a smooth approximation of the maximum function $\bm{X}\to \max_{1\le i\le q}x_i$, namely,
    	\begin{equation*}
    	F_{\beta}(\bm{X})=\beta^{-1}\log\left\lbrace\sum_{i=1}^q\exp(\beta x_i) \right\rbrace ,
    	\end{equation*}
    	where $\beta>0$ is the smoothing parameter that controls the level of approximation. An elementary calculation shows that $\forall \bm{X} \in\mathbb{R}^q$,
    	\begin{equation*}
    	0\le F_{\beta}(\bm{X})-\max_{1\le i\le q}x_i\le \beta^{-1}\log(q),
    	\end{equation*}
    	see \citet{Chernozhukov2013InferenceOC}. W.L.O.G. assume that $\sigma_{ii}=1$ for $i=1,...,N$. Define
    	\begin{equation*}
    	V(\bm{e}_1,...,\bm{e}_T)=F_{\beta}(\sqrt{\varphi_1},...,\sqrt{\varphi_N})=\beta^{-1}\log\left\lbrace\sum_{i=1}^N\exp\left(\beta T^{-1/2}\sum_{t=1}^Te_{it}h_t \right)  \right\rbrace .
    	\end{equation*}
    	By Lemma \ref{lemma5} and setting $\beta=T^{1/8}\log(N)$, it suffices to show that
    	\begin{equation*}
    	\mathbb{P}\left(W(\bm{e}_1,...,\bm{e}_T)\le x,V(\bm{e}_1,...,\bm{e}_T)\le \sqrt{2\log(N)-\log\{\log(N)\}+y} \right)\to\Phi(x)F(y).
    	\end{equation*}
    	Suppose $\{\bm{z}_1,...,\bm{z}_T\}$ are i.i.d. from $N(\bm{0},\bm{\Sigma})$, and are independent of $\{\bm{e}_1,...,\bm{e}_T\}$. According to the results of step 1, it remains to show that $(W(\bm{e}_1,...,\bm{e}_T), V(\bm{e}_1,...,\bm{e}_T))$ has the same limiting distribution as $(W(\bm{z}_1,...,\bm{z}_T),V(\bm{z}_1,...,\bm{z}_T))$.
    	
    	Let $\mathcal{C}_b^3(\mathbb{R})$ denote the class of bounded functions with bounded and continuous derivatives up to order 3. It is known that a sequence of random variables $\{Z_n\}_{n=1}^\infty$ converges weakly to a random variable $Z$ if and only if for every $f\in\mathcal{C}_b^3(\mathbb{R})$, $\mathbb{E}(f(Z_n))\to\mathbb{E}(f(Z))$, see, e.g. \citet{Pollard1984ConvergenceOS}. It suffices to show that
    	\begin{equation*}
    	\mathbb{E}\{f(W(\bm{e}_1,...,\bm{e}_T), V(\bm{e}_1,...,\bm{e}_T))\}-\mathbb{E}\{f(W(\bm{z}_1,...,\bm{z}_T),V(\bm{z}_1,...,\bm{z}_T))\}\to 0,
    	\end{equation*}
    	for every $f\in \mathcal{C}_b^3(\mathbb{R})$ as $(N,T)\to \infty$. We introduce $W_d=W(\bm{e}_1,...,\bm{e}_{d-1},\bm{z}_d,...,\bm{z}_T)$ and $V_d=V(\bm{e}_1,...,\bm{e}_{d-1},\bm{z}_d,...,\bm{z}_T)$ for $d=1,...,T+1$. Then
    	\begin{align*}
    	&|\mathbb{E}\{f(W(\bm{e}_1,...,\bm{e}_T), V(\bm{e}_1,...,\bm{e}_T))\}-\mathbb{E}\{f(W(\bm{z}_1,...,\bm{z}_T),V(\bm{z}_1,...,\bm{z}_T))\}|\\
    	\le&
    	\sum_{d=1}^T|\mathbb{E}\{f(W_d,V_d)\}-\mathbb{E}\{f(W_{d+1},V_{d+1})\}|.
    	\end{align*}
    	Let
    	\begin{align*}
    	&W_{d,0}=2N^{-1}T^{-1}\sigma_s^{-1}\left(\sum_{t=2}^{d-1}\sum_{s=1}^{t-1}\bm{e}_t'\bm{e}_sh_th_s+\sum_{t=d+2}^{T}\sum_{s=d+1}^{t-1}\bm{z}_t'\bm{z}_sh_th_s+\sum_{t=d+1}^{T}\sum_{s=1}^{d-1}\bm{z}_t'\bm{e}_sh_th_s \right),\\
    	&V_{d,0}=\beta^{-1}\log\left[\sum_{i=1}^N\exp\left\lbrace\beta T^{-1/2}\left(\sum_{t=1}^{d-1}e_{it}h_t +\sum_{t=d+1}^Tz_{it}h_t \right)  \right\rbrace  \right] ,
    	\end{align*}
    	which only rely on $\mathcal{F}_d=\sigma\{\bm{e}_1,...,\bm{e}_{d-1},\bm{z}_{d+1},...,\bm{z}_T\}$. By Taylor's expansion, we have
    	\begin{align*}
    	&f(W_d,V_d)-f(W_{d,0},V_{d,0})\\
    	=&f_1(W_{d,0},V_{d,0})(W_d-W_{d,0})+f_2(W_{d,0},V_{d,0})(V_d-V_{d,0})\\
    	&+\frac{1}{2}f_{11}(W_{d,0},V_{d,0})(W_d-W_{d,0})^2+\frac{1}{2}f_{22}(W_{d,0},V_{d,0})(V_d-V_{d,0})^2\\
    	&+\frac{1}{2}f_{12}(W_{d,0},V_{d,0})(W_d-W_{d,0})(V_d-V_{d,0})\\
    	&+O(|W_d-W_{d,0}|^3)+O(|V_d-V_{d,0}|^3),
    	\end{align*}
    	and
    	\begin{align*}
    	&f(W_{d+1},V_{d+1})-f(W_{d,0},V_{d,0})\\
    	=&f_1(W_{d,0},V_{d,0})(W_{d+1}-W_{d,0})+f_2(W_{d,0},V_{d,0})(V_{d+1}-V_{d,0})\\
    	&+\frac{1}{2}f_{11}(W_{d,0},V_{d,0})(W_{d+1}-W_{d,0})^2+\frac{1}{2}f_{22}(W_{d,0},V_{d,0})(V_{d+1}-V_{d,0})^2\\
    	&+\frac{1}{2}f_{12}(W_{d,0},V_{d,0})(W_{d+1}-W_{d,0})(V_{d+1}-V_{d,0})\\
    	&+O(|W_{d+1}-W_{d,0}|^3)+O(|V_{d+1}-V_{d,0}|^3),
    	\end{align*}
    	where $f=f(x,y)$, $f_1=\partial f/\partial x$, $f_2=\partial f/\partial y$, $f_{11}=\partial^2f/\partial x^2 $, $f_{22}=\partial^2f/\partial y^2 $ and $f_{12}=\partial^2f/\partial x\partial y $. Notice that
    	\begin{align*}\label{C.4}
    	&W_d-W_{d,0}=2N^{-1}T^{-1}\sigma_s^{-1}\left(\sum_{s=1}^{d-1}\bm{z}_d'\bm{e}_sh_dh_s+\sum_{t=d+1}^T\bm{z}_t'\bm{z}_dh_th_d \right),\text{ and}\\
    	&W_{d+1}-W_{d,0}=2N^{-1}T^{-1}\sigma_s^{-1}\left(\sum_{s=1}^{d-1}\bm{e}_d'\bm{e}_sh_dh_s+\sum_{t=d+1}^T\bm{z}_t'\bm{e}_dh_th_d \right). \tag{C.4}
    	\end{align*}
    	Due to $\mathbb{E}(\bm{e}_t)=\mathbb{E}(\bm{z}_t)=0$ and $\mathbb{E}(\bm{e}_t\bm{e}_t')=\mathbb{E}(\bm{z}_t\bm{z}_t')$, it can be verified that
    	\begin{align*}
    	&\mathbb{E}(W_d-W_{d,0}|\mathcal{F}_d)=\mathbb{E}(W_{d+1}-W_{d,0}|\mathcal{F}_d), \text{ and}\\
    	&\mathbb{E}\{(W_d-W_{d,0})^2|\mathcal{F}_d\}=\mathbb{E}\{(W_{d+1}-W_{d,0})^2|\mathcal{F}_d\}.
    	\end{align*}
    	Hence,
    	\begin{align*}
    	&\mathbb{E}\{f_1(W_d,V_{d,0})(W_{d+1}-W_{d,0})\}=\mathbb{E}\{f_1(W_{d,0},V_{d,0})(W_{d+1}-W_{d,0})\} \text{ and}\\
    	&\mathbb{E}\{f_{11}(W_{d,0},V_{d,0})(W_d-W_{d,0})^2\}=\mathbb{E}\{f_{11}(W_{d,0},V_{d,0})(W_{d+1}-W_{d,0})^2\}.
    	\end{align*}
    	
    	Next consider $V_d-V_{d,0}$. Let $v_{d,0,i}=T^{-1/2}\sum_{t=1}^{d-1}e_{it}h_t+T^{-1/2}\sum_{t=d+1}^Tz_{it}h_t$, $v_{d,i}=v_{d,0,i}+T^{-1/2}z_{id}h_d$, $v_{d+1,i}=v_{d,0,i}+T^{-1/2}e_{id}h_d$, $\bm{v}_{d,0}=(v_{d,0,1},...,v_{d,0,N})'$ and $\bm{v}_d=(v_{d,1},...,v_{d,N})'$. By Taylor's expansion, we have
    	\begin{align*}\label{C.5}
    	&V_d-V_{d,0}\\
    	=&\sum_{i=1}^N\partial_iF_\beta(\bm{v}_{d,0})(v_{d,i}-v_{d,0,i})+\frac{1}{2}\sum_{i=1}^N\sum_{j=1}^N\partial_i\partial_jF_\beta(\bm{v}_{d,0})(v_{d,i}-v_{d,0,i})(v_{d,j}-v_{d,0,j})\\
    	&+\frac{1}{6}\sum_{i=1}^N\sum_{j=1}^N\sum_{k=1}^N\partial_i\partial_j\partial_kF_\beta(\bm{v}_{d,0}+\delta(v_d-v_{d,0}))(v_{d,i}-v_{d,0,i})(v_{d,j}-v_{d,0,j})(v_{d,k}-v_{d,0,k}),\tag{C.5}
    	\end{align*}
    	for some $\delta\in(0,1)$. Again, due to $\mathbb{E}(\bm{e}_t)=\mathbb{E}(\bm{z}_t)=0$ and $\mathbb{E}(\bm{e}_t\bm{e}_t')=\mathbb{E}(\bm{z}_t\bm{z}_t')$, it can be verified that
    	\begin{equation*}
    	\mathbb{E}(v_{d,i}-v_{d,0,i}|\mathcal{F}_d)=\mathbb{E}(v_{d+1,i}-v_{d,0,i}|\mathcal{F}_d), \text{ and }
    	\mathbb{E}\{(v_{d,i}-v_{d,0,i})^2|\mathcal{F}_d\}=\mathbb{E}\{(v_{d+1,i}-v_{d,0,i})^2|\mathcal{F}_d\}.
    	\end{equation*}
    	By Lemma A.2 in \citet{Chernozhukov2013InferenceOC}, we have
    	\begin{equation*}
    	\left|\sum_{i=1}^N\sum_{j=1}^N\sum_{k=1}^N\partial_i\partial_j\partial_kF_\beta(\bm{v}_{d,0}+\delta(\bm{v}_d-\bm{v}_{d,0})) \right|\le C\beta^2
    	\end{equation*}
    	for some positive constant $C$. By Assumption \ref{A2.3} (\romannumeral2), we have $\mathbb{P}(\max_{1\le i\le N,1\le t\le T}|e_{it}|>C\log(NT))\to 0$, and since $z_{it}\sim N(0,1)$, $\mathbb{P}(\max_{1\le i\le N,1\le t\le T}|z_{it}|>C\log(NT))\to 0$. Hence,
    	\begin{align*}
    	&\left|\sum_{i=1}^N\sum_{j=1}^N\sum_{k=1}^N\partial_i\partial_j\partial_kF_\beta(\bm{v}_{d,0}+\delta(\bm{v}_d-\bm{v}_{d,0}))(v_{d,i}-v_{d,0,i})(v_{d,j}-v_{d,0,j})(v_{d,k}-v_{d,0,k}) \right|\\
    	\le& C\beta^2T^{-3/2}\log^3(NT)
    	\end{align*}
    	holds with probability approaching one. Consequently, we have with probability approaching one,
    	\begin{equation*}
    	\left|\mathbb{E}\{f_2(W_{d,0},V_{d,0})(V_d-V_{d,0})\}-\mathbb{E}\{f_2(W_{d,0},V_{d,0})(V_{d+1}-V_{d,0})\} \right|\le C\beta^2T^{-3/2}\log^3(NT).
    	\end{equation*}
    	Similarly, it can be verified that
    	\begin{equation*}
    	\left|\mathbb{E}\{f_{22}(W_{d,0},V_{d,0})(V_d-V_{d,0})^2\}-\mathbb{E}\{f_2(W_{d,0},V_{d,0})(V_{d+1}-V_{d,0})^2\} \right|\le C\beta^2T^{-3/2}\log^3(NT),
    	\end{equation*}
    	and
    	\begin{align*}
    	&\left|\mathbb{E}\{f_{12}(W_{d,0},V_{d,0})(W_d-W_{d,0})(V_d-V_{d,0})\}-\mathbb{E}\{f_{12}(W_{d,0},V_{d,0})(W_{d+1}-W_{d,0})(V_{d+1}-V_{d,0})\} \right|\\
    	\le& C\beta^2T^{-3/2}\log^3(NT).
    	\end{align*}
    	Again, Lemma A.2 in \citet{Chernozhukov2013InferenceOC}, together with  (\ref{C.5}), implies that  $\mathbb{E}(|V_d-V_{d,0}|^3)=O(T^{-3/2}\log^3(NT))$. According to (\ref{C.4}) and the proof of Lemma A.5 in \cite{Ma2018TestingAI}, we have $\mathbb{E}(|W_d-W_{d,0}|^4)=O(T^{-2})$, thus
    	\begin{equation*}
    	\sum_{d=1}^T\mathbb{E}(|W_d-W_{d,0}|^3)\le \sum_{d=1}^T\{\mathbb{E}(|W_d-W_{d,0}|^4)\}^{3/4}=O(T^{-1/2}).
    	\end{equation*}
    	Combining all facts together, we conclude that
    	\begin{equation*}
    	\sum_{d=1}^T|\mathbb{E}\{f(W_d,V_d)\}-\mathbb{E}\{f(W_{d+1},V_{d+1})\}|=O(\beta^2T^{-3/2}\log^3(NT))+O(T^{-1/2})\to 0,
    	\end{equation*}
    	as $(N,T)\to \infty$, due to $\log(N)=o(T^{1/4})$. Then the proof of Theorem \ref{T3.2} is complete.
    \end{proof}

    \subsection{Proof of Theorem \ref{T3.3}}
    \begin{proof}
    	It suffices to show that the conclusion holds for Gaussian $\bm{e}_t$'s. Using the notation given in (\ref{A.1}), we have
    	\begin{align*}
    	&M_{NT}=\max_{1\le i\le N}\widehat{\sigma}_{ii}^{-1}(\varphi_i+\zeta_{i1}+\zeta_{i2}+\zeta_{i3}+\zeta_{i4}+\zeta_{i5}) \text{ and }\\
    	&S_{NT}=N^{-1}\sum_{i=1}^N(\varphi_i+\zeta_{i1}+\zeta_{i2}+\zeta_{i3}+\zeta_{i4}+\zeta_{i5}).
    	\end{align*}
    	Under the alternative hypothesis given in (\ref{equ.11}), according to the proof of Theorem \ref{T2.1}, we have
    	\begin{align*}
    	M_{NT}
    	&=\max_{1\le i\le N}\sigma_{ii}^{-1}\left\lbrace \varphi_i+T^{-1}(\bm{1}_T'M_{\bm{Z}}\bm{1}_T)^2(\delta_i^0)^2 \right\rbrace +o_p(1)\\
    	&=\max_{i\in \mathcal{A}}\sigma_{ii}^{-1}\left\lbrace \varphi_i+T^{-1}(\bm{1}_T'M_{\bm{Z}}\bm{1}_T)^2(\delta_i^0)^2 \right\rbrace
    	+\max_{i\in \mathcal{A}^c}\sigma_{ii}^{-1} \varphi_i+o_p(1),
    	\end{align*}
    	where $\varphi_i=T^{-1}(\sum_{t=1}^Te_{it}h_t)^2$. In addition, by Lemma A.7 in \citet{Ma2018TestingAI}, we have
    	\begin{equation*}
    	(S_{NT}-\mu_{NT})/\sigma_{NT}=\sigma_{NT}^{-1}\varphi_{NT}+\sigma_{NT}^{-1}N^{-1}T^{-1}(\bm{1}_T'M_{\bm{Z}}\bm{1}_T)^2\sum_{i=1}^N(\delta_i^0)^2 +o_p(1),
    	\end{equation*}
    	where $\varphi_{NT}=2N^{-1}T^{-1}\sum_{t=2}^T\sum_{s=1}^{t-1}\bm{e}_t'\bm{e}_sh_th_s$. Define $\bm{e}_{\mathcal{A},t}=(e_{it},i\in\mathcal{A})'$ and $\bm{e}_{\mathcal{A}^c,t}=(e_{it},i\in\mathcal{A}^c)'$. Then, we rewrite
    	\begin{align*}
    	\varphi_{NT}
    	&=2N^{-1}T^{-1}\sum_{t=2}^T\sum_{s=1}^{t-1}\bm{e}_{\mathcal{A},t}'\bm{e}_{\mathcal{A},s}h_th_s
    	+2N^{-1}T^{-1}\sum_{t=2}^T\sum_{s=1}^{t-1}\bm{e}_{\mathcal{A}^c,t}'\bm{e}_{\mathcal{A}^c,s}h_th_s\\
    	&=:\varphi_{NT,\mathcal{A}}+\varphi_{NT,\mathcal{A}^c}.
    	\end{align*}
    	According to the proof of Theorem \ref{T3.2}, we have known that $\sigma_{NT}^{-1}\varphi_{NT,\mathcal{A}^c}$ and $\max_{i\in\mathcal{A}^c}\sigma_{ii}^{-1}\varphi_i$ are asymptotically independent. Hence, it suffices to show that $\sigma_{NT}^{-1}\varphi_{NT,\mathcal{A}^c}$ is asymptotically independent of $\varphi_i,i\in\mathcal{A}$.
    	
    	Define $\bm{\Sigma}_{\mathcal{A},\mathcal{A}^c}=\cov(\bm{e}_{\mathcal{A},t}, \bm{e}_{\mathcal{A}^c,t})$. By Lemma \ref{lemma6}, $\bm{e}_{\mathcal{A}^c,t}$ can be decomposed as $\bm{e}_{\mathcal{A}^c,t}=\bm{U}_t+\bm{V}_t$, where $\bm{U}_t=\bm{e}_{\mathcal{A}^c,t}-\bm{\Sigma}_{\mathcal{A}^c,\mathcal{A}}\bm{\Sigma}_{\mathcal{A},\mathcal{A}}^{-1}\bm{e}_{\mathcal{A},t}$ and $\bm{V}_t=\bm{\Sigma}_{\mathcal{A}^c,\mathcal{A}}\bm{\Sigma}_{\mathcal{A},\mathcal{A}}^{-1}\bm{e}_{\mathcal{A},t}$ satisfying that $\bm{U}_t\sim N(\bm{0},\bm{\Sigma}_{\mathcal{A}^c,\mathcal{A}^c}-\bm{\Sigma}_{\mathcal{A}^c,\mathcal{A}}\bm{\Sigma}_{\mathcal{A},\mathcal{A}}^{-1}\bm{\Sigma}_{\mathcal{A},\mathcal{A}^c})$, $\bm{V}_t\sim N(\bm{0},\bm{\Sigma}_{\mathcal{A}^c,\mathcal{A}}\bm{\Sigma}_{\mathcal{A},\mathcal{A}}^{-1}\bm{\Sigma}_{\mathcal{A},\mathcal{A}^c})$ and
    	\begin{equation*}\label{C.6}
    	\bm{U}_t \text{ and } \bm{e}_{\mathcal{A},t} \text{ are independent}.\tag{C.6}
    	\end{equation*}
    	Then, we have
    	\begin{equation*}
    	\varphi_{NT,\mathcal{A}^c}=2N^{-1}T^{-1}\sum_{t=2}^T\sum_{s=1}^{t-1} (\bm{U}_t'\bm{U}_s+2\bm{U}_t'\bm{V}_s +\bm{V}_t'\bm{V}_s) h_th_s.
    	\end{equation*}
    	By using the arguments similar to those in the proof of (\ref{C.2}), we have
    	\begin{align*}
    	&\mathbb{P}\left(2N^{-1}T^{-1}\sum_{t=2}^T\sum_{s=1}^{t-1} 2\bm{U}_t'\bm{V}_sh_th_s\ge \epsilon \sigma_{NT} \right)
    	\le \log(T)\exp(-c_{\epsilon}N^{1/2}|\mathcal{A}|^{-1/2})\to 0 \text{ and}\\
    	&\mathbb{P}\left(2N^{-1}T^{-1}\sum_{t=2}^T\sum_{s=1}^{t-1} \bm{V}_t'\bm{V}_sh_th_s\ge \epsilon \sigma_{NT} \right)
    	\le \log(T)\exp(-c_{\epsilon}N^{1/2}|\mathcal{A}|^{-1/2})\to 0,
    	\end{align*}
    	due to $|\mathcal{A}|=o (N/[\log\{\log(N)\}]^2)$ and $\log(N)=o(T^{1/4})$. Consequently, we conclude that
    	\begin{equation*}
    	\sigma_{NT}^{-1}\varphi_{NT,\mathcal{A}^c}=\sigma_{NT}^{-1}\cdot 2N^{-1}T^{-1}\sum_{t=2}^T\sum_{s=1}^{t-1} \bm{U}_t'\bm{U}_sh_th_s+o_p(1),
    	\end{equation*}
    	which, together with Lemma \ref{lemma5} and (\ref{C.6}), implies that $\sigma_{NT}^{-1}\varphi_{NT,\mathcal{A}^c}$ is asymptotically independent of $e_{it},i\in\mathcal{A}$. Hence, Theorem \ref{T3.3} follows.
    \end{proof}

    \subsection{Some useful facts}
    \begin{lemma}
    	\label{lemma1}
    	(Bonferroni inequality) Let $A=\cup_{t=1}^nA_t$. For any integer $k$, where $1\leq k\leq n/2$, we have
    	\begin{equation*}
    	\sum\nolimits_{t=1}^{2k}(-1)^{t-1}E_t\leq \mathbb{P}(A)\leq\sum\nolimits_{t=1}^{2k-1}(-1)^{t-1}E_t,
    	\end{equation*}
    	where $E_t=\sum_{1\leq i_1<...< i_t\leq n}\mathbb{P}(A_{i_1}\cap...\cap A_{i_t})$.
    \end{lemma}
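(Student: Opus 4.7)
The plan is to prove Lemma \ref{lemma1} by reducing the claim to a pointwise combinatorial inequality about indicator functions and then establishing the latter via a clean identity for partial alternating sums of binomial coefficients.

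First, I would express both sides as expectations. Write $\mathbb{P}(A)=\mathbb{E}[\mathbf{1}_A]$, and observe that for each $\omega\in\Omega$,
\begin{equation*}
\sum_{1\leq i_1<\cdots<i_t\leq n}\mathbf{1}_{A_{i_1}\cap\cdots\cap A_{i_t}}(\omega)=\binom{m(\omega)}{t},
\end{equation*}
where $m(\omega)=\#\{t:\omega\in A_t\}$. Taking expectations gives $E_t=\mathbb{E}[\binom{m}{t}]$ and $\mathbf{1}_A(\omega)=\mathbf{1}\{m(\omega)\geq 1\}$. Hence it suffices to prove the deterministic pointwise bound
\begin{equation*}
\sum_{t=1}^{2k}(-1)^{t-1}\binom{m}{t}\;\leq\;\mathbf{1}\{m\geq 1\}\;\leq\;\sum_{t=1}^{2k-1}(-1)^{t-1}\binom{m}{t}
\end{equation*}
for every integer $m\geq 0$ and $1\leq k\leq n/2$; integrating then yields the lemma. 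The case $m=0$ is trivial since every binomial term vanishes.

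For $m\geq 1$, the key is the identity
\begin{equation*}
\sum_{t=0}^{r}(-1)^{t}\binom{m}{t}=(-1)^{r}\binom{m-1}{r},
\end{equation*}
which I would prove by induction on $r$ using Pascal's rule $\binom{m}{t}=\binom{m-1}{t}+\binom{m-1}{t-1}$ (telescoping the partial sum). Rearranging gives
\begin{equation*}
\sum_{t=1}^{r}(-1)^{t-1}\binom{m}{t}=1-(-1)^{r}\binom{m-1}{r}.
\end{equation*}
Specializing to $r=2k$ yields a value $1-\binom{m-1}{2k}\leq 1$, while $r=2k-1$ yields $1+\binom{m-1}{2k-1}\geq 1$, since binomial coefficients are nonnegative (with the convention $\binom{m-1}{r}=0$ when $r>m-1$). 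This is exactly the desired pointwise sandwich.

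There is essentially no obstacle here: the argument is purely combinatorial, and the only subtlety is the bookkeeping in the induction for the partial-sum identity. The assumption $2k\leq n$ is used only to guarantee that the truncation indices remain within the admissible range $1\leq t\leq n$, so that the quantities $E_t$ appearing in the statement are all well defined.
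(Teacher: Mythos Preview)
Your argument is correct: reducing to the pointwise inequality via $m(\omega)=\#\{t:\omega\in A_t\}$ and then applying the partial alternating-sum identity $\sum_{t=0}^{r}(-1)^{t}\binom{m}{t}=(-1)^{r}\binom{m-1}{r}$ is a clean and standard route to the Bonferroni bounds, and every step you outline goes through as stated.

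By way of comparison, the paper does not actually supply a proof of this lemma; it simply cites Lemma~1 of \citet{TonyCai2014TwosampleTO}. Your self-contained combinatorial derivation therefore adds content beyond what the paper provides, and is in fact the classical proof of this well-known inequality.
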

    \begin{proof}
    	See Lemma 1 in \citet{TonyCai2014TwosampleTO}.
    \end{proof}

    \begin{lemma}
    	\label{lemma2}
    	Let $(Z_1,...,Z_N)'$ be a zero mean multivariate normal random vector with covariance matrix $\bm{\Sigma}=(\sigma_{ij})_{N\times N}$ and diagonal $\sigma_{ii}=1$ for $1\leq i\leq N$. Suppose that $\max\nolimits_{1\leq i<j\leq N}|\sigma_{ij}|\leq r<1$ and $\max\nolimits_{1\leq j\leq N}\sum\nolimits_{i=1}^N\sigma_{ij}^2\leq c$ for some $r$ and $c$. Then foe any $x\in\mathbb{R}$ as $N\to \infty$,
    	\begin{equation*}
    	\mathbb{P}\left[\max\nolimits_{1\leq i\leq N}Z_i^2-2\log(N)+\log\{\log(N)\}\leq x\right]\to\exp\left\lbrace-\pi^{-1/2}\exp(-x/2) \right\rbrace.
    	\end{equation*}
    \end{lemma}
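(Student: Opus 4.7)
\textbf{Proof proposal for Lemma \ref{lemma2}.}
The plan is a standard Bonferroni-and-partitioning attack on the tail $\mathbb{P}(\max_{i}|Z_i|\ge y_N)$ where $y_N=\sqrt{2\log N-\log\{\log N\}+x}$, so that the event $\{Z_i^2\le 2\log N-\log\log N+x\text{ for all }i\}$ is the complement of $\bigcup_i A_i$ with $A_i=\{|Z_i|\ge y_N\}$. The target is
$$\mathbb{P}\Big(\bigcup_{i=1}^N A_i\Big)\longrightarrow 1-\exp\{-\pi^{-1/2}\exp(-x/2)\}.$$
Since $1-F(x)=\sum_{t=1}^{\infty}(-1)^{t-1}\pi^{-t/2}\exp(-tx/2)/t!$, it suffices to show that for every fixed $t\ge 1$,
$$E_t:=\sum_{1\le i_1<\cdots<i_t\le N}\mathbb{P}(|Z_{i_1}|\ge y_N,\ldots,|Z_{i_t}|\ge y_N)\longrightarrow \frac{1}{t!}\pi^{-t/2}\exp(-tx/2),$$
and then invoke Lemma \ref{lemma1} with $k\to\infty$.

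First, I would record the sharp marginal estimate: by Mills' ratio, $\mathbb{P}(|Z_i|\ge y_N)=\{1+o(1)\}\pi^{-1/2}N^{-1}\exp(-x/2)$, which already pins down the expected size of each term. The main step is then to partition the $t$-tuples by the strength of their pairwise correlations. Fix a small $\gamma>0$, and split the collection of $t$-tuples into $\mathcal{I}^c$ (those with $\max_{k<l}|\sigma_{i_ki_l}|<N^{-\gamma}$) and $\mathcal{I}=\bigcup_{d=1}^{t-1}\mathcal{I}_d$, where $\mathcal{I}_d$ consists of tuples whose largest pairwise weakly correlated sub-tuple has cardinality $d$. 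Using the hypothesis $\max_j\sum_i\sigma_{ij}^2\le c$, one observes that for any fixed index the number of indices with $|\sigma_{ij}|\ge N^{-\gamma}$ is at most $cN^{2\gamma}$, so $|\mathcal{I}_d|\le (c_4t)^tN^{d+2\gamma t}$ and $|\mathcal{I}^c|=\{1+o(1)\}\binom{N}{t}$ provided $\gamma<(2t)^{-1}$.

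On $\mathcal{I}^c$, every pairwise correlation is at most $N^{-\gamma}$, so the standard Gaussian comparison (a Gaussian analogue of Slepian's lemma, or a direct computation using the density representation of the joint tail as in Cai–Liu–Xia 2014, eq.~(20)) yields
$$\mathbb{P}(|Z_{i_1}|\ge y_N,\ldots,|Z_{i_t}|\ge y_N)=\{1+o(1)\}\pi^{-t/2}N^{-t}\exp(-tx/2)$$
uniformly over $\mathcal{I}^c$; summing over the $\{1+o(1)\}\binom{N}{t}$ such tuples gives the required $\pi^{-t/2}\exp(-tx/2)/t!$. On $\mathcal{I}_d$, the bound $|\sigma_{i_ki_l}|\le r<1$ and the Berman-type inequality give a pairwise joint tail of order $N^{-2/(1+r)}$; combining this with the cardinality $|\mathcal{I}_d|\le(c_4t)^tN^{d+2\gamma t}$ and a factorization over the $d$ weakly correlated coordinates leads to a contribution of order $N^{d+2\gamma t}\cdot N^{-d}\cdot N^{-2/(1+r)+2(t-d)/(1+r)\cdot(\text{correction})}$, which with $\gamma$ chosen small enough (depending on $t$ and $r$) is $o(1)$. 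Thus $\sum_{d=1}^{t-1}\sum_{\mathcal{I}_d}P_{i_1,\ldots,i_t}\to 0$.

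The anticipated sticking point is the strongly correlated regime, specifically $\mathcal{I}_1$ where all pairs within the $t$-tuple are correlated at level $\ge N^{-\gamma}$. The naive pairwise Berman bound only gives decay $N^{-2/(1+r)}$ per joint probability, which is not by itself small enough once multiplied by $|\mathcal{I}_1|\le(c_4t)^tN^{1+2\gamma t}$; one must choose $\gamma<(1-r)/\{2t(1+r)\}$ to make the exponent negative. This is the trickiest bookkeeping, and it is where the hypothesis $r<1$ (rather than merely $r\le 1$) is essential. Once $\gamma$ is fixed with this constraint for the prescribed $t$, both $\mathcal{I}^c$ and $\mathcal{I}$ are under control, so Bonferroni sandwiching yields the Gumbel-type limit after letting $k\to\infty$.
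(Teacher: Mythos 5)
Your sketch is correct and is essentially the paper's own route: the paper proves this lemma simply by citing Lemma 6 of \citet{TonyCai2014TwosampleTO}, whose argument is exactly the Bonferroni sandwich plus the correlation-based partition into $\mathcal{I}^c$ and $\mathcal{I}_d$ with the weak-correlation comparison (their eq.~(20)) and the Berman-type bound for strongly correlated pairs (their eq.~(21)) — the same machinery the paper itself replays in Step 1.1 of the proof of Theorem \ref{T2.1}. Your intermediate exponent expression for the $\mathcal{I}_d$ contribution is garbled (the clean bookkeeping is $N^{d+2\gamma t}\cdot N^{-(d-1)}\cdot N^{-2/(1+r)+o(1)}$), but the final constraint $\gamma<(1-r)/\{2t(1+r)\}$ is the right one, so the plan goes through as in the cited proof.
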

    \begin{proof}
    	See Lemma 6 in \citet{TonyCai2014TwosampleTO}.
    \end{proof}
    \begin{lemma}
    	\label{lemma3}
    	Define $\rho_{i0t}=\delta_i(t/T)-\bm{\lambda}_{0,i0}'\widetilde{\bm{B}}(t/T)$ and $\rho_{ijt}=\beta_{ij}(t/T)-\bm{\lambda}_{0,ij}'\bm{B}(t/T)$ for $1\leq j\leq d$ and $1\leq i\leq N$. Then, under Assumption \ref{A2.1}, there exist $\bm{\lambda}_{0,i0}\in \mathbb{R}^L$ and $\bm{\lambda}_{0,ij}\in \mathbb{R}^L$ such that
    	\begin{equation*}
    	\sup\nolimits_{1\leq t\leq T}|\rho_{i0t}|=O(L^{-r}) \text{ and } \sup\nolimits_{1\leq t\leq T}|\rho_{ijt}|=O(L^{-r}) \text{ as } T\to\infty.
    	\end{equation*}
    \end{lemma}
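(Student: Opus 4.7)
The plan is to invoke the classical Jackson-type approximation theorem for polynomial splines, namely that for any function $g\in\mathcal{H}_r$ on $[0,1]$ and any B-spline basis $\{B_l\}_{l=1}^L$ of order $q\ge r$ built on a quasi-uniform knot sequence (which is precisely the mesh condition imposed on the $\ell_j$ in Section \ref{sec:max}), there exist coefficients $\bm{c}\in\mathbb{R}^L$ with
\begin{equation*}
\sup_{u\in[0,1]}|g(u)-\bm{c}'\bm{B}(u)|=O(L^{-r}).
\end{equation*}
This is the content of Theorem 6.27 in \citet{Schumaker1981SplineFB} (see also \citet{Boor1978APG}). Since the original functions $\beta_{ij}(\cdot)$ and $\delta_i(\cdot)$ belong to $\mathcal{H}_r$ by Assumption \ref{A2.1}, everything reduces to applying this theorem and then dealing with the centering used in $\widetilde{\bm{B}}(\cdot)$.

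First I would apply the theorem directly to $\beta_{ij}(\cdot)\in\mathcal{H}_r$, producing $\bm{\lambda}_{0,ij}$ with $\sup_{u\in[0,1]}|\beta_{ij}(u)-\bm{\lambda}_{0,ij}'\bm{B}(u)|=O(L^{-r})$. Evaluating at the grid points $u=t/T$ for $t=1,\dots,T$ gives the second conclusion, $\sup_{1\le t\le T}|\rho_{ijt}|=O(L^{-r})$, and this bound is uniform because the implicit constant depends only on the Hölder constant of $\beta_{ij}$, which is uniformly controlled by Assumption \ref{A2.1}.

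For the first conclusion, I would pick $\bm{\lambda}_{0,i0}$ to be the approximation coefficients for $\delta_i(\cdot)\in\mathcal{H}_r$ in the \emph{uncentered} basis, i.e.\ $\sup_{u\in[0,1]}|\delta_i(u)-\bm{\lambda}_{0,i0}'\bm{B}(u)|=O(L^{-r})$, and then transfer the bound to the centered basis. By definition
\begin{equation*}
\bm{\lambda}_{0,i0}'\widetilde{\bm{B}}(t/T)=\bm{\lambda}_{0,i0}'\bm{B}(t/T)-T^{-1}\sum_{s=1}^T\bm{\lambda}_{0,i0}'\bm{B}(s/T).
\end{equation*}
Averaging the uniform approximation inequality for $\delta_i$ over $s=1,\dots,T$ gives $T^{-1}\sum_{s=1}^T\bm{\lambda}_{0,i0}'\bm{B}(s/T)=T^{-1}\sum_{s=1}^T\delta_i(s/T)+O(L^{-r})$, and $T^{-1}\sum_{s=1}^T\delta_i(s/T)=0$ by the very definition of $\delta_i(\cdot)$ (or is at worst $O(T^{-1})=o(L^{-r})$ if one interprets the sum as a Riemann approximation to an integral that already vanishes). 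Subtracting gives $\delta_i(t/T)-\bm{\lambda}_{0,i0}'\widetilde{\bm{B}}(t/T)=O(L^{-r})$ uniformly in $t$, which is the required bound.

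There is no genuine obstacle here; the only point one needs to keep an eye on is that the spline order $q$ is compatible with the smoothness index $r$ (one needs $q\ge r$, and since $r>3/2$ by Assumption \ref{A2.1} this is guaranteed by the standing choice $q\ge 2$ in the paper's implementation). The rest is a direct transcription of Schumaker's uniform approximation estimate, followed by a one-line algebraic manipulation to accommodate the centering of the basis.
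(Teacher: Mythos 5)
Your proof is correct and takes essentially the same route as the paper, which simply defers to Lemma A.1 of \citet{Ma2018TestingAI}; that lemma is proved by exactly this standard Jackson-type spline approximation bound (Schumaker/de Boor) plus the observation that the centering causes no loss because $T^{-1}\sum_{t=1}^T\delta_i(t/T)=0$ holds exactly by the construction of $\delta_i(\cdot)$. The only imprecision is your closing remark that $r>3/2$ together with $q\ge 2$ ensures $q\ge r$ --- it does not, since Assumption \ref{A2.1} only bounds $r$ from below; what is needed (and implicitly assumed here, as in \citet{Ma2018TestingAI}) is that the spline order $q$ be chosen at least as large as the smoothness index $r$, otherwise the rate saturates at $O(L^{-q})$.
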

    \begin{proof}
    	See Lemma A.1 in \citet{Ma2018TestingAI}.
    \end{proof}

    \begin{lemma}
    	\label{lemma4}
    	Under Assumption \ref{A2.2}, $\exists$ constants $0<c_1\leq C_1<\infty$ and $0>C_2<\infty$, with probability 1,
    	\begin{equation*}
    	c_1L^{-1}\leq \lambda_{\min}(\bm{Z}'\bm{Z}/T)\leq \lambda_{\max}(\bm{Z}'\bm{Z}/T) \leq C_1L^{-1},
    	\end{equation*}
    	\begin{equation*}
    	C_1L\leq \lambda_{\min}\{(\bm{Z}'\bm{Z}/T)^{-1}\}\leq \lambda_{\max}\{(\bm{Z}'\bm{Z}/T)^{-1}\} \leq c_1L,
    	\end{equation*}
    	as $T\to \infty$, and for any onozero vector $\bm{a}\in \mathbb{R}^T$ with $||\bm{a}||=1$, $\bm{a}'(\bm{Z}'\bm{Z}/T)\bm{a}\leq C_2L^{-1}$.
    \end{lemma}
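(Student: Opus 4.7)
The plan is to reduce the eigenvalue bounds to two standard ingredients: classical spectral estimates for the B-spline Gram matrix and a strong-mixing law of large numbers for the factor process. Concretely, for an arbitrary unit vector $\bm{a} = (\bm{a}_0', \bm{a}_1', \dots, \bm{a}_d')' \in \mathbb{R}^{(1+d)L}$, I would first rewrite the quadratic form as
\begin{equation*}
\bm{a}'(\bm{Z}'\bm{Z}/T)\bm{a} = T^{-1}\sum_{t=1}^T G(t/T)\bm{\xi}_t\bm{\xi}_t' G(t/T)',
\end{equation*}
where $G(x) = (g_0(x), g_1(x), \dots, g_d(x))$ with $g_0(x) = \bm{a}_0'\widetilde{\bm{B}}(x)$, $g_j(x) = \bm{a}_j'\bm{B}(x)$ for $j \geq 1$, and $\bm{\xi}_t = (1, \bm{f}_t')'$. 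This puts the problem into the form of a weighted partial sum of the random factor outer product.

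Next, I would split $\bm{\xi}_t\bm{\xi}_t' = \bm{\Omega} + (\bm{\xi}_t\bm{\xi}_t' - \bm{\Omega})$ with $\bm{\Omega} = \mathbb{E}[\bm{\xi}_t\bm{\xi}_t']$. By Assumption \ref{A2.2}(\romannumeral1), the leading piece satisfies
\begin{equation*}
c_0 \cdot T^{-1}\sum_t \|G(t/T)\|^2 \;\leq\; T^{-1}\sum_t G(t/T)\bm{\Omega}G(t/T)' \;\leq\; C_0 \cdot T^{-1}\sum_t \|G(t/T)\|^2.
\end{equation*}
For each $j$, I would invoke the classical fact (as used by \citet{Demko1986SpectralBF}) that the B-spline Gram matrix $T^{-1}\sum_t \bm{B}(t/T)\bm{B}(t/T)'$ has eigenvalues of order exactly $L^{-1}$, so that $T^{-1}\sum_t g_j(t/T)^2 \asymp L^{-1}\|\bm{a}_j\|^2$. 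The centering that converts $\bm{B}$ into $\widetilde{\bm{B}}$ only subtracts the rank-one matrix $\bar{\bm{B}}\bar{\bm{B}}'$ with $\bar{\bm{B}} = T^{-1}\sum_s \bm{B}(s/T)$, and this perturbation does not affect the lower order of the eigenvalues once combined with the $\|\bm{a}_0\|^2$ contribution in the global quadratic form. Summing over $j$ yields $T^{-1}\sum_t \|G(t/T)\|^2 \asymp L^{-1}\|\bm{a}\|^2$, hence the ``main'' term is already pinched between constant multiples of $L^{-1}$.

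The remaining step is to prove the uniform bound
\begin{equation*}
\sup_{\|\bm{a}\|=1} \Bigl|T^{-1}\sum_{t=1}^T G(t/T)(\bm{\xi}_t\bm{\xi}_t' - \bm{\Omega})G(t/T)'\Bigr| = o_{\text{a.s.}}(L^{-1}).
\end{equation*}
This reduces to controlling, for each pair $(j,k)$, the spectral norm of the $L \times L$ matrix $T^{-1}\sum_t\{\xi_{jt}\xi_{kt} - \mathbb{E}(\xi_{jt}\xi_{kt})\}\bm{B}(t/T)\bm{B}(t/T)'$. Because B-splines are locally supported, this is a banded matrix (at most $O(1)$ nonzero entries per row), and each of its entries is a centered partial sum of the strong-mixing sequence $\{\xi_{jt}\xi_{kt}\}$ having a finite $(2+\kappa)$-moment by Assumptions \ref{A2.2}(\romannumeral2)-(\romannumeral3). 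A Bernstein-type inequality for strong mixing sequences (the variant used in Lemma A.8 of \citet{Ma2011SplinebackfittedKS}) combined with a union bound over the $O(L^2)$ entries gives an a.s.\ max-entry rate of $O(\log(T)/\sqrt{TL})$; the banded structure then upgrades this to an operator-norm bound of the same order, which is $o(L^{-1})$ under $L^3/T = o(1)$.

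The main obstacle is precisely this uniform-in-$\bm{a}$ control of the remainder: pointwise LLN for each fixed $\bm{a}$ is immediate, but promoting it to a spectral-norm estimate needs both the banded structure of the B-spline Gram matrix and a sufficiently sharp concentration inequality for strong-mixing partial sums. Once that bound is in hand, the two-sided eigenvalue inequality for $\bm{Z}'\bm{Z}/T$ follows by combining the three steps; inversion gives the corresponding bounds on $(\bm{Z}'\bm{Z}/T)^{-1}$, and the final statement $\bm{a}'(\bm{Z}'\bm{Z}/T)\bm{a} \leq C_2 L^{-1}$ is simply the upper eigenvalue bound restated.
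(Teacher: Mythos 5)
You should first be aware that the paper does not actually prove Lemma \ref{lemma4}: its ``proof'' is a one-line citation to Lemma A.2 of \citet{Ma2018TestingAI}. So your write-up is necessarily an independent reconstruction. Its skeleton --- sandwiching the factor second moment $\mathbb{E}\{(1,\bm{f}_t')'(1,\bm{f}_t')\}$ between $c_0$ and $C_0$, reducing the main term to the B-spline Gram matrix whose eigenvalues are of exact order $L^{-1}$, and killing the fluctuation term $T^{-1}\sum_t\{\bm{\xi}_t\bm{\xi}_t'-\bm{\Omega}\}$ with a banded-structure-plus-mixing-Bernstein argument in the style of Lemma A.8 of \citet{Ma2011SplinebackfittedKS} --- is the standard route for results of this type and matches the tools this paper uses elsewhere (e.g.\ the bound $\|\bm{Z}'\bm{1}_T/T-\mathbb{E}(\bm{Z}'\bm{1}_T/T)\|_\infty=O_{\text{a.s.}}\{\log(T)/\sqrt{TL}\}$ in Step~1.2 of the proof of Theorem \ref{T2.1}).

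There is, however, a genuine gap at the one point you wave off: the centering of the first block. Normalized B-splines form a partition of unity, $\sum_{l=1}^L B_l(x)=1$, so $\sum_{l=1}^L\widetilde{B}_l(t/T)=1-T^{-1}\sum_{s=1}^T 1=0$ for every $t$. Hence the direction $\bm{a}=(\bm{1}_L'/\sqrt{L},\bm{0}',\dots,\bm{0}')'$ satisfies $\bm{a}'\bm{Z}_t=0$ for all $t$, giving $\bm{a}'(\bm{Z}'\bm{Z}/T)\bm{a}=0$ exactly. Your claim that the rank-one centering perturbation ``does not affect the lower order of the eigenvalues once combined with the $\|\bm{a}_0\|^2$ contribution'' is therefore not just unproven but false for the basis as literally defined, and no interplay with the other blocks can repair it, since the offending direction lives entirely in the first block. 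Any correct proof of the lower bound $c_1L^{-1}$ must handle this identification issue explicitly --- e.g.\ by working with a centered spline space of dimension $L-1$ (dropping one basis function), which is how constructions of this kind are made rigorous in the cited source --- and your argument would fail at precisely this step. Two secondary points: the order-$L^{-1}$ spectral bound for the B-spline Gram matrix is de Boor/Schumaker stability, not \citet{Demko1986SpectralBF} (Demko supplies the $\|A^{-1}\|_\infty$ bound for banded matrices used elsewhere in the paper); and since Assumption \ref{A2.2} only provides polynomial moments for $\bm{f}_t$, the almost-sure entrywise rate $O\{\log(T)/\sqrt{TL}\}$ requires a truncation step before any Bernstein-type inequality for mixing sequences applies --- fixable, but not automatic as written.
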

    \begin{proof}
    	See Lemma A.2 in \citet{Ma2018TestingAI}.
    \end{proof}

    \begin{lemma}
    	\label{lemma5}
    	Let $\{(U,U_N,\widetilde{U}_N)\in\mathbb{R}^3;N\geq 1\}$ and $\{(V,V_N,\widetilde{V}_N)\in\mathbb{R}^3;N\geq 1\}$ be two sequences of random variables with $U_N\stackrel{d}{\to} U$ and $V_N\stackrel{d}{\to} V$ as $N\to\infty$. Assume $U$ and $V$ are continuous random variables. We assume that
    	\begin{equation*}
    	\widetilde{U}_N=U_N+o_p(1) \text{ and } \widetilde{V}_N=V_N+o_p(1).
    	\end{equation*}
    	If $U_N$ and $V_N$ are asymptotically independent, then $\widetilde{U}_N$ and $\widetilde{V}_N$ are also asymptotically independent.
    \end{lemma}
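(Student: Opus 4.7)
The plan is a standard Slutsky-type sandwich argument. I would prove that for every $(x,y)\in \mathbb{R}^2$,
\[
\mathbb{P}(\widetilde{U}_N\le x,\widetilde{V}_N\le y)\to F_U(x)F_V(y),
\]
where $F_U$ and $F_V$ are the distribution functions of $U$ and $V$. Since $U$ and $V$ are continuous, every $(x,y)$ is a continuity point of the product law $F_U\otimes F_V$, so establishing the display above at every real pair is precisely the definition of asymptotic independence of $\widetilde{U}_N$ and $\widetilde{V}_N$. The hypothesis gives $\mathbb{P}(U_N\le x, V_N\le y)\to F_U(x)F_V(y)$ for every $(x,y)$, and the $o_p(1)$ perturbations are absorbed in the standard way.

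Fix $\epsilon>0$ and $(x,y)$. Using the elementary event inclusion
\[
\{\widetilde{U}_N\le x,\widetilde{V}_N\le y\}\subseteq \{U_N\le x+\epsilon,V_N\le y+\epsilon\}\cup\{|\widetilde{U}_N-U_N|>\epsilon\}\cup\{|\widetilde{V}_N-V_N|>\epsilon\},
\]
together with its symmetric counterpart bounding the event in the other direction, I would obtain
\[
\mathbb{P}(U_N\le x-\epsilon,V_N\le y-\epsilon)-r_N\le \mathbb{P}(\widetilde{U}_N\le x,\widetilde{V}_N\le y)\le \mathbb{P}(U_N\le x+\epsilon,V_N\le y+\epsilon)+r_N,
\]
where $r_N:=\mathbb{P}(|\widetilde{U}_N-U_N|>\epsilon)+\mathbb{P}(|\widetilde{V}_N-V_N|>\epsilon)\to 0$ by the $o_p(1)$ assumption. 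Sending $N\to\infty$ and invoking the asymptotic independence of $(U_N,V_N)$ at the (continuity) points $(x\pm\epsilon,y\pm\epsilon)$ yields
\[
F_U(x-\epsilon)F_V(y-\epsilon)\le \liminf_{N\to\infty}\mathbb{P}(\widetilde{U}_N\le x,\widetilde{V}_N\le y)\le \limsup_{N\to\infty}\mathbb{P}(\widetilde{U}_N\le x,\widetilde{V}_N\le y)\le F_U(x+\epsilon)F_V(y+\epsilon).
\]
Letting $\epsilon\downarrow 0$ and using the continuity of $F_U$ and $F_V$, both extremes collapse to $F_U(x)F_V(y)$, which completes the argument.

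There is no genuine obstacle in this proof; it is a direct joint extension of Slutsky's theorem. The only mild subtlety worth flagging is that the continuity of the marginal limits is used twice: once to guarantee that $x\pm\epsilon$ and $y\pm\epsilon$ are continuity points of $F_U$ and $F_V$ (so the hypothesis applies at those shifted points), and once in the final $\epsilon\downarrow 0$ step to identify the common limit. Without the continuity assumption one would need to restrict to continuity points throughout, which is the reason the hypothesis is explicitly included in the lemma.
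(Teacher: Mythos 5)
Your proof is correct. The paper gives no argument of its own for this lemma---it simply defers to Lemma 7.10 of \citet{Feng2022AsymptoticIO}---and your Slutsky-type sandwich (the two-sided event inclusions with the remainder $r_N\to 0$, followed by $\epsilon\downarrow 0$ using the continuity of $F_U$ and $F_V$ both to apply the hypothesis at the shifted points and to identify the limit) is precisely the standard argument behind the cited result, so there is nothing substantive to compare beyond noting that your version is self-contained.
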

    \begin{proof}
    	See Lemma 7.10 in \citet{Feng2022AsymptoticIO}.
    \end{proof}

    \begin{lemma}
    	\label{lemma6}
    	Let $\bm{X}\sim N(\bm{\mu},\bm{\Sigma})$ with  invertible $\bm{\Sigma}$, and partition $\bm{X}$, $\bm{\mu}$ and $\bm{\Sigma}$ as
    	\begin{equation*}
    	\bm{X}=\left(\begin{array}{l}
    	\bm{X}_1\\\bm{X}_2
    	\end{array} \right), \quad
    	\bm{\mu}=\left(\begin{array}{l}
    	\bm{\mu}_1\\\bm{\mu}_2
    	\end{array} \right),\quad
    	\bm{\Sigma}=\left(\begin{array}{ll}
    	\bm{\Sigma}_{11}&\bm{\Sigma}_{12}\\
    	\bm{\Sigma}_{21}&\bm{\Sigma}_{22}
    	\end{array} \right).
    	\end{equation*}
    	Then $\bm{X}_2-\bm{\Sigma}_{21}\bm{\Sigma}_{11}^{-1}\bm{X}_1\sim N(\bm{\mu}_2-\bm{\Sigma}_{21}\bm{\Sigma}_{11}^{-1}\bm{\mu}_1,\bm{\Sigma}_{22\cdot 1})$ and is independent of $\bm{X}_1$, where $\bm{\Sigma}_{22\cdot 1}=\bm{\Sigma}_{22}-\bm{\Sigma}_{21}\bm{\Sigma}_{11}^{-1}\bm{\Sigma}_{12}$.
    \end{lemma}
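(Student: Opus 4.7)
The plan is to proceed by the standard device of constructing a linear transformation of $\bm{X}$ whose components are jointly Gaussian with zero cross-covariance, after which independence follows from the fact that uncorrelated jointly normal vectors are independent. Concretely, I would introduce
\begin{equation*}
\bm{Y} \;=\; \bm{X}_2 - \bm{\Sigma}_{21}\bm{\Sigma}_{11}^{-1}\bm{X}_1
\end{equation*}
and study the stacked vector $(\bm{X}_1', \bm{Y}')'$, which is obtained from $\bm{X}$ by the block linear map
\begin{equation*}
\begin{pmatrix} \bm{X}_1 \\ \bm{Y} \end{pmatrix}
= \bm{A}\bm{X}, \qquad
\bm{A} = \begin{pmatrix} \bm{I} & \bm{0} \\ -\bm{\Sigma}_{21}\bm{\Sigma}_{11}^{-1} & \bm{I} \end{pmatrix}.
\end{equation*}
Since affine images of Gaussian vectors are Gaussian, $(\bm{X}_1', \bm{Y}')'$ is jointly normal, so it suffices to verify (i) the mean of $\bm{Y}$, (ii) that $\cov(\bm{Y}, \bm{X}_1) = \bm{0}$, and (iii) that $\V(\bm{Y}) = \bm{\Sigma}_{22\cdot 1}$.

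First, by linearity, $\mathbb{E}(\bm{Y}) = \bm{\mu}_2 - \bm{\Sigma}_{21}\bm{\Sigma}_{11}^{-1}\bm{\mu}_1$. Next, using bilinearity of covariance and $\cov(\bm{X}_k,\bm{X}_l) = \bm{\Sigma}_{kl}$,
\begin{equation*}
\cov(\bm{Y},\bm{X}_1) = \cov(\bm{X}_2,\bm{X}_1) - \bm{\Sigma}_{21}\bm{\Sigma}_{11}^{-1}\cov(\bm{X}_1,\bm{X}_1) = \bm{\Sigma}_{21} - \bm{\Sigma}_{21}\bm{\Sigma}_{11}^{-1}\bm{\Sigma}_{11} = \bm{0}.
\end{equation*}
Likewise, $\V(\bm{Y})$ expands into four terms that simplify to $\bm{\Sigma}_{22} - \bm{\Sigma}_{21}\bm{\Sigma}_{11}^{-1}\bm{\Sigma}_{12} - \bm{\Sigma}_{21}\bm{\Sigma}_{11}^{-1}\bm{\Sigma}_{12} + \bm{\Sigma}_{21}\bm{\Sigma}_{11}^{-1}\bm{\Sigma}_{11}\bm{\Sigma}_{11}^{-1}\bm{\Sigma}_{12} = \bm{\Sigma}_{22} - \bm{\Sigma}_{21}\bm{\Sigma}_{11}^{-1}\bm{\Sigma}_{12} = \bm{\Sigma}_{22\cdot 1}$.

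With these three computations in hand, the marginal of $\bm{Y}$ within the joint Gaussian vector is $N(\bm{\mu}_2 - \bm{\Sigma}_{21}\bm{\Sigma}_{11}^{-1}\bm{\mu}_1, \bm{\Sigma}_{22\cdot 1})$, and the zero cross-covariance together with joint normality of $(\bm{X}_1', \bm{Y}')'$ yields independence of $\bm{Y}$ and $\bm{X}_1$. The only non-routine point is invoking invertibility of $\bm{\Sigma}_{11}$: this is guaranteed because $\bm{\Sigma}$ is invertible and positive definite, so its principal submatrix $\bm{\Sigma}_{11}$ is also invertible, making $\bm{\Sigma}_{21}\bm{\Sigma}_{11}^{-1}$ well defined. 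There is no substantive obstacle; the result is essentially the construction of the Schur-complement residual, and the ``hard'' step is simply bookkeeping in the covariance expansion.
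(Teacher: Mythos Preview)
Your proof is correct and is precisely the standard argument---construct the linear image $(\bm{X}_1',\bm{Y}')'=\bm{A}\bm{X}$, read off the mean, verify the cross-covariance vanishes, and compute $\V(\bm{Y})$ as the Schur complement. The paper does not give its own proof but simply cites Theorem~1.2.11 in \citet{Muirhead1982AspectsOM}, whose argument is exactly the one you wrote out, so there is nothing to compare.
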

    \begin{proof}
    	See Theorem 1.2.11 in \citet{Muirhead1982AspectsOM}.
    \end{proof}

\end{document}